\newcommand{\dataset}{{\cal D}}
\newcommand{\R}{\mathbb{R}}
\newtheorem{assumption}{Assumption}
\begin{document}

\title{Interpretable Multivariate Conformal Prediction with Fast Transductive Standardization}

\author{\name Yunjie Fan \email brianfan@usc.edu \\
       \addr Department of Mathematics\\
       University of Southern California\\
       Los Angeles, CA, USA
       \AND
       \name Matteo Sesia \email sesia@marshall.usc.edu \\
       \addr Department of Data Sciences and Operations\\
       {Thomas Lord Department of Computer Science\\
       University of Southern California\\
       Los Angeles, CA, USA}
}
\editor{My editor}

\maketitle

\begin{abstract}
We propose a conformal prediction method for constructing tight simultaneous prediction intervals for multiple, potentially related, numerical outputs given a single input. This method can be combined with any multi-target regression model and guarantees finite-sample coverage. It is computationally efficient and yields informative prediction intervals even with limited data. The core idea is a novel \emph{coordinate-wise} standardization procedure that makes residuals across output dimensions directly comparable, estimating suitable scaling parameters using the calibration data themselves. This does not require modeling of cross-output dependence nor auxiliary sample splitting. Implementing this idea requires overcoming technical challenges associated with transductive or full conformal prediction. Experiments on simulated and real data demonstrate this method can produce tighter prediction intervals than existing baselines while maintaining valid simultaneous coverage.


\end{abstract}

\begin{keywords}
interpretability, multivariate analysis, statistical inference, supervised learning, uncertainty quantification.
\end{keywords}

\section{Introduction}
\label{sec:intro}

\subsection{Motivation and Preview of Contributions}
\label{sec:intro-motivation} 

We are interested in jointly quantifying uncertainty when predicting several, possibly dependent continuous outcomes from a shared set of input features. 
This problem is motivated by many applications where data-driven decisions depend on several unknown variables rather than on any one of them in isolation. For example, a financial institution may use available customer data to jointly forecast that customer's future credit-line utilization, monthly spending volume, and credit score. In this and many other contexts, decision quality hinges on the joint reliability of all predicted values, and an output that is incorrect in even one dimension could lead to adverse consequences.

Conformal prediction is a popular and broadly applicable framework for uncertainty quantification in predictive settings \citep{vovk2005algorithmic, angelopoulos2023conformal}. It derives its main strengths from the ability to augment the output of any machine-learning model with {\em prediction sets} equipped with finite-sample guarantees, without relying on strong assumptions on the data distribution. Its sole assumption is the availability of labeled data that are {\em exchangeable} with the test point of interest, a weaker requirement than the standard assumption of i.i.d.~sampling. For a {\em univariate} setting, where the goal is to predict a single continuous-valued outcome, there now exist several well-established conformal prediction methods that produce valid, computationally fast, and easy-to-interpret prediction intervals \citep{lei2018distribution, romano2019conformalized, sesia2021conformal}.

Extending conformal prediction to the {\em multivariate} setting is a non-trivial problem that has attracted considerable recent interest, spurring a variety of solutions \citep{dheur2025unified}. The fundamental challenge is to aggregate a vector of generalized residuals—whose components may differ substantially in scale and variability—into a scalar-valued {\em non-conformity score} leading to {\em prediction sets} that should satisfy four key desiderata: (i) finite-sample coverage, (ii) informativeness or tightness, (iii) computational tractability, and (iv) interpretability.
Interpretability can take different meanings depending on the application, and in some settings it is consistent with prediction sets having complex shapes. Nevertheless, in many practical scenarios, hyper-rectangular prediction sets—or, equivalently, jointly valid prediction intervals—provide the most natural and transparent way to convey uncertainty to practitioners. This paper focuses on such settings.

Existing methods achieve only a subset of these desiderata. The simplest approach to obtain hyper-rectangular prediction sets is to apply univariate conformal prediction to each output and then take the Cartesian product of the resulting intervals \citep{neeven2018conformal}; however, a conservative Bonferroni adjustment to the marginal miscoverage levels is needed to guarantee joint coverage; see Algorithm~\ref{alg:bonferroni} in Appendix~\ref{app:algorithms}. This strategy is very easy to implement, but the Bonferroni correction is often too conservative to be practically satisfactory, as it adopts a worst-case view of the dependence structure across output dimensions.
By contrast, we aim to achieve valid coverage with joint prediction intervals that are as tight as possible; see Figure~\ref{fig:joint-prediction} for a sketch of this goal.

\begin{figure}[!htb]
    \centering
    \includegraphics[width = 0.9\textwidth]{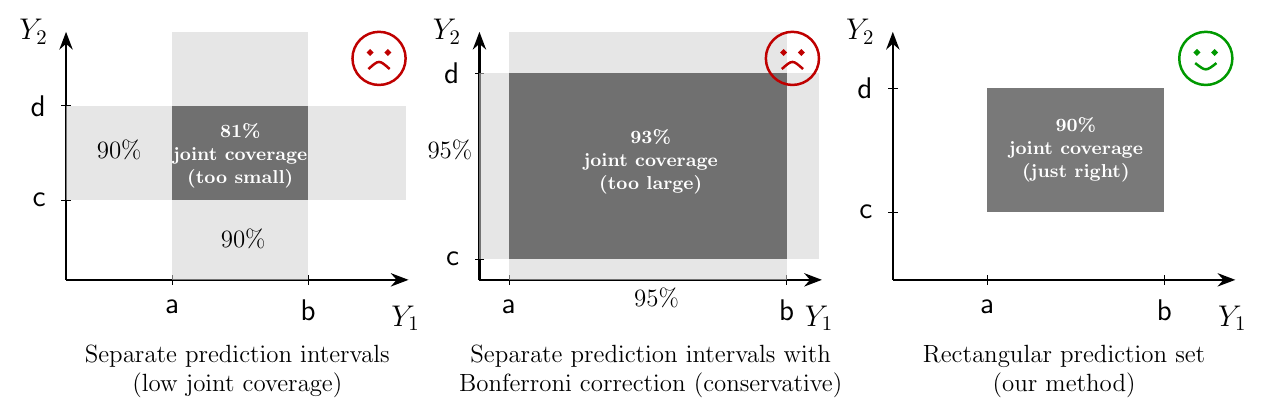}
\caption{Two-dimensional toy example depicting graphically the main objective of this work: constructing \emph{informative} and \emph{interpretable} hyper-rectangular conformal prediction sets for multivariate outcomes. 
Left: univariate approaches can produce marginal prediction intervals for each output separately, which do not guarantee joint coverage. 
Center: a Bonferroni correction can easily restore valid joint coverage, at the cost of being overly conservative. 
Right: our goal is to efficiently construct hyper-rectangular prediction sets that are easy to interpret and retain valid coverage; doing so involves technical challenges.}
    \label{fig:joint-prediction}
\end{figure}

A second class of existing methods reduce multivariate residuals to scalar-valued non-conformity scores, for example by applying an $\ell_2$ or $\ell_\infty$ norm, then apply a univariate conformal prediction method to these scalar scores, and finally map the resulting prediction interval back into the original output space by inverting the dimension-reduction transformation; e.g., see Algorithm~\ref{alg:unscaled} in Appendix~\ref{app:algorithms}. While computationally efficient and in principle able to produce hyper-rectangular prediction sets (e.g., using the $\ell_\infty$-norm), these approaches are dominated by the noisiest dimension, often leading to prediction sets that are unnecessarily wide along low-uncertainty directions.

More sophisticated approaches have been developed to directly minimize the volume of the resulting prediction sets; for example using optimal transport, learned embeddings, copula, or generative models to describe the dependence structure across residual dimensions. However, these methods aim to solve a problem that is different from that considered in this paper, since they are explicitly designed to produce non-rectangular prediction sets. Moreover, they often introduce substantial computational overhead and require additional independent training data to provide finite-sample guarantees.

In this paper we focus on targeting what we believe is the core difficulty in many applications: the mismatch in the {\em marginal} scale and variability of residuals across output dimensions. 
To address this challenge, we develop a method based on transductive coordinate-wise residual standardization. This restores comparability across coordinates and enables the construction of tight hyper-rectangular prediction sets that guarantee finite-sample coverage, without necessitating additional training data or prohibitive computational overhead. Remarkably, our method does {\em not} need to explicitly model residual dependencies across different dimensions, unlike existing approaches focusing on non-rectangular prediction sets.

Making this idea operational requires overcoming several hurdles, which we resolve through a sequence of increasingly practical steps. We begin by formulating an ideal oracle procedure that assumes access to the test point’s true outcome vector and uses this information to standardize the multi-dimensional residuals in the calibration set without violating the exchangeability conditions that are necessary to guarantee finite-sample coverage. Although clearly impractical, this oracle is valuable to guide the design of our practical method.
Inspired by the oracle procedure, we introduce two increasingly sophisticated data-driven approaches. The first provides a more conservative approximation of the oracle, whereas the second achieves greater statistical efficiency by building on top of the first. Through extensive numerical experiments, we demonstrate that this progression of ideas culminates in a practical method producing prediction sets satisfying all our desiderata.

\subsection{Related Works}

Extending conformal prediction to multivariate outcomes has attracted substantial attention in recent years, leading to the development of a wide range of alternative methods.

A prominent line of work seeks minimum-volume prediction sets with flexible geometries by modeling dependence across output dimensions.
This includes copula-based approaches \citep{messoudi2021copula, pmlr-v258-park25c, sun2024copula}, non-parametric density estimation methods \citep{sadinle2016least,izbicki2022cdsplit,wang2023pcp, dheur2025unified}, parametric density estimation methods \citep{johnstone2021conformal,sampson2024conformal,braun2025minimum,braun2025multivariate},  optimal-transport techniques \citep{thurin2025otcp, klein2025multivariate}, and flow-based generative models \citep{colombo2024normalizing, lee2025flow, fang2025contra}. While these approaches can produce smaller prediction sets compared to our method, these often have complex shapes that are more difficult to interpret. Moreover, these approaches require additional model training on independent data to provide finite-sample guarantees, and many also involve substantial computational overhead.

To the best of our knowledge, the idea of directly addressing heterogeneity across output dimensions via marginal residual rescaling remains relatively under-explored. \citet{baheri2025multiscale} pursue a related idea, but rely on a conservative Bonferroni-type correction. The approach of \citet{sampson2024conformal} is more closely related to ours, but requires an additional independent dataset; avoiding this requirement is the main technical challenge addressed in our paper. As we show empirically, the method of \citet{baheri2025multiscale} is often considerably less efficient than ours—particularly in small-sample regimes—resulting in less stable and less informative prediction sets.

Our solution can be viewed as implementing a special instance of transductive, or full, conformal prediction \citep{vovk2005algorithmic,vovk2013transductive}. In contrast to the more commonly used split conformal methods, full conformal prediction repeatedly incorporates the test point, augmented with different candidate labels, directly into a learning algorithm. Although this strategy is computationally prohibitive in general, especially for continuous-valued numerical outcomes and complex algorithms, it can be implemented efficiently in some cases by exploiting structural properties of the underlying predictive model. Accordingly, our paper relates to a broader literature on making full conformal inference computationally practical in special cases, including approaches tailored to sparse generalized linear models \citep{lei2019fast,guha2023conformalization} and methods that leverage stability or other forms of structure in the predictive model \citep{ndiaye2019computing, abad2022approximating, martinez2023approximating, li2024generalized}. Unlike those works, which focus primarily on accelerating model refitting, we assume a fixed, pre-trained model and instead aim to improve efficiency by avoiding calibration data splitting while explicitly standardizing residuals across output dimensions. Both the problem setting and the proposed methodology are therefore novel.

More broadly, our approach is also related to recent conformal methods that reuse calibration data for different purposes, such as enabling early stopping to control overfitting when training deep learning models \citep{liang2023conformal} and leveraging data-adaptive grouping to improve conditional coverage \citep{zhou2024conformal, zhang2024posterior}. In contrast to these works, our focus is on addressing heterogeneity across targets in multivariate regression while obtaining interpretable, rectangular prediction sets.

\subsection{Outline}

This paper is organized as follows.
Section~\ref{sec:preliminaries} introduces notations, a formal problem statement, and necessary assumptions, along with a description of an ideal oracle approach.
Section~\ref{sec:methods} presents our methodology, starting with a more conservative approach and then developing more efficient and computationally tractable implementations.
Section~\ref{sec:experiments} validates our method empirically and compares its performance to that of existing approaches using both simulated and real data.
Section~\ref{sec:discussion} discusses some limitations and possible directions for future research.
The Appendices contain mathematical proofs not contained in the main text, additional algorithmic details, and further numerical results.
 
\section{Preliminaries}
\label{sec:preliminaries}

\subsection{Notation}

For any $n \in \mathbb{N}$, we denote $[n] := \{1,\ldots,n\}$.
For $\alpha \in (0,1)$ and any $x=(x_1,\ldots,x_n) \in \mathbb{R}^n$, we denote $\hat{Q}_{1-\alpha}(x) := \lceil (1-\alpha)(n+1)\rceil\text{-th smallest value in } \{x_1,\ldots,x_n, +\infty\}$.

\subsection{Setup and Data Assumptions}

We consider a multi-output (multivariate) regression setting where the goal is to predict a vector $Y = (Y_1, \ldots, Y_d)$ of $d$ real-valued outputs given input features $X \in \mathcal{X}$. 
We assume access to a \emph{calibration} dataset of $n > 1$ labeled examples $\dataset_{\mathrm{cal}} = \{(X^i, Y^i)\}_{i=1}^n$, drawn exchangeably from an unknown joint distribution $P_{X,Y} = P_X \times P_{Y|X}$.

For a new independent data point $(X^{n+1}, Y^{n+1}) \sim P_{X,Y}$ of which only the input $X^{n+1}$ is observed, the goal is to construct a reasonably tight prediction set $\hat{C}(X^{n+1}) \subset \mathbb{R}^d$ that contains the output $Y^{n+1}$ with high probability, guaranteeing \emph{marginal coverage}:
\begin{equation}
\mathbb{P}\!\left[ Y^{n+1} \in \hat{C}(X^{n+1}) \right] \ge 1 - \alpha,
\label{eq:joint-coverage}
\end{equation}
where $\alpha \in (0,1)$ is the user-specified miscoverage level.  
Note that both $Y^{n+1}$ and $\hat{C}(X^{n+1})$ are random in~\eqref{eq:joint-coverage}, as they depend on the new test point as well as on the calibration data $\mathcal{D}_{\mathrm{cal}}$.  
The term ``marginal coverage’’ reflects that the guarantee in~\eqref{eq:joint-coverage} holds on average over the distribution of test points, rather than conditioning on any fixed value of $X^{n+1}$.

While this paper focuses on achieving marginal coverage because it is a standard goal in conformal prediction, the proposed method could be extended to achieve stronger coverage guarantees. In particular, full \emph{conditional coverage} is known to be generally unattainable in finite samples without much more restrictive assumptions \citep{foygel2021limits}, but there are many techniques to approximate it \citep{Romano2020With,gibbs2025conformal} that may be possible to combine with our approach.

For interpretability, we aim to construct \emph{hyper-rectangular} prediction sets that can be written as a Cartesian product of univariate prediction  intervals:
\begin{equation} \label{eq:rectangle}
\hat{C}(X^{n+1}; \mathcal{D}_{\mathrm{cal}}) 
= \bigtimes_{j=1}^d 
\bigl[ L_j(X^{n+1}; \mathcal{D}_{\mathrm{cal}}), 
       U_j(X^{n+1}; \mathcal{D}_{\mathrm{cal}}) \bigr],
     \end{equation}
where $[L_j(X^{n+1}), U_j(X^{n+1})]$ denotes the marginal interval for the $j$th output.  
Prediction sets in this form decompose naturally across dimensions, making them very easy to explain and interpret.  
In particular, under~\eqref{eq:rectangle}, the event $Y^{n+1} \in \hat{C}(X^{n+1})$ in~\eqref{eq:joint-coverage} is equivalent to
\[
Y^{n+1}_j \in [L_j(X^{n+1}), U_j(X^{n+1})]
\quad \text{for all } j \in [d],
\]
which means all $d$ marginal prediction intervals $[L_j(X^{n+1}), U_j(X^{n+1})]$ must {\em simultaneously} cover their respective outputs $Y^{n+1}_j$ with probability at least $1 - \alpha$.

\subsection{Computing Residuals using a Pre-Trained Model} \label{sec:residuals}

We assume access to a \emph{pre-trained} joint predictive model $\hat{f} = (\hat{f}_1,\ldots,\hat{f}_d)$, fitted on data independent of both the calibration set $\mathcal{D}_{\mathrm{cal}}$ and the test point. 
This model is treated as a ``black-box'' and could be anything; the only requirement for $\hat{f}$ is that it must provide us with a coordinate-wise \emph{generalized residual} function
\[
E : \mathcal{X} \times \mathbb{R}^d \to \mathbb{R}^d,
\]
which aims to quantify the model's prediction error along each output dimension.
Applying this function to a data point $(X,Y)$ produces the vector $E(X,Y) = \bigl(E_1(X,Y), \ldots, E_d(X,Y)\bigr)$.

A classical example for models $\hat{f}_j$ that estimate the conditional mean of $Y_j$ given $X$ \citep{lei2018distribution} is $E_j(X,Y) = \lvert Y_j - \hat{f}_j(X) \rvert$.
Alternatively, for models that estimate conditional quantiles, as in conformalized quantile regression \citep{romano2020classification}, a natural choice is
\begin{align*}
E_j(X,Y) = \max \bigl\{\hat{f}_j^{\mathrm{lo}}(X) - Y_j,\; Y_j - \hat{f}_j^{\mathrm{hi}}(X)\bigr\},
\end{align*}
where $\hat{f}_j^{\mathrm{lo}}(X)$ and $\hat{f}_j^{\mathrm{hi}}(X)$ denote the estimated lower and upper conditional quantiles of $Y_j$ given $X$, for example at levels $\alpha/2$ and $1-\alpha/2$. Intuitively, this choice of $E$ measures the signed distance of $Y_j$ to the nearest boundary of the  interval $[\hat{f}_j^{\mathrm{lo}}(X), \hat{f}_j^{\mathrm{hi}}(X)]$.
Positive values indicate that $Y_j$ falls outside the interval, while negative values indicate that it lies inside.

Without much loss of generality, we assume throughout the paper that the residuals are almost surely {\em distinct} and {\em non-negative}---which implies $E_j(X,Y) > 0$ for all $j$---since this simplifies the development of our method. Moreover, we also assume these residuals have finite first and second moments at the population level.

\begin{assumption} \label{eq:assumption-scores}
For all $j \in [d]$, the distribution of $E_j(X,Y)$ has no point masses and satisfies: $E_j(X,Y) \geq 0$ almost surely, $ \mathbb{E}[E_j(X,Y)] < \infty$, and $0 < \operatorname{Var}[E_j(X,Y)] < \infty$.
\end{assumption}

The assumption of no point masses is standard in conformal inference and practically without loss of generality, since any ties can always be broken by introducing a small amount of independent random noise. The bounded second moments are also immediately satisfied by any residuals whose distribution has finite support, which is common in practice.
The non-negativity assumption is satisfied by absolute residuals and many other standard choices, although it does not hold for the quantile-based residuals reviewed above. Nonetheless, there are at least two simple ways to adapt conformalized quantile regression to this assumption.
One is to simply truncate negative residuals at zero, which tends to make the corresponding conformal prediction sets more conservative \citep{romano2020classification}.
Alternatively, provided that the residuals are almost-surely bounded, as it is often the case in practice, non-negativity can be enforced by applying a sufficiently large constant shift to all calibration and test residuals---a transformation that can then be easily inverted to keep the resulting prediction sets invariant.
This idea will become clearer after we review a general recipe for translating residuals into conformal prediction sets.

\subsection{Calibrating Prediction Sets with Marginal Coverage}

After evaluating the residuals $E^i = E(X^i, Y^i)$ for all calibration points indexed by $i \in [n]$, a prediction set for $Y_{n+1}$ based on $X_{n+1}$ can be assembled by including all possible values of $y \in \mathbb{R}^d$ whose residuals $(E_1(X^{n+1},y), \ldots, E_d(X^{n+1},y))$ are sufficiently small:
\begin{equation}
    \hat{C}(X^{n+1}) = \left\{y \in \R^d: 0\leq E_j(X^{n+1},y) \leq W_j, \quad \forall j \in [d] \right\},
    \label{eq:joint-prediction}
\end{equation}
where $W_1,\ldots,W_d$ are suitable (data-dependent) thresholds.
To ensure that $\hat{C}(X^{n+1})$ satisfies marginal coverage at level $1-\alpha$, as defined in~\eqref{eq:joint-coverage}, the thresholds must be chosen such that
\begin{align*}
\mathbb{P} \left[ E_j(X^{n+1}, Y^{n+1}) \leq W_j, \; \forall j \in [d] \right] \geq 1-\alpha.
\end{align*}

A naive way to achieve this is to set $W_j = \hat{Q}_{1-\alpha/d}(E^1_j, \ldots ,E^n_j)$, which is equivalent to applying univariate conformal prediction separately across all $d$ outcome dimensions, with a Bonferroni correction ($\alpha/d$ replacing $\alpha$).
As demonstrated empirically in Appendix~\ref{app:additional-experiments}, this typically leads to very conservative prediction sets because it takes an overly pessimistic worst-case view of possible dependencies of the prediction tasks across outcome dimensions.

In this paper, we pursue a more efficient approach, which starts by computing {\em scalar} non-conformity scores $S^i = \Phi(E^i) \in \mathbb{R}$ for all $i \in [n]$ using a dimension-reduction function $\Phi : \mathbb{R}^d_+ \mapsto \mathbb{R}$, whose form will be discussed below.
Then, the prediction set is constructed as:
\begin{equation}
    \hat{C}(X^{n+1}) = \left\{y \in \R^d: \Phi(E(X^{n+1},y)) \leq \hat{Q}_{1-\alpha}(S^1,\ldots,S^n) \right\}.
    \label{eq:joint-prediction-scalar}
\end{equation}

\begin{theorem}[\citet{vovk2005algorithmic}] \label{thm:fixed-transformation-coverage}
Assume $(E^1,\ldots,E^{n+1})$ are exchangeable, with $E^i = E(X^i, Y^i)$ for all $i \in [n+1]$.
Let $\Phi:\R^d_+ \mapsto \R$ be any \emph{fixed} function, independent of the data.
Then, $\hat{C}(X^{n+1})$ defined in~\eqref{eq:joint-prediction-scalar} satisfies $\mathbb{P}[ Y^{n+1} \in \hat{C}(X^{n+1}) ] \ge 1 - \alpha$.
Moreover, if the scores $S^1,\ldots,S^n$ given by $S^i = \Phi(E^i)$ are almost-surely distinct, $\mathbb{P}[ Y^{n+1} \in \hat{C}(X^{n+1}) ] \leq 1 - \alpha + \frac{1}{n+1}$.
The same results hold with $\Phi$ random, if $(S^1,\ldots,S^{n+1})$ are exchangeable conditional on it.
\end{theorem}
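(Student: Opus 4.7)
The plan is to reduce the claim to a standard rank argument for exchangeable scores. First, by the definition of $\hat{C}(X^{n+1})$ in~\eqref{eq:joint-prediction-scalar}, the event $Y^{n+1} \in \hat{C}(X^{n+1})$ holds if and only if $S^{n+1} := \Phi(E^{n+1}) \leq \hat{Q}_{1-\alpha}(S^1,\ldots,S^n)$. Since $\Phi$ is deterministic and fixed, the exchangeability of $(E^1,\ldots,E^{n+1})$ transfers directly to $(S^1,\ldots,S^{n+1})$.

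Next, I would translate the quantile condition into a rank statement on the full bag of $n+1$ scores. Setting $k := \lceil (1-\alpha)(n+1) \rceil$, the definition of $\hat{Q}_{1-\alpha}$ implies that $S^{n+1} \leq \hat{Q}_{1-\alpha}(S^1,\ldots,S^n)$ is equivalent to $\#\{i \in [n] : S^i < S^{n+1}\} \leq k-1$. The complementary event forces the rank of $S^{n+1}$ in $(S^1,\ldots,S^{n+1})$ to be at least $k+1$, regardless of how ties are broken. By exchangeability and a standard symmetry argument (e.g., using independent uniform tie-breakers), the rank of $S^{n+1}$ within the augmented tuple is uniformly distributed on $\{1,\ldots,n+1\}$. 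Hence the probability of the complementary event is at most $(n+1-k)/(n+1) \leq \alpha$, giving the lower bound $\mathbb{P}[Y^{n+1} \in \hat{C}(X^{n+1})] \geq k/(n+1) \geq 1 - \alpha$.

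For the upper bound, the almost-sure distinctness of the scores eliminates ties, so the rank of $S^{n+1}$ is exactly uniform on $\{1,\ldots,n+1\}$ without any tie-breaking convention. The coverage probability then equals $k/(n+1)$, and the inequality $\lceil (1-\alpha)(n+1) \rceil < (1-\alpha)(n+1) + 1$ yields the bound $1 - \alpha + 1/(n+1)$. To extend the argument to a random $\Phi$, I would condition on $\Phi$: by hypothesis $(S^1,\ldots,S^{n+1})$ remains exchangeable conditional on $\Phi$, so the same rank argument applies within each conditional distribution, and the tower property preserves both inequalities after marginalizing.

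The main point of care I anticipate is handling ties when establishing the uniform distribution of the rank, since the argument must not quietly lose the lower bound. This can be resolved cleanly by introducing an auxiliary tiebreaker (e.g., i.i.d.\ uniform noise independent of everything) that restores distinctness while preserving exchangeability; the event $\{S^{n+1} \leq \hat{Q}_{1-\alpha}\}$ can then be sandwiched between rank events that both occur with probability $k/(n+1)$. Beyond this bookkeeping, the proof is purely combinatorial and uses no distributional assumption beyond the exchangeability of the scores.
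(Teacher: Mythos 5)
Your proof is correct and follows essentially the same route as the paper: both reduce the coverage event to the comparison $S^{n+1} \le \hat{Q}_{1-\alpha}(S^1,\ldots,S^n)$, note that exchangeability of the residuals transfers to the scores (conditioning on $\Phi$ when it is random), and invoke the uniformity of the rank of $S^{n+1}$ among the $n+1$ scores. The only difference is that the paper cites the standard ``quantile inflation lemma'' as a black box, whereas you prove it explicitly via the rank argument with tie-breaking, which is the standard proof of that lemma.
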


Therefore, under Assumption~\ref{eq:assumption-scores}, Theorem~\ref{thm:fixed-transformation-coverage} guarantees the prediction sets defined in~\eqref{eq:joint-prediction-scalar} achieve marginal coverage {\em tightly} from above, as long as the sample size $n$ is not too small.
A limitation of Theorem~\ref{thm:fixed-transformation-coverage}, however, is that it cannot tell us whether the coverage is equally tight across different outcome dimensions.
In fact, depending on the data distribution, the fitted model, and the choice of $\Phi$, it could happen that $\hat{C}(X^{n+1})$ achieves the desired marginal coverage while being much wider than necessary along certain directions, thereby providing uncertainty estimates that are potentially much less informative than ideal.
Moreover, depending on $\Phi$, the prediction set $\hat{C}(X^{n+1})$ may have a difficult-to-interpret shape.

Several recent works propose residual transformations based on $\ell_p$-norms, i.e., $\Phi(\cdot) = \|\cdot\|_p$ for different $1\leq p \leq \infty$ \citep{johnstone2022exact}. Although conceptually and computationally simple, these can be heavily affected by the nosiest output dimension. A single high-variance dimension may dominate the scalar score and inflate the resulting prediction set uniformly across all coordinates. 
To obtain tighter prediction sets, recent works have proposed learning a suitable transformation of the residuals prior to applying a norm-based dimension reduction, for example by using optimal transport \citep{thurin2025otcp, klein2025multivariate}.
However, these approaches require additional training data as well as expensive computations, and they typically lead to prediction sets with irregular shapes.

In this paper, we propose a conceptually simpler solution that consists of {\em standardizing} the residuals prior to reducing their dimension by taking the maximum across coordinates, similar to the $\ell_\infty$ norm. This leads to rectangular prediction sets that are uniformly tight across all output dimensions, without requiring additional training data or expensive computations.
Despite the apparent simplicity of this idea, however, achieving all desiderata while maintaining guaranteed finite-sample coverage involves significant technical challenges, which we overcome as gradually explained below.

\subsection{An Ideal Population Oracle}
\label{sec:pop-oracle}
 
To motivate our method, we begin by describing an idealized {\em oracle} approach that assumes access to population-level information not available in practice. This oracle highlights the structure of the prediction sets we ultimately aim to approximate and achieves our four key desiderata: validity, tightness, computational tractability, and interpretability.

Consider prediction sets of the form in~\eqref{eq:joint-prediction-scalar}, obtained by transforming the vector-valued residuals $E(X,Y) \in \mathbb{R}_+^d$ using a suitable function $\Phi : \mathbb{R}^{d}_+ \mapsto \mathbb{R}$.
To address the possible heterogeneity in scale and variability across different dimensions, a natural idea is to standardize them one by one, which suggests using a function $\Phi$ defined as:
\begin{equation}
\Phi(t; \mu, \sigma)
:= \max_{1 \le j \le d} \frac{t_j - \mu_j}{\sigma_j}, \quad \forall t \in \R^d_+.
\label{eq:residual-transformation}
\end{equation}
Above, $\mu, \sigma \in \mathbb{R}^d_+$ are a location and scale parameter, respectively, which should ideally be equal to the population mean and standard deviation of the residual vector:
\begin{align} \label{eq:oracle-mean-std}
&\mu^*_j = \mathbb{E}[E_j(X,Y)], 
&\sigma^*_j = \sqrt{\operatorname{Var}[E_j(X,Y)]},
&&\forall j \in [d].
\end{align}
With this choice of $\Phi$, the prediction sets defined in~\eqref{eq:joint-prediction-scalar} become:
\begin{equation}
    C^{\mathrm{pop}}(X^{n+1}) = \left\{y \in \R^d: 0\leq E_j(X^{n+1}, y) \leq \hat{Q}^{\mathrm{pop}}_{1-\alpha} \cdot \sigma^*_j + \mu^*_j, \quad \forall j \in [d]\right\},
    \label{eq:pop-oracle}
\end{equation}
where $\hat{Q}^{\mathrm{pop}}_{1-\alpha} := \hat{Q}_{1-\alpha}(S^1_{\mathrm{pop}},\ldots,S^n_{\mathrm{pop}})$ and $S^i_{\mathrm{pop}}:= \Phi(E^i; \mu^*, \sigma^*)$ for all $i \in [n]$.

Although practically unfeasible, this oracle prediction set satisfies all four of our desiderata: (i) it guarantees finite-sample marginal coverage, as per Theorem~\ref{thm:fixed-transformation-coverage}; (ii) it tends to be uniformly tight across all output dimensions, because it operates with residuals whose components are all on the same scale; (iii) it is fast to compute, at least for standard choices of the residual function $E$; (iv) it is an easy-to-interpret hyper-rectangle, at least for standard choices of $E$.

To see that prediction sets $\hat{C}(X^{n+1})$ in the form of~\eqref{eq:joint-prediction}, of which~\eqref{eq:pop-oracle} is a special instance corresponding to $W_j = \hat{Q}^{\mathrm{pop}}_{1-\alpha} \cdot \sigma^*_j + \mu^*_j$, are easy-to-compute rectangles, consider the case of absolute mean-regression residuals, i.e., $E_j(X,Y) = \lvert Y_j - \hat{f}_j(X) \rvert$. In this case,
\begin{align} \label{eq:pred-abs-res}
    \hat{C}(X^{n+1}) = \bigtimes_{j=1}^d \left[ \hat{f}_j(X) - W_j, \hat{f}_j(X) + W_j \right].
\end{align}
Alternatively, for quantile-based residuals $E_j(X,Y) = |\max \bigl\{\hat{f}_j^{\mathrm{lo}}(X) - Y_j,\; Y_j - \hat{f}_j^{\mathrm{hi}}(X)\bigr\}|$,
\begin{equation}
    \hat{C}(X^{n+1}) = \bigtimes_{j=1}^d \left[ \hat{f}_j^{\mathrm{lo}}(X^{n+1}) - W_j, \hat{f}_j^{\mathrm{hi}}(X^{n+1}) + W_j \right].
    \label{eq:pred-abs-qr}
\end{equation}
In the following, we will describe practical approximations of these oracle prediction sets, starting from intuitive but not fully satisfactory solutions before presenting our method.

\subsection{Simple but Unsatisfactory Practical Approaches} \label{sec:data-splitting}

An intuitive {\em plug-in} approximation of the ideal oracle is obtained by replacing the population mean $\mu^*$ and standard deviation $\sigma^*$ parameters in~\eqref{eq:oracle-mean-std} with their empirical counterparts, $\hat{\mu}^{\mathrm{pi}}$ and $\hat{\sigma}^{\mathrm{pi}}$, evaluated on the calibration data, and then constructing hyper-rectangular prediction sets $\hat{C}^{\mathrm{pi}}(X^{n+1})$ in the form of~\eqref{eq:joint-prediction} using $W_j = \hat{Q}^{\mathrm{pi}}_{1-\alpha} \cdot \hat{\sigma}^{\mathrm{pi}}_j + \hat{\mu}^{\mathrm{pi}}_j$, where $\hat{Q}^{\mathrm{pi}}_{1-\alpha} := \hat{Q}_{1-\alpha}(S^1_{\mathrm{pi}},\ldots,S^n_{\mathrm{pi}})$ and $S^i_{\mathrm{pi}}:= \Phi(E^i; \hat{\mu}^{\mathrm{pi}}, \hat{\sigma}^{\mathrm{pi}})$ for all $i \in [n]$. This approach, outlined by Algorithm~\ref{alg:std-plug-in} in Appendix~\ref{app:algorithms}, tends to perform well in large-samples (as shown empirically in Appendix~\ref{app:additional-experiments}) but lacks finite-sample guarantees.
In fact, using a data-dependent transformation $\Phi$, which breaks the exchangeability between calibration and test scores in $(S^1,\ldots,S^{n+1})$, violates the key assumption of Theorem~\ref{thm:fixed-transformation-coverage}.

An alternative solution that is also intuitive yet unsatisfactory consists of combining the above plug-in approach with an additional data splitting step that allows one to recover finite-sample guarantees: the location and scale parameter estimates for the transformation function $\Phi$, denoted here as $\hat{\mu}^{\mathrm{ds}}$ and $\hat{\sigma}^{\mathrm{ds}}$, are estimated empirically using only a random subset of $n_1 < n$ observations. The remaining $n_2 = n-n_1$ data points are used to evaluate the scores $(S^1_{\mathrm{ds}},\ldots,S^{n_2}_{\mathrm{ds}})$ and compute $\hat{Q}^{\mathrm{ds}}_{1-\alpha} := \hat{Q}_{1-\alpha}(S^1_{\mathrm{ds}},\ldots,S^{n_2}_{\mathrm{ds}})$, finally leading to $W_j = \hat{Q}^{\mathrm{ds}}_{1-\alpha} \cdot \hat{\sigma}^{\mathrm{ds}}_j + \hat{\mu}^{\mathrm{ds}}_j$.
This approach is outlined by Algorithm~\ref{alg:std-ds} in Appendix~\ref{app:algorithms}.
While it satisfies the key assumption of Theorem~\ref{thm:fixed-transformation-coverage}, it makes an inefficient use of the data. Since it reduces the effective number of calibration samples roughly by half, it tends to produce more unstable and often wider-than-necessary prediction sets, especially in small-sample regimes.

The main contribution of this paper is therefore to overcome this inefficiency. We develop a \emph{transductive} conformal prediction method that uses a similar standardization idea but avoids wasteful data splitting by reusing the entire calibration data to estimate the scaling parameters, all while retaining finite-sample guarantees and manageable computational costs.
Achieving this requires resolving several technical challenges, as explained step by step below.

\section{Methodology} \label{sec:methods}

\subsection{A Transductive Rescaling Oracle and High-Level Method Blueprint}\label{sec:method-oracle}

According to Theorem \ref{thm:fixed-transformation-coverage}, prediction sets of the form \eqref{eq:joint-prediction-scalar} achieve finite-sample marginal coverage if the dimension-reduction map $\Phi$ is independent of the calibration data, as assumed in the previous section, or if $(\Phi(E^1),\ldots,\Phi(E^{n+1}))$ can remain exchangeable conditional on $\Phi$.
This latter requirement, which is strictly weaker than independence, is the foundation of the method developed in this section.
We begin by showing how this conditional-exchangeability condition naturally leads to a different \emph{transductive} oracle, which is much more closely aligned with the practical method introduced below.

Consider an imaginary oracle that knows the unordered values in $\{E^1,\ldots,E^{n+1}\}$; she would estimate the population location and scale parameters $\mu^*, \sigma^*$ defined in~\eqref{eq:oracle-mean-std} as follows:
\begin{align} \label{eq:transductive-oracle-mean-std}
& \hat{\mu}^{\mathrm{oracle}}_j  = \frac{1}{n+1} \sum_{i=1}^{n+1} E^i_j,
& \hat{\sigma}^{\mathrm{oracle}}_j 
= \sqrt{\frac{1}{n} \sum_{i=1}^{n+1} \!\bigl(E^i_j - \hat{\mu}^{\mathrm{oracle}}_j\bigr)^2}, 
&& \forall j \in [d].
\end{align}
Since these statistics are symmetric in all $n+1$ data points and thus invariant to their ordering, conditioning on the random function $\Phi(\cdot; \hat{\mu}^{\mathrm{oracle}}, \hat{\sigma}^{\mathrm{oracle}} )$ maintains the exchangeability of $(E^1,\ldots,E^{n+1})$.
Therefore, Theorem~\ref{thm:fixed-transformation-coverage} implies the following hyper-rectangular prediction sets guarantee finite-sample marginal coverage: $\hat{C}^{\mathrm{oracle}}(X^{n+1})$ in the form of~\eqref{eq:joint-prediction} using $W_j = \hat{Q}^{\mathrm{oracle}}_{1-\alpha} \cdot \hat{\sigma}^{\mathrm{oracle}}_j + \hat{\mu}^{\mathrm{oracle}}_j$, where $\hat{Q}^{\mathrm{oracle}}_{1-\alpha} := \hat{Q}_{1-\alpha}(S^1_{\mathrm{oracle}},\ldots,S^n_{\mathrm{oracle}})$ and $S^i_{\mathrm{oracle}}:= \Phi(E^i; \hat{\mu}^{\mathrm{oracle}}, \hat{\sigma}^{\mathrm{oracle}})$ for all $i \in [n]$. 
Of course, this oracle is impractical because both $\Phi(\cdot; \hat{\mu}^{\mathrm{oracle}}, \hat{\sigma}^{\mathrm{oracle}})$ and $\hat{Q}^{\mathrm{oracle}}_{1-\alpha}$ depend on the unknown test outcome $Y^{n+1}$ through $E^{n+1}$. 

We will now translate this oracle into a practical method producing a (slightly) larger prediction set that depends only on the observable residuals $E^1,\ldots,E^{n}$, through the statistics
\begin{align} \label{eq:plug-in-mean-std}
& \hat{\mu}_j = \frac{1}{n} \sum_{i=1}^{n} E^i_j,
& \hat{\sigma}_j =  \sqrt{\frac{1}{n} \sum_{i=1}^{n} \!\bigl(E^i_j - \hat{\mu}_j\bigr)^2},
&& \forall j \in [d].
\end{align}
Note that the choice of using a factor $1/n$ instead of the common $1/(n-1)$ in the above definition of $\hat{\sigma}$ is deliberate, as it simplifies the derivation of the following results. Note also that $0 < \hat{\sigma}_j < \infty$ almost surely under Assumption~\ref{eq:assumption-scores}, as long as $n > 1$.

To obtain a practical method, we proceed in two steps, separately addressing the unknown nature of $\Phi(\cdot; \hat{\mu}^{\mathrm{oracle}}, \hat{\sigma}^{\mathrm{oracle}})$ and $\hat{Q}^{\mathrm{oracle}}_{1-\alpha}$. First, we will find suitable {\em monotonically nondecreasing} {\em link functions} $\omega_1(\cdot), \ldots, \omega_d(\cdot) : \mathbb{R} \mapsto \mathbb{R}_+$ satisfying 
\begin{equation}
S_{\mathrm{oracle}}^{n+1} \le c 
\iff 
0\leq E^{n+1}_j \le \omega_j(c), \quad \forall c \in \mathbb{R}.
\label{eq:key-ineq}
\end{equation}
Substituting $c = \hat{Q}^{\mathrm{oracle}}_{1-\alpha}$ into~\eqref{eq:key-ineq} gives the equivalence relation
\begin{equation} \label{eq:oracle-dual}
Y^{n+1} \in \hat{C}^{\mathrm{oracle}}(X^{n+1}) \iff Y^{n+1} \in \tilde{C}(X^{n+1}),
\end{equation}
where the set $\tilde{C}(X^{n+1})$ is defined as
\begin{equation}
\tilde{C}(X^{n+1})
:= 
\left\{
y \in \mathbb{R}^d :
0\leq E_j(X^{n+1}, y) \leq \omega_j(\hat{Q}^{\mathrm{oracle}}_{1-\alpha}), \, \forall j \in [d]
\right\}.
\label{eq:oracle-prediction}
\end{equation}
Although $\tilde{C}(X^{n+1})$ is still impractical, it is closer to being computable.
The last remaining step is to replace $\hat{Q}^{\mathrm{oracle}}_{1-\alpha}$ with a conservative data-driven estimate, which we denote here as $\tilde{Q}_{1-\alpha} \geq \hat{Q}^{\mathrm{oracle}}_{1-\alpha}$. Doing that will give a practical prediction set
\begin{equation} \label{eq:trans-oracle-upper-bound}
\hat{C}(X^{n+1})
:= 
\left\{
y \in \mathbb{R}^d :
0\leq E_j(X^{n+1}, y) \leq \omega_j(\tilde{Q}_{1-\alpha}), \, \forall j \in [d]
\right\}
\supseteq \tilde{C}(X^{n+1}),
\end{equation}
inheriting the coverage lower bound from Theorem~\ref{thm:fixed-transformation-coverage}.
Technical difficulties, which we address below, arise when one tries to keep this approximation tight and computationally efficient.

\subsection{Explicit Form of the Link Functions}

The following result gives an explicit form for the vector of link functions $\omega := (\omega_1(\cdot), \ldots, \omega_d(\cdot))$ defined implicitly in~\eqref{eq:key-ineq}.

\begin{lemma}
\label{lem:solution-key-ineq}
Under Assumption~\ref{eq:assumption-scores}, suitable link functions $\omega_1(c), \ldots ,\omega_d(c)$ satisfying~\eqref{eq:key-ineq} are:
\[
    \omega_j(c)
    =
    \begin{cases}
    0, & \text{if } c \leq -\frac{n}{\sqrt{n+1}},\\
    \max\!\left\{0,\,
    \hat{\mu}_j - \hat{\sigma}_j |c|
    \sqrt{\dfrac{(n+1)^2}{n^2 - (n+1)c^2}}\right\},
    & \text{if } -\dfrac{n}{\sqrt{n+1}} < c < 0, \\[1.1em]
    \hat{\mu}_j + \hat{\sigma}_j |c|
    \sqrt{\dfrac{(n+1)^2}{n^2 - (n+1)c^2}},
    & \text{if } 0 \le c < \dfrac{n}{\sqrt{n+1}}, \\[1.1em]
    \infty, & \text{if } c \ge \dfrac{n}{\sqrt{n+1}},
    \end{cases}\quad\forall j \in [d].
\]
\end{lemma}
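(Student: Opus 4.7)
Fix any coordinate $j \in [d]$. Because the transformation $\Phi(\cdot;\hat{\mu}^{\mathrm{oracle}},\hat{\sigma}^{\mathrm{oracle}})$ is a maximum over coordinates, the event $S^{n+1}_{\mathrm{oracle}} \le c$ is equivalent to $\bigcap_{j=1}^d \bigl\{ (E^{n+1}_j-\hat{\mu}^{\mathrm{oracle}}_j)/\hat{\sigma}^{\mathrm{oracle}}_j \le c\bigr\}$, so it suffices to establish the single-coordinate equivalence $(E^{n+1}_j-\hat{\mu}^{\mathrm{oracle}}_j)/\hat{\sigma}^{\mathrm{oracle}}_j \le c \Longleftrightarrow 0 \le E^{n+1}_j \le \omega_j(c)$ for each $j$. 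The plan is to (i) rewrite the oracle mean and standard deviation in terms of the observable $\hat{\mu}_j,\hat{\sigma}_j$ and the unknown $E^{n+1}_j$, (ii) rewrite the single-coordinate inequality as an elementary scalar inequality in the variable $u := E^{n+1}_j - \hat{\mu}_j$, and (iii) invert it by case analysis on the sign and magnitude of $c$.

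\textbf{Step 1 (leave-one-in identities).} A direct computation using $\hat{\mu}^{\mathrm{oracle}}_j = \frac{1}{n+1}\!\bigl(\sum_{i=1}^n E^i_j + E^{n+1}_j\bigr)$ gives $\hat{\mu}^{\mathrm{oracle}}_j = (n\hat{\mu}_j + E^{n+1}_j)/(n+1)$, so $E^{n+1}_j - \hat{\mu}^{\mathrm{oracle}}_j = \tfrac{n}{n+1}(E^{n+1}_j-\hat{\mu}_j) = \tfrac{n}{n+1}u$. Expanding $n(\hat{\sigma}^{\mathrm{oracle}}_j)^2 = \sum_{i=1}^n (E^i_j)^2 + (E^{n+1}_j)^2 - (n+1)(\hat{\mu}^{\mathrm{oracle}}_j)^2$ and using $\sum_{i=1}^n (E^i_j)^2 = n\hat{\sigma}_j^2 + n\hat{\mu}_j^2$ yields, after simplification, the clean identity $(\hat{\sigma}^{\mathrm{oracle}}_j)^2 = \hat{\sigma}_j^2 + u^2/(n+1)$. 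Substituting both into the coordinate inequality transforms it into
\[
\psi(u) \;:=\; \frac{n u/(n+1)}{\sqrt{\hat{\sigma}_j^{\,2} + u^2/(n+1)}} \;\le\; c.
\]
Under Assumption~\ref{eq:assumption-scores} we have $\hat{\sigma}_j > 0$ almost surely (since $n>1$), so $\psi$ is a well-defined, strictly increasing, odd function of $u$ with $\psi(0)=0$ and horizontal asymptotes $\lim_{u\to\pm\infty}\psi(u) = \pm n/\sqrt{n+1}$ (never attained).

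\textbf{Step 2 (inversion by cases).} Because $\psi$ is strictly increasing, I invert $\psi(u) \le c$ as $u \le \psi^{-1}(c)$ whenever $c$ lies in the open range of $\psi$. For $c \in (-n/\sqrt{n+1},\,n/\sqrt{n+1})$ and $c \ne 0$, squaring $\psi(u)=c$ and solving the resulting linear equation in $u^2$ gives $|u| = |c|\,\hat{\sigma}_j\sqrt{(n+1)^2/(n^2 - (n+1)c^2)}$, with the sign of $u$ matching the sign of $c$; this produces exactly the expression inside the max in the statement. For $c=0$ the inequality reduces to $u \le 0$, i.e.\ $E^{n+1}_j \le \hat{\mu}_j$, which is the $c\!\downarrow\!0$ limit of both branches. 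The boundary and extreme cases follow from the asymptotic behavior of $\psi$: if $c \ge n/\sqrt{n+1}$, then every $u \in \mathbb{R}$ satisfies $\psi(u) < n/\sqrt{n+1} \le c$, so the inequality is vacuous and $\omega_j(c)=\infty$; if $c \le -n/\sqrt{n+1}$, then $\psi(u) > -n/\sqrt{n+1} \ge c$ for every $u$, so no value of $E^{n+1}_j > 0$ satisfies the inequality, and under Assumption~\ref{eq:assumption-scores} (no point mass at $0$) setting $\omega_j(c)=0$ makes both sides of the equivalence almost-surely false.

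\textbf{Step 3 (positivity constraint).} Finally, intersecting the solution set $\{u \le \psi^{-1}(c)\}$ with $\{E^{n+1}_j \ge 0\}$, i.e.\ $\{u \ge -\hat{\mu}_j\}$, gives $0 \le E^{n+1}_j \le \max\{0,\hat{\mu}_j + \psi^{-1}(c)\}$. For $c \ge 0$, $\psi^{-1}(c) \ge 0$ so the max is redundant, while for $-n/\sqrt{n+1} < c < 0$ the max is needed to ensure $\omega_j(c) \ge 0$; this produces the piecewise formula in the statement. The main technical obstacle is the algebraic identity $(\hat{\sigma}^{\mathrm{oracle}}_j)^2 = \hat{\sigma}_j^2 + u^2/(n+1)$, which underpins the cleanness of the inversion; everything else reduces to a careful but elementary case analysis of the bounded monotone function $\psi$.
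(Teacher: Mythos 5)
Your proposal is correct and follows essentially the same route as the paper: both reduce the event $S^{n+1}_{\mathrm{oracle}} \le c$ coordinate-wise, apply the same leave-one-in identity $(\hat{\sigma}^{\mathrm{oracle}}_j)^2 = \hat{\sigma}_j^2 + (E^{n+1}_j-\hat{\mu}_j)^2/(n+1)$ (the paper's auxiliary Lemma on updating the sample variance), and then invert the resulting scalar inequality in $E^{n+1}_j$, handling the boundary cases $|c| \ge n/\sqrt{n+1}$ and the truncation at $0$ identically (including treating the "no solution" branch as an almost-sure null event). The only cosmetic difference is that you invert via the strict monotonicity and odd symmetry of $\psi$, whereas the paper solves the equivalent quadratic $h(x)=0$ and analyzes its sign, so your case analysis is marginally tidier but not a different argument.
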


Replacing this expression for the link functions into~\eqref{eq:oracle-prediction} gives an upper bound $\omega_j(\hat{Q}^{\mathrm{oracle}}_{1-\alpha})$ for $E_j(X^{n+1}, y)$ that is finite (and hence informative) if and only if $\hat{Q}^{\mathrm{oracle}}_{1-\alpha} <n/\sqrt{n+1}$. Fortunately, $\hat{Q}^{\mathrm{oracle}}_{1-\alpha} <n/\sqrt{n+1}$ almost surely as long as the sample size is not too small.

\begin{lemma}
Under Assumption~\ref{eq:assumption-scores}, if $n \geq 1/\alpha -1 $, then $\hat{Q}^{\mathrm{oracle}}_{1-\alpha} < n/\sqrt{n+1}$ almost surely. 
    \label{lem:finite-prediction-guarantee}
\end{lemma}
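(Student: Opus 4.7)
The plan is to show, via Cauchy--Schwarz, that each individual oracle score $S^i_{\mathrm{oracle}}$ lies strictly below the barrier $n/\sqrt{n+1}$ almost surely, and then use the sample-size hypothesis $n\geq 1/\alpha-1$ to ensure the empirical quantile lands on one of the $n$ finite scores rather than on the appended $+\infty$.

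First I would introduce the normalized residuals $Z^i_j := (E^i_j - \hat{\mu}^{\mathrm{oracle}}_j)/\hat{\sigma}^{\mathrm{oracle}}_j$ for $i \in [n+1]$ and $j \in [d]$. The deliberate choice of the factor $1/n$ (rather than $1/(n-1)$) in the definition of $\hat{\sigma}^{\mathrm{oracle}}_j$ yields the two identities $\sum_{i=1}^{n+1} Z^i_j = 0$ and $\sum_{i=1}^{n+1} (Z^i_j)^2 = n$. Using the first identity to rewrite $Z^i_j = -\sum_{i'\neq i} Z^{i'}_j$ and applying Cauchy--Schwarz to the $n$ terms indexed by $i' \neq i$ gives
\[
(Z^i_j)^2 \;=\; \Bigl(\sum_{i'\neq i} Z^{i'}_j\Bigr)^{\!2}
\;\leq\; n \sum_{i'\neq i} (Z^{i'}_j)^2
\;=\; n\bigl(n - (Z^i_j)^2\bigr),
\]
which rearranges to $(n+1)(Z^i_j)^2 \leq n^2$, i.e., $|Z^i_j| \leq n/\sqrt{n+1}$. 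Equality in Cauchy--Schwarz requires the values $\{Z^{i'}_j\}_{i'\neq i}$ to all coincide, equivalently for the $n$ residuals $\{E^{i'}_j\}_{i'\neq i}$ to be mutually equal---a null event under Assumption~\ref{eq:assumption-scores} combined with exchangeability. Therefore $S^i_{\mathrm{oracle}} = \max_j Z^i_j < n/\sqrt{n+1}$ almost surely for every $i \in [n]$.

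Finally, I would convert the sample-size hypothesis into the quantile conclusion. By definition, $\hat{Q}^{\mathrm{oracle}}_{1-\alpha}$ is the $\lceil (1-\alpha)(n+1)\rceil$-th smallest element of $\{S^1_{\mathrm{oracle}},\ldots,S^n_{\mathrm{oracle}},+\infty\}$; the elementary equivalence $n \geq 1/\alpha - 1 \iff \lceil (1-\alpha)(n+1)\rceil \leq n$ then ensures this quantile coincides with one of the $n$ finite scores, each of which is a.s.\ strictly below $n/\sqrt{n+1}$ by the preceding step. The main obstacle I anticipate is the measure-zero claim in the equality analysis: Assumption~\ref{eq:assumption-scores} rules out point masses only in each marginal distribution of $E_j(X,Y)$, so a fully rigorous treatment of the event ``$n$ out of $n+1$ residuals along coordinate $j$ coincide'' may require a brief additional appeal---for instance, reducing by exchangeability to a pairwise coincidence event, which has probability zero under the no-point-mass assumption.
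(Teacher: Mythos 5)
Your proof is correct and follows the same overall strategy as the paper's: show that every individual oracle score is strictly below $n/\sqrt{n+1}$ almost surely, then observe that $n\geq 1/\alpha-1$ forces $\lceil(1-\alpha)(n+1)\rceil\leq n$, so the quantile equals one of the $n$ finite scores. The only substantive difference lies in how you establish the key bound $|Z^i_j|<n/\sqrt{n+1}$ (Samuelson's inequality): you derive it from the constraints $\sum_i Z^i_j=0$, $\sum_i (Z^i_j)^2=n$ via Cauchy--Schwarz, whereas the paper (Lemma~\ref{lem:3}, built on the leave-one-out variance identity of Lemma~\ref{lem:1}) writes $Z^i_j$ in terms of the mean and standard deviation of the other $n$ points and bounds the ratio directly. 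Your route is arguably more self-contained and makes the equality case transparent (all $n$ remaining residuals must coincide), while the paper's route reuses Lemma~\ref{lem:1}, which it needs elsewhere anyway. The residual concern you raise about the measure-zero equality event is already handled by the paper's standing convention (Section~\ref{sec:residuals}) that residuals are almost surely distinct, which is what the paper's own proof invokes; the same convention also gives $\hat{\sigma}^{\mathrm{oracle}}_j>0$ a.s., a point you should state explicitly so that the $Z^i_j$ are well defined.
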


Armed with Lemma~\ref{lem:solution-key-ineq} and Lemma~\ref{lem:finite-prediction-guarantee}, what remains to be done is to obtain a practical prediction set $\hat{C}(X^{n+1}) \supseteq \hat{C}^{\mathrm{oracle}}(X^{n+1})$, as previewed in~\eqref{eq:trans-oracle-upper-bound}, is to find a tight upper bound $\tilde{Q}_{1-\alpha}$ for $\hat{Q}^{\mathrm{oracle}}_{1-\alpha}$. This is the main challenge and focus of the next sections.

\subsection{A Conservative Global-Worst-Case Approach}
\label{sec:method-gwc}

We introduce a (highly) conservative approximation of $\tilde{C}(X^{n+1})$ from \eqref{eq:oracle-prediction}, obtained by taking the worst-case upper bound of $\hat{Q}^{\mathrm{oracle}}_{1-\alpha}$ over all possible values of $E^{n+1} \in \mathbb{R}^d_+$.
We refer to the resulting {\em global-worst-case} (GWC) prediction set as $\hat{C}^{\mathrm{gwc}}(X^{n+1})$.
Because $\hat{C}^{\mathrm{gwc}}(X^{n+1}) \supseteq \tilde{C}(X^{n+1})$ almost surely, Theorem \ref{thm:fixed-transformation-coverage} ensures that $\hat{C}^{\mathrm{gwc}}(X^{n+1})$ attains marginal coverage of at least $1-\alpha$ in finite samples.
As we show in the next section, this conservative prediction set can in fact be sharpened considerably, without sacrificing marginal coverage, albeit with some additional methodological effort.

Let $z \in \mathbb{R}^d_+$ denote a placeholder for the test residual $E^{n+1}$. 
Define the plug-in scaling parameters $\hat{\mu}(z)$ and $\hat{\sigma}(z)$ that would result if $E^{n+1}$ in~\eqref{eq:transductive-oracle-mean-std} were replaced by $z$:
\begin{align} \label{eq:placeholder-mean-std}
\begin{split}
\hat{\mu}_j(z_j) = \frac{\sum_{i=1}^{n} E^i_j + z_j}{n+1}, \quad
\hat{\sigma}_j(z_j) = \sqrt{\frac{\sum_{i=1}^{n} \!\bigl(E^i_j - \hat{\mu}_j(z_j)\bigr)^2 + \bigl(z_j - \hat{\mu}_j(z_j)\bigr)^2}{n} }, \quad
 j \in [d].
\end{split}
\end{align}
With this notation, we have $\hat{\mu}_j(E^{n+1}_j) = \hat{\mu}^{\mathrm{oracle}}_j$ and $\hat{\sigma}_j(E^{n+1}_j) = \hat{\sigma}^{\mathrm{oracle}}_j$.
Define then the GWC transformation $\hat{\Phi}^{\mathrm{gwc}}: \R^d_+ \mapsto \R$ as
\begin{equation}
\hat{\Phi}^{\mathrm{gwc}}(t)
:= 
\max_{1 \le j \le d} 
\sup_{z_j \ge 0} 
\frac{t_j - \hat{\mu}_j(z_j)}{\hat{\sigma}_j(z_j)} \geq \Phi(t; \hat{\mu}^{\mathrm{oracle}}, \hat{\sigma}^{\mathrm{oracle}})
\qquad \forall t \in \mathbb{R}^d_+.
\label{eq:gwc-residual-transformation}
\end{equation}
Intuitively, for each coordinate $j \in [d]$ and $t \in \R^d_+$, we consider the most unfavorable (largest) possible value of the standardized ratio as a function of $E^{n+1}_j = z_j\geq 0$. 
This ensures that $\hat{\Phi}^{\mathrm{gwc}}(t)$ dominates the oracle transformation $\Phi(t; \hat{\mu}^{\mathrm{oracle}}, \hat{\sigma}^{\mathrm{oracle}})$ uniformly over all $t$.
Conveniently, $\hat{\Phi}^{\mathrm{gwc}}(t)$ admits a closed-form expression.

\begin{lemma}
\label{lem:gwc-rescaling}
The GWC transformation $\hat{\Phi}^{\mathrm{gwc}}$ defined in~\eqref{eq:gwc-residual-transformation} can be written as:
\[
\hat{\Phi}^{\mathrm{gwc}}(t)
= 
\max_{1 \le j \le d} 
\max\!\left\{
\frac{t_j - \hat{\mu}_j(0)}{\hat{\sigma}_j(0)},\,
\frac{t_j - \hat{\mu}_j(z_j^*)}{\hat{\sigma}_j(z_j^*)},\,
-\frac{1}{\sqrt{n+1}}
\right\}.
\]
where $z^*_j = \hat\mu_j - \frac{\hat\sigma_j^2}{t_j-\hat\mu_j}$ if $\hat\mu_j \geq \frac{\hat\sigma_j^2}{t_j-\hat\mu_j}$ and $z^*_j = 0$ otherwise.
\end{lemma}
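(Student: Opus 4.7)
The plan is to exchange the outer $\max$ over $j$ with the inner $\sup$ and reduce each coordinate's problem to a univariate optimization. Fix $j$ and let $a := t_j - \hat{\mu}_j$. The key preliminary identity, obtained by expanding the definitions in~\eqref{eq:placeholder-mean-std} and substituting the $n$-sample estimators from~\eqref{eq:plug-in-mean-std}, is
\[
\hat{\mu}_j(z) = \frac{n\hat{\mu}_j + z}{n+1}, \qquad \hat{\sigma}_j^2(z) = \hat{\sigma}_j^2 + \frac{(z - \hat{\mu}_j)^2}{n+1}.
\]
This reduces the inner supremum to a scalar maximization of $g_j(z) := (t_j - \hat{\mu}_j(z))/\hat{\sigma}_j(z)$ over $z \ge 0$, and makes both $\hat{\mu}_j(z)$ and $\hat{\sigma}_j(z)$ explicit functions of the plug-in statistics.

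Next, I would differentiate and solve for the stationary point. Writing $u := z - \hat{\mu}_j$, a short calculation shows that $g_j'(z)$ is proportional to $-(\hat{\sigma}_j^2 + a u)$, which vanishes uniquely at $u = -\hat{\sigma}_j^2/a$, i.e.\ at $z^\star = \hat{\mu}_j - \hat{\sigma}_j^2/(t_j - \hat{\mu}_j)$, matching the formula in the lemma. Passing to the limit gives $\lim_{z \to \infty} g_j(z) = -1/\sqrt{n+1}$, which accounts for the third term in the claimed $\max$, while $g_j(0)$ accounts for the boundary of the feasible region.

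The remaining step is a case analysis on $\mathrm{sign}(a)$ and on whether $z^\star$ lies in $[0,\infty)$, and I expect this to be the main obstacle. When $a > 0$, the sign of $g_j'$ makes $z^\star$ the unique strict global maximizer on $\mathbb{R}$; if $z^\star \ge 0$ (equivalently $\hat{\mu}_j \ge \hat{\sigma}_j^2/a$) it delivers the supremum, while if $z^\star < 0$ the function $g_j$ is strictly decreasing on $[0,\infty)$ so the supremum is $g_j(0)$, which the convention $z_j^\star := 0$ recovers. When $a < 0$, the same sign analysis shows that $z^\star > 0$ is in fact a local \emph{minimum} of $g_j$, so the supremum on $[0,\infty)$ is attained at one of the two boundaries and equals $\max\{g_j(0), -1/\sqrt{n+1}\}$; in this regime $g_j(z^\star)$ lies strictly below both terms and is therefore harmless inside the three-way $\max$. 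The edge case $a = 0$ reduces $g_j$ to a function that is strictly decreasing on $[0,\infty)$, giving $g_j(0)$ as the supremum.

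Putting the cases together shows that the single closed-form expression $\max\{g_j(0),\, g_j(z_j^\star),\, -1/\sqrt{n+1}\}$ correctly captures $\sup_{z \ge 0} g_j(z)$ in every regime, even when $z_j^\star$ corresponds to a minimizer rather than a maximizer, and taking the outer $\max$ over $j$ yields the closed form stated in the lemma. The subtle point to justify carefully is exactly this last observation: the three-way max is robust to whether the interior critical point is a local max, a local min, or inadmissible, because in each case at least one of the other two terms dominates it.
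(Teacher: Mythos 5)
Your proposal is correct and takes essentially the same route as the paper's proof: it reduces the supremum to a coordinate-wise scalar optimization via the variance-update identity of Lemma~\ref{lem:1}, locates the unique stationary point $z_j^\star$, identifies the limit $-1/\sqrt{n+1}$ as $z_j \to \infty$, and checks by case analysis that the three-way maximum is valid in every regime. Your treatment of the $t_j < \hat{\mu}_j$ case (where $z_j^\star$ is a local minimum and therefore harmlessly dominated inside the max) is in fact spelled out more explicitly than in the paper, but the underlying argument is identical.
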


Using Lemma~\ref{lem:gwc-rescaling}, we compute the scalar non-conformity scores $S^i_{\mathrm{gwc}} := \hat{\Phi}^{\mathrm{gwc}}(E^i)$ for $i \in [n]$, and define the corresponding conformal quantile  $\hat{Q}^{\mathrm{gwc}}_{1-\alpha}: = \hat{Q}_{1-\alpha}(S^1_{\mathrm{gwc}},\ldots, S^n_{\mathrm{gwc}})$.
Because $\hat{\Phi}^{\mathrm{gwc}}(t) \ge \Phi(t; \hat{\mu}^{\mathrm{oracle}}, \hat{\sigma}^{\mathrm{oracle}})$ for all $t$, it follows that 
$\hat{Q}^{\mathrm{gwc}}_{1-\alpha} \geq \hat{Q}^{\mathrm{oracle}}_{1-\alpha}$ almost surely. Then the GWC prediction set is given by:
\begin{equation}
\hat{C}^{\mathrm{gwc}}(X^{n+1})
:= 
\left\{y \in \mathbb{R}^d : 0\leq E(X^{n+1},y) \leq \omega(\hat{Q}^{\mathrm{gwc}}_{1-\alpha}), \quad \forall j \in [d]\right\}.
\label{eq:gwc-prediction}
\end{equation}
Algorithm~\ref{alg:global} in Appendix~\ref{app:algorithms} summarizes this procedure, which guarantees valid coverage.
\begin{theorem} \label{thm:TSCP-GWC-coverage}
Assume $(E^1,\ldots,E^{n+1})$ are exchangeable and Assumption~\ref{eq:assumption-scores} holds.
Then, $\hat{C}^{\mathrm{gwc}}(X^{n+1})$ constructed by Algorithm~\ref{alg:global} satisfies: $\mathbb{P}[ Y^{n+1} \in \hat{C}^{\mathrm{gwc}}(X^{n+1}) ] \ge 1 - \alpha$.
\end{theorem}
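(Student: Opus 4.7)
The plan is to chain together three reductions that collectively push the problem onto the random-transformation form of Theorem~\ref{thm:fixed-transformation-coverage}: first dominate $\hat{\Phi}^{\mathrm{gwc}}$ by the transductive oracle transformation $\Phi(\cdot;\hat{\mu}^{\mathrm{oracle}},\hat{\sigma}^{\mathrm{oracle}})$, then translate that domination into a set inclusion via the link functions from Lemma~\ref{lem:solution-key-ineq}, and finally invoke the oracle exchangeability to close the argument.

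The first step is to establish the pointwise inequality $\hat{\Phi}^{\mathrm{gwc}}(t) \ge \Phi(t;\hat{\mu}^{\mathrm{oracle}},\hat{\sigma}^{\mathrm{oracle}})$ for every $t \in \R^d_+$, almost surely. This follows directly from definition~\eqref{eq:gwc-residual-transformation}: since $E^{n+1}_j \geq 0$ by Assumption~\ref{eq:assumption-scores}, the value $z_j = E^{n+1}_j$ is admissible in the inner supremum, and the identities $\hat{\mu}_j(E^{n+1}_j) = \hat{\mu}^{\mathrm{oracle}}_j$ and $\hat{\sigma}_j(E^{n+1}_j) = \hat{\sigma}^{\mathrm{oracle}}_j$ immediately give the bound. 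Evaluating both sides at the calibration residuals yields $S^i_{\mathrm{gwc}} \ge S^i_{\mathrm{oracle}}$ for all $i \in [n]$, and monotonicity of the $\lceil(1-\alpha)(n+1)\rceil$-th order statistic lifts this to $\hat{Q}^{\mathrm{gwc}}_{1-\alpha} \ge \hat{Q}^{\mathrm{oracle}}_{1-\alpha}$ almost surely.

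The second step is to combine this quantile domination with monotonicity of the link functions $\omega_j$, which is visible from the explicit piecewise formula in Lemma~\ref{lem:solution-key-ineq}, to conclude $\omega_j(\hat{Q}^{\mathrm{gwc}}_{1-\alpha}) \ge \omega_j(\hat{Q}^{\mathrm{oracle}}_{1-\alpha})$ for every $j \in [d]$. Comparing~\eqref{eq:oracle-prediction} with~\eqref{eq:gwc-prediction} yields the inclusion $\hat{C}^{\mathrm{gwc}}(X^{n+1}) \supseteq \tilde{C}(X^{n+1})$ almost surely, which by the oracle equivalence~\eqref{eq:oracle-dual} reduces the claim to $\Prob[\, Y^{n+1} \in \hat{C}^{\mathrm{oracle}}(X^{n+1})\,] \ge 1-\alpha$. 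This last inequality is immediate from the random-$\Phi$ form of Theorem~\ref{thm:fixed-transformation-coverage}: $\hat{\mu}^{\mathrm{oracle}}$ and $\hat{\sigma}^{\mathrm{oracle}}$ are symmetric functions of $(E^1,\ldots,E^{n+1})$, so conditioning on the random transformation $\Phi(\cdot;\hat{\mu}^{\mathrm{oracle}},\hat{\sigma}^{\mathrm{oracle}})$ preserves the exchangeability of $(S^1_{\mathrm{oracle}},\ldots,S^{n+1}_{\mathrm{oracle}})$, and the standard quantile argument delivers the $1-\alpha$ lower bound.

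The most delicate part is the bookkeeping in the second step: one has to verify monotonicity of the piecewise link function of Lemma~\ref{lem:solution-key-ineq} across the sign of its argument---in particular across the regime $-n/\sqrt{n+1} < c < 0$, where the $|c|$ factor and the outer $\max$ with $0$ both require care---and check that the set-theoretic inclusion and the equivalence~\eqref{eq:oracle-dual} both hold on the same full-probability event, where $\hat{\sigma}_j > 0$ for all $j$ as guaranteed by Assumption~\ref{eq:assumption-scores} together with $n > 1$. Everything else is a direct application of results already proved earlier in the paper.
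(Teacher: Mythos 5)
Your proof is correct and follows essentially the same route as the paper: establish $\hat{Q}^{\mathrm{gwc}}_{1-\alpha} \ge \hat{Q}^{\mathrm{oracle}}_{1-\alpha}$ from the pointwise domination of $\hat{\Phi}^{\mathrm{gwc}}$ over the oracle transformation, lift this via monotonicity of $\omega$ to the set inclusion $\hat{C}^{\mathrm{gwc}}(X^{n+1}) \supseteq \tilde{C}(X^{n+1})$, and then invoke the random-$\Phi$ form of Theorem~\ref{thm:fixed-transformation-coverage} (applied to the symmetric oracle transformation, with the equivalence~\eqref{eq:oracle-dual}) to conclude. You merely spell out in detail the steps the paper compresses into two sentences, and your flagging of the almost-sure event where $\hat{\sigma}_j > 0$ is appropriately careful but not a genuine gap in the paper's argument.
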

\begin{proof}
From~\eqref{eq:gwc-prediction}, it is easy to see that $\hat{C}^{\mathrm{gwc}}(X^{n+1}) \supseteq \tilde{C}(X^{n+1})$ almost surely because $\hat{Q}^{\mathrm{gwc}}_{1-\alpha} \geq \hat{Q}^{\mathrm{oracle}}_{1-\alpha}$ and the link function $\omega$ outlined in Lemma~\ref{lem:solution-key-ineq} is monotone non-decreasing. We already know from Theorem \ref{thm:fixed-transformation-coverage} then $\mathbb{P}[ Y^{n+1} \in \tilde{C}(X^{n+1}) ] \ge 1 - \alpha$.
\end{proof}

We refer to Appendix~\ref{app:gwc-cost} for details on how to construct $\hat{C}^{\mathrm{gwc}}(X^{n+1})$ in practice at cost $\mathcal{O}(dn)$. While computationally efficient, however, $\hat{C}^{\mathrm{gwc}}(X^{n+1})$ is {\em statistically} inefficient as it tends to be substantially larger than the estimated oracle prediction set $\tilde{C}(X^{n+1})$, which we would ideally like to approximate tightly.
In the next section, we therefore extend this approach and develop a tighter {\em local worst-case} (LWC) prediction set $\hat{C}^{\mathrm{lwc}}(X^{n+1}) \subseteq \hat{C}^\mathrm{gwc}(X^{n+1})$ that approximates the oracle $\tilde{C}(X^{n+1})$ more closely. 

\subsection{A Tighter Local-Worst-Case Method}
\label{sec:method-lwc}

The construction of $\hat{C}^{\mathrm{gwc}}(X^{n+1})$ described in the previous section relies on a {\em global} upper bound $\hat{Q}^{\mathrm{gwc}}_{1-\alpha}$ for the ideal quantile $\hat{Q}^{\mathrm{oracle}}_{1-\alpha}$, designed to hold uniformly over all possible values of the unobserved test residual $E^{n+1} \in \mathbb{R}^d_+$. However, the resulting prediction sets are often overly conservative, as the GWC procedure ignores the fact that many choices of $E^{n+1}$ correspond to outcome values $Y^{n+1} \in \mathbb{R}^d$ that are extremely unlikely to belong to the ideal prediction set $\tilde{C}(X^{n+1})$ and could therefore be safely disregarded.

This motivates the development of a more statistically efficient approach that constructs and aggregates a collection of {\em local} prediction sets, each tailored to a distinct working hypothesis that restricts $Y^{n+1}$ to a specific region of the outcome space. After a high-level overview of the main ideas and an implicit characterization of the target prediction set for a given partition of the outcome space into disjoint local regions, we detail the implementation of the method in three steps. First, we introduce an intuitive, data-driven partition guided by the order statistics of the calibration residuals. Second, we derive a tractable explicit characterization of the resulting local prediction sets. Third, we develop a computationally efficient algorithm for constructing a tight enclosing rectangular prediction set in practice, while avoiding explicit enumeration of an exponential number of local prediction sets.

\subsubsection{Method Overview}

For any subset $\mathcal{R} \subseteq \R^d$ of the outcome space, consider the working hypothesis that $Y^{n+1} \in \mathcal{R}$.
Let $\hat{Q}^\mathrm{lwc}_{1-\alpha}(\mathcal{R})$ denote an upper bound for $\hat{Q}^{\mathrm{oracle}}_{1-\alpha}$ tailored to this hypothesis, satisfying
\begin{align*}
  Y^{n+1} \in \mathcal{R} \quad \Longrightarrow \quad \hat{Q}^\mathrm{lwc}_{1-\alpha}(\mathcal{R}) \geq \hat{Q}^{\mathrm{oracle}}_{1-\alpha}.
\end{align*}
Then, define the $\mathcal{R}$-specific local prediction set $\hat{C}^\mathrm{lwc} (X^{n+1}; \mathcal{R})$ by replacing the (very conservative) upper bound $\hat{Q}^{\mathrm{gwc}}_{1-\alpha}$ in~\eqref{eq:gwc-prediction} with its local alternative $\hat{Q}^\mathrm{lwc}_{1-\alpha}(\mathcal{R})$:
\begin{align} \label{eq:lwc-prediction}
\hat{C}^\mathrm{lwc} (X^{n+1}; \mathcal{R}) := \left\{y\in \mathcal{R}: 0\leq E_j(X^{n+1},y) \leq \omega_j(\hat{Q}^\mathrm{lwc}_{1-\alpha}(\mathcal
{R})) \right\}.
\end{align}
By construction, this satisfies
\begin{align} \label{eq:lwc-local-containment}
  Y^{n+1} \in \mathcal{R} \quad \Longrightarrow \quad \mathcal{R} \supseteq \hat{C}^\mathrm{lwc} (X^{n+1}; \mathcal{R}) \supseteq \tilde{C}(X^{n+1}) \cap \mathcal{R}.
\end{align}

Although we cannot know \emph{a priori} whether $Y^{n+1} \in \mathcal{R}$ for any fixed region $\mathcal{R}$, we can construct many local prediction sets as in~\eqref{eq:lwc-prediction} using a collection $\mathcal{P}$ of hypothesized regions $\mathcal{R}$ chosen to be sufficiently rich so that $Y^{n+1} \in \mathcal{R}$ with high probability for some $\mathcal{R} \in \mathcal{P}$, and then aggregate these local prediction sets.
To ensure interpretability, we report a relatively tight rectangular set that bounds the union of all localized prediction sets, rather than the union itself.
Moreover, since the conservative prediction set $\hat{C}^{\mathrm{gwc}}(X^{n+1})$ from the previous section already enjoys valid marginal coverage, our goal is to obtain a smaller prediction set. Accordingly, it suffices to consider a {\em random partition} $\mathcal{P}$ of $\hat{C}^{\mathrm{gwc}}(X^{n+1})$, rather than of the entire outcome space $\mathbb{R}^d$. 
In summary, we construct $\hat{C}^{\mathrm{lwc}}(X^{n+1})$ using:
\begin{equation}
    \begin{aligned}
        \hat{C}^{\mathrm{lwc}}(X^{n+1}) &:= \mathrm{Rect}\left(\bigsqcup_{\mathcal{R} \in \mathcal{P}} \hat{C}^\mathrm{lwc}(X^{n+1}; \mathcal{R}); \mathcal{P}\right),
    \label{eq:lwc-joint-union}
    \end{aligned}
\end{equation}
where $\mathrm{Rect}(A; \mathcal{P})$ denotes a rectangle containing the subset $A \subseteq \mathbb{R}^d$ that does not extend beyond the outcome space partitioned by $\mathcal{P}$ in any coordinate. 
In practice, our method constructs a {\em relatively tight}, though not necessarily the {\em tightest}, such rectangle due to computational considerations. Nonetheless, we will show that the resulting prediction sets are (often much) more informative than  $\hat{C}^{\mathrm{gwc}}(X^{n+1})$ in finite samples. Moreover, in the large-sample limit, the union in~\eqref{eq:lwc-joint-union} becomes approximately rectangular itself, making this enclosing step a very mild relaxation from the perspective of statistical efficiency.

The following result guarantees that this high-level approach leads to prediction sets $\hat{C}^{\mathrm{lwc}}(X^{n+1})$ that are contained in $\hat{C}^{\mathrm{gwc}}(X^{n+1})$ while retaining valid marginal coverage.
\begin{proposition}
Consider a (random) collection $\mathcal{P}$ of disjoint regions $\mathcal{R} \subseteq \mathbb{R}^d$ satisfying $\sqcup_{\mathcal{R} \in \mathcal{P} } \mathcal{R} = \hat{C}^{\mathrm{gwc}}(X^{n+1})$ almost surely.
For any $\mathcal{R} \in \mathcal{P}$, let $\hat{C}^\mathrm{lwc}(X^{n+1}; \mathcal{R}) \subseteq \mathbb{R}^d$ denote a local prediction set satisfying~\eqref{eq:lwc-local-containment} almost surely.
Then, $\hat{C}^{\mathrm{lwc}}(X^{n+1})$ defined in~\eqref{eq:lwc-joint-union}  satisfies $\hat{C}^{\mathrm{lwc}}(X^{n+1}) \subseteq \hat{C}^{\mathrm{gwc}}(X^{n+1})$.
Therefore, if $(E^1,\ldots,E^{n+1})$ are exchangeable and Assumption~\ref{eq:assumption-scores} holds, $\mathbb{P}[ Y^{n+1} \in \hat{C}^{\mathrm{lwc}}(X^{n+1}) ] \ge 1 - \alpha$.
\label{prop:lwc-approximation}
\end{proposition}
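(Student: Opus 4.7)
The plan is to establish two separate claims: first, the deterministic containment $\hat{C}^{\mathrm{lwc}}(X^{n+1}) \subseteq \hat{C}^{\mathrm{gwc}}(X^{n+1})$; second, the marginal coverage inequality. The first is essentially a property of the $\mathrm{Rect}(\cdot; \mathcal{P})$ operator, while the second will follow by chaining a pointwise set implication with the coverage guarantee already established for the transductive oracle set $\tilde{C}(X^{n+1})$ via Theorem~\ref{thm:fixed-transformation-coverage} and the equivalence~\eqref{eq:oracle-dual}.

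For containment, the plan is to note that each local set satisfies $\hat{C}^\mathrm{lwc}(X^{n+1}; \mathcal{R}) \subseteq \mathcal{R}$ by~\eqref{eq:lwc-local-containment}, so the disjoint union $\bigsqcup_{\mathcal{R} \in \mathcal{P}} \hat{C}^\mathrm{lwc}(X^{n+1}; \mathcal{R})$ is contained in $\bigsqcup_{\mathcal{R} \in \mathcal{P}} \mathcal{R} = \hat{C}^{\mathrm{gwc}}(X^{n+1})$. Because $\hat{C}^{\mathrm{gwc}}(X^{n+1})$ is itself a hyper-rectangle under the standard residual constructions from Section~\ref{sec:preliminaries}, its coordinate-wise projections coincide with its own coordinate intervals. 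Invoking the defining property of $\mathrm{Rect}(\cdot; \mathcal{P})$---namely that the returned rectangle does not extend beyond the outcome space partitioned by $\mathcal{P}$ in any coordinate---then yields $\hat{C}^{\mathrm{lwc}}(X^{n+1}) \subseteq \hat{C}^{\mathrm{gwc}}(X^{n+1})$.

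For coverage, the key intermediate step I would prove is the deterministic implication $Y^{n+1} \in \tilde{C}(X^{n+1}) \Rightarrow Y^{n+1} \in \hat{C}^{\mathrm{lwc}}(X^{n+1})$. To see this, first note that $\tilde{C}(X^{n+1}) \subseteq \hat{C}^{\mathrm{gwc}}(X^{n+1})$, which follows from~\eqref{eq:trans-oracle-upper-bound} together with the monotonicity of the link functions $\omega_j$ from Lemma~\ref{lem:solution-key-ineq} and the inequality $\hat{Q}^{\mathrm{gwc}}_{1-\alpha} \geq \hat{Q}^{\mathrm{oracle}}_{1-\alpha}$ established in Section~\ref{sec:method-gwc}. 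Since $\mathcal{P}$ partitions $\hat{C}^{\mathrm{gwc}}(X^{n+1})$, any $Y^{n+1} \in \tilde{C}(X^{n+1})$ lies in exactly one region $\mathcal{R}^* \in \mathcal{P}$; by~\eqref{eq:lwc-local-containment} applied to that $\mathcal{R}^*$, we have $\hat{C}^\mathrm{lwc}(X^{n+1}; \mathcal{R}^*) \supseteq \tilde{C}(X^{n+1}) \cap \mathcal{R}^* \ni Y^{n+1}$, so $Y^{n+1}$ belongs to the disjoint union, which in turn is contained in its enclosing rectangle $\hat{C}^{\mathrm{lwc}}(X^{n+1})$. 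Taking probabilities and applying Theorem~\ref{thm:fixed-transformation-coverage} to the oracle set via~\eqref{eq:oracle-dual} then yields $\mathbb{P}[Y^{n+1} \in \hat{C}^{\mathrm{lwc}}(X^{n+1})] \geq \mathbb{P}[Y^{n+1} \in \tilde{C}(X^{n+1})] \geq 1-\alpha$.

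The main subtlety I would check carefully is that the operator $\mathrm{Rect}(\cdot; \mathcal{P})$ simultaneously (i) contains the disjoint union of local sets (which is what makes the coverage argument go through) and (ii) respects the coordinate-wise bounds imposed by $\hat{C}^{\mathrm{gwc}}(X^{n+1})$ (which is what delivers containment in the GWC set). Both are guaranteed by the definition adopted immediately after~\eqref{eq:lwc-joint-union}, so no circularity arises. Beyond this point, the proof is a short exercise in combining the two defining properties of the local prediction sets with the already-established oracle coverage, and no new probabilistic argument is needed.
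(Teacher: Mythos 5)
Your proof is correct and follows essentially the same route as the paper's: establish $\tilde{C}(X^{n+1})\subseteq\hat{C}^{\mathrm{gwc}}(X^{n+1})$, locate $Y^{n+1}$ in a unique partition region, apply the local containment property~\eqref{eq:lwc-local-containment}, and chain through the oracle coverage from Theorem~\ref{thm:fixed-transformation-coverage}. If anything, you are slightly more careful than the paper's own proof, which treats $\hat{C}^{\mathrm{lwc}}(X^{n+1})$ as equal to the disjoint union of local sets and leaves the role of the $\mathrm{Rect}(\cdot;\mathcal{P})$ enclosure implicit, whereas you explicitly verify that it preserves both the coverage (by containing the union) and the GWC containment (by not extending beyond the partitioned space).
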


We next introduce an intuitive data-driven partition based on the possible rankings of the test residuals $E_j^{n+1}$ relative to the calibration residuals $(E_j^{1},\ldots,E_j^{n})$ for each coordinate $j \in [d]$, and then we characterize the corresponding prediction set $\hat{C}^{\mathrm{lwc}}(X^{n+1})$.

\subsubsection{Step 1: A Data-Driven Partition Based on Sample Order Statistics}

For each $j \in [d]$, let $E^{(k)}_j$ denote the $k$-th order statistic of $\{E^i_j\}_{i=1}^n$, so that $0 := E^{(0)}_j < E^{(1)}_j \leq \ldots \leq E^{(n)}_j < E^{(n+1)}_j := +\infty$. 
For each multi-index $h=(h_1,\ldots,h_d)\in [n+1]^d$, define a (half-open) rectangle $R^h$ in the residual space and its region $\mathcal{Y}^h$ in the outcome space as:
\begin{equation}
    \label{eq:rectangle-explicit}
    \begin{aligned}
        R^h 
        &:=  \bigtimes_{j=1}^d \bigl[\,E^{(h_j-1)}_j,\, U^{h_j}_j\,\bigr), \quad\text{where} \quad U^{h_j}_j := \min\left\{E^{(h_j)}_j,  \omega_j(\hat{Q}^{\mathrm{gwc}})\right\},\\
        \mathcal{Y}^h &:= \left\{y \in \R^d: E(X^{n+1}, y) \in R^h\right\}.
    \end{aligned}
\end{equation}
This defines a partition $ \mathcal{P} = \{\mathcal{Y}^h\}_{h\in[n+1]^d}$ of $\hat{C}^\mathrm{gwc}(X^{n+1})$ into disjoint regions, some of which may be empty, depending on the values of $\{\omega_j(\hat{Q}^{\mathrm{gwc}})\}^d_{j=1}$.
Figure~\ref{fig:global-to-local} visualizes the partition and the corresponding construction of $\hat{C}^{\mathrm{lwc}}(X^{n+1})$, in a two-dimensional toy example.

\begin{figure}[!htb]
  \centering
  \includegraphics[width = 0.7\textwidth]{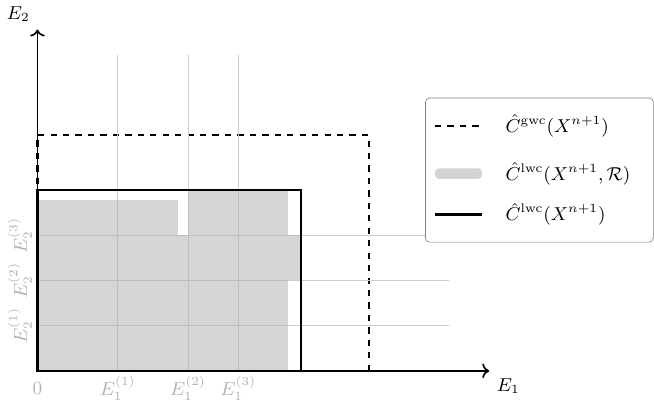}
\caption{Illustration of the construction of transductively standardized conformal prediction sets using the method outlined in~\eqref{eq:lwc-joint-union}, with the data-driven partition $\mathcal{P}$ of the outcome space defined in~\eqref{eq:rectangle-explicit}, in a toy example with a two-dimensional outcome variable.
The gray grid lines indicate the partition $\mathcal{P}$ in the residual space, as determined by the order statistics $E^{(k)}_j$ of the calibration residuals.
The shaded rectangles represent local prediction sets $\hat{C}^\mathrm{lwc}(X^{n+1}; \mathcal{R})$ corresponding to different hypothesized rectangles $\mathcal{R} \in \mathcal{P}$.
The solid rectangle shows the projection, in residual space, of their rectangular enclosure defined in~\eqref{eq:lwc-joint-union}, which constitutes our final prediction set.
For comparison, the larger dashed rectangle shows the corresponding projection of the more conservative prediction set $\hat{C}^\mathrm{gwc}(X^{n+1})$ from Section~\ref{sec:method-gwc}.}
  \label{fig:global-to-local}
\end{figure}

While it may appear at first sight that this choice of partition necessarily incurs exponential computational costs of order $\mathcal{O}(n^d)$, rendering it intractable for all but very small dimensions $d$, in fact it enables a much more efficient characterization of the local prediction sets $\hat{C}^{\mathrm{lwc}}(X^{n+1}; h)$ that will lead to a practical algorithm for constructing $\hat{C}^{\mathrm{lwc}}(X^{n+1})$ at a computational cost of only $\mathcal{O}(d^2 n \log n)$.

\subsubsection{Step 2: Characterization of the Local Prediction Sets}

For any $h \in [n+1]^d$ with $\mathcal{Y}^h \neq \emptyset$, we characterize the local prediction set $\hat{C}^{\mathrm{lwc}}(X^{n+1};h)$ as 
\begin{equation}
\label{eq:lwc-prediction-h}
\hat{C}^{\mathrm{lwc}}(X^{n+1};h)
:= \left\{ y \in \mathcal{Y}^h : 0\leq E_j(X^{n+1},y) \leq  \omega_j(\hat{Q}^{\mathrm{lwc}}_{1-\alpha}(h)) \right\},
\end{equation}
where the function $\omega_j$ is given in Lemma~\ref{lem:solution-key-ineq}, and the scalar quantity $\hat{Q}^{\mathrm{lwc}}_{1-\alpha}(h)$ is computed as described below, following a strategy similar to the worst-case approach from Section~\ref{sec:method-gwc}, but operating more liberally under the working hypothesis that $Y^{n+1} \in \mathcal{Y}^h$.

Define the LWC transformation function $\hat{\Phi}^{\mathrm{lwc}}(\cdot, h): \R^d_+ \mapsto \R$ such that, for any $t \in \mathbb{R}^d_+$,
\begin{align}
\label{eq:lwc-residual-transformation}
& \hat{\Phi}^{\mathrm{lwc}}(t;h)
:= \max_{1 \le j \le d}
\left\{
\sup_{z \in R^h}\frac{t_j}{\hat{\sigma}_j(z_j)}
\;-\;
\inf_{z \in \mathcal{E}^{\mathrm{gwc}}}\frac{\hat{\mu}_j(z_j)}{\hat{\sigma}_j(z_j)}
\right\},
& \mathcal{E}^{\mathrm{gwc}}
:= 
\bigtimes_{j=1}^{d} \left[0,\, \omega(\hat{Q}^{\mathrm{gwc}}_{1-\alpha})\right]
\end{align}
where $\mathcal{E}^{\mathrm{gwc}}$ is the projection of $\hat{C}^\mathrm{gwc}(X^{n+1})$ onto the residual space; i.e., $y \in \hat{C}^\mathrm{gwc}(X^{n+1}) \iff E(X^{n+1},y) \in \mathcal{E}^{\mathrm{gwc}}$. 
With this choice, note that, whenever $E^{n+1}\in R^h$,
\begin{equation}
  \hat{\Phi}^{\mathrm{lwc}}(t;h)
\;\ge\;
\max_{1 \le j \le d}\;
\sup_{z\in R^h}\frac{t_j-\hat{\mu}_j(z_j)}{\hat{\sigma}_j(z_j)} \geq \Phi\!\bigl(t;\hat{\mu}^{\mathrm{oracle}},\hat{\sigma}^{\mathrm{oracle}}\bigr),
\quad \forall\, t\in\mathbb{R}^d_+.
\label{eq:lwc-transformation-bound}
\end{equation}
Therefore, $\hat{\Phi}^{\mathrm{lwc}}(t;h)$ is {\em locally} more conservative than the oracle transformation from Section~\ref{sec:method-oracle} under the working hypothesis that $Y^{n+1} \in \mathcal{Y}^h$, while being much less conservative than $\hat{\Phi}^{\mathrm{gwc}}(t)$ from Section~\ref{sec:method-gwc}.
Moreover, $\hat{\Phi}^{\mathrm{lwc}}(t;h)$ admits the following closed form.
\begin{lemma}
\label{lem:lwc-rescaling}
If $\mathcal{Y}^h\neq \emptyset$, then $\hat{\Phi}^{\mathrm{lwc}}(t; h)$ defined in~\eqref{eq:lwc-residual-transformation} can be written for any $t\in\mathbb{R}^d_+$ as:
\[
\hat{\Phi}^{\mathrm{lwc}}(t;h)
= \max_{1 \le j \le d}
\left\{
\frac{t_j}{r_j(h_j)}
-
\min\!\left\{
\frac{\hat{\mu}_j(0)}{\hat{\sigma}_j(0)},\;
\lim_{z_j \uparrow \omega_j(\hat{Q}^{\mathrm{gwc}}_{1-\alpha})}
\frac{\hat{\mu}_j(z_j)}{\hat{\sigma}_j(z_j)}
\right\}
\right\},
\]
where
\begin{align*}
r_j(h_j) = \begin{cases}
  \hat{\sigma}_j, & \text{if } E^{(h_j-1)}_j \leq \hat{\mu}_j < U^{h_j}_j, \\
  \min\left\{\hat{\sigma}_j(E^{(h_j-1)}_j), \hat{\sigma}_j(U^{h_j}_j)\right\}, & \text{otherwise.}
\end{cases}
\end{align*}
\end{lemma}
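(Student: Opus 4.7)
The plan is to evaluate separately the two one-dimensional optimizations that appear, for each coordinate $j$, inside the outer maximum in~\eqref{eq:lwc-residual-transformation}. Because both $\hat{\mu}_j(\cdot)$ and $\hat{\sigma}_j(\cdot)$ depend only on $z_j$, the $\sup$ over $z \in R^h$ and the $\inf$ over $z \in \mathcal{E}^{\mathrm{gwc}}$ decouple into univariate problems over the intervals $[E^{(h_j-1)}_j, U^{h_j}_j)$ and $[0, \omega_j(\hat{Q}^{\mathrm{gwc}}_{1-\alpha})]$, respectively. The non-emptiness of $\mathcal{Y}^h$ ensures these intervals are non-degenerate.

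The first step is the algebraic identity
\[
\hat{\sigma}_j(z_j)^2 \;=\; \hat{\sigma}_j^2 \;+\; \frac{(z_j - \hat{\mu}_j)^2}{n+1},
\]
which follows by expanding the definition in~\eqref{eq:placeholder-mean-std} and using $\sum_{i=1}^n(E^i_j - \hat{\mu}_j)=0$. This shows that $\hat{\sigma}_j(\cdot)$ is strictly convex with unique minimum value $\hat{\sigma}_j$ attained at $z_j = \hat{\mu}_j$. A direct case analysis then gives $\inf_{z_j \in [E^{(h_j-1)}_j, U^{h_j}_j)} \hat{\sigma}_j(z_j) = r_j(h_j)$: the unconstrained minimizer lies inside the interval in the first case of $r_j$, giving $\hat{\sigma}_j$; otherwise the infimum is at the closer endpoint, with continuity handling the half-open upper boundary. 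Hence $\sup_{z_j} t_j/\hat{\sigma}_j(z_j) = t_j/r_j(h_j)$.

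For the second term, consider $g(z_j) := \hat{\mu}_j(z_j)/\hat{\sigma}_j(z_j)$. A short differentiation using the closed forms of $\hat{\mu}_j(z_j)$ and $\hat{\sigma}_j(z_j)^2$ yields $g'(z_j) \propto \hat{\sigma}_j^2 - (z_j - \hat{\mu}_j)\,\hat{\mu}_j$. Since $\hat{\mu}_j > 0$ almost surely under Assumption~\ref{eq:assumption-scores} (non-negativity together with no point mass at zero forces $E^i_j > 0$ almost surely), $g$ has a unique critical point at $z_j = \hat{\mu}_j + \hat{\sigma}_j^2/\hat{\mu}_j$, and $g'$ changes sign from positive to negative. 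Hence $g$ is strictly quasiconcave on $[0, \infty)$, so its infimum over $[0, \omega_j(\hat{Q}^{\mathrm{gwc}}_{1-\alpha})]$ is attained at an endpoint, equal to the $\min$ of $\hat{\mu}_j(0)/\hat{\sigma}_j(0)$ and the limit $\lim_{z_j \uparrow \omega_j(\hat{Q}^{\mathrm{gwc}}_{1-\alpha})} \hat{\mu}_j(z_j)/\hat{\sigma}_j(z_j)$ (the limit notation accommodates the open-endpoint convention; by continuity it equals the endpoint value). Substituting both pieces back into~\eqref{eq:lwc-residual-transformation} and taking the outer maximum over $j$ yields the claimed closed form.

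The main obstacle is the quasiconcavity argument in the third paragraph: although the differentiation itself is routine, one must track signs carefully and invoke $\hat{\mu}_j > 0$ to rule out degenerate critical-point behavior. Everything else reduces to direct algebraic manipulation and a convexity argument for $\hat{\sigma}_j(\cdot)$.
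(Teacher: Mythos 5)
Your proposal is correct and follows essentially the same route as the paper's proof: the same variance-update identity $\hat{\sigma}_j(z_j)^2=\hat{\sigma}_j^2+(z_j-\hat{\mu}_j)^2/(n+1)$ to minimize $\hat{\sigma}_j(\cdot)$ over the rectangle's $j$-th interval (yielding $r_j(h_j)$), and the same differentiation of $g(z_j)=\hat{\mu}_j(z_j)/\hat{\sigma}_j(z_j)$ to locate its unique interior maximum and conclude the infimum sits at an endpoint. Your explicit remark that Assumption~\ref{eq:assumption-scores} forces $\hat{\mu}_j>0$ almost surely is a small point the paper leaves implicit in its sign analysis, but otherwise the two arguments coincide.
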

Note that the expression for $\hat{\Phi}^{\mathrm{lwc}}(t;h)$ in Lemma~\ref{lem:lwc-rescaling} involves a limit for $z_j \uparrow \omega_j(\hat{Q}^{\mathrm{gwc}}_{1-\alpha})$ because $\omega_j(\hat{Q}^{\mathrm{gwc}}_{1-\alpha})$ can be infinite when $\hat{Q}^{\mathrm{gwc}}_{1-\alpha} \geq \frac{n}{\sqrt{n+1}}$ (recall Lemma~\ref{lem:solution-key-ineq}), in which case $\mathcal{E}^\mathrm{gwc} = \R^d_+$. 
Nonetheless, our construction remains well-defined even in that special case.

The formula for $\hat{\Phi}^{\mathrm{lwc}}(t;h)$ given by Lemma~\ref{lem:lwc-rescaling} makes it easy to compute the scores $S^i_{\mathrm{lwc}}(h) := \hat{\Phi}^{\mathrm{lwc}}(E^i,h)$ for all $i\in [n]$, and to evaluate the empirical quantile  $\hat{Q}^{\mathrm{lwc}}_{1-\alpha}(h):= \hat{Q}_{1-\alpha}(S^1_{\mathrm{lwc}}(h), \ldots, S^n_{\mathrm{lwc}}(h))$ used in the characterization of $\hat{C}^{\mathrm{lwc}}(X^{n+1};h)$ in~\eqref{eq:lwc-prediction-h}.
Under the working hypothesis $Y^{n+1} \in \mathcal{Y}^h$ and $E^{n+1} \in R^h$, it follows immediately from~\eqref{eq:lwc-transformation-bound} that 
$\hat{Q}^{\mathrm{lwc}}_{1-\alpha}(h)\ge \hat{Q}^{\mathrm{oracle}}_{1-\alpha}$ almost surely.
Therefore, this approach guarantees that  $\hat{C}^{\mathrm{lwc}}(X^{n+1};h)$  satisfies~\eqref{eq:lwc-local-containment} for all $h \in [n+1]^d$---this continues to hold even if $\mathcal{Y}^h = \emptyset$, because in that special case $\hat{C}^{\mathrm{lwc}}(X^{n+1};h) = \tilde{C}(X^{n+1})\cap\mathcal{Y}^h = \mathcal{Y}^h=\emptyset$. 
Algorithm~\ref{alg:local-bound} summarizes this construction. 

\begin{algorithm}[!htbp] 
\caption{Construction of local prediction sets under working hypothesis $Y^{n+1} \in \mathcal{Y}^h$} \label{alg:local-bound} 
\textbf{Input:} Calibration residuals $\{E^i\}_{i=1}^{n}$; target miscoverage level $\alpha\in(0,1)$; global bounds $\omega_1(\hat{Q}^{\mathrm{gwc}}_{1-\alpha}),\ldots,\omega_d(\hat{Q}^{\mathrm{gwc}}_{1-\alpha})$; calibration mean $\hat{\mu}$ and standard deviation\ $\hat{\sigma}$; index $h\in [n+1]^d$.
\begin{algorithmic}[1] 
\If{$\mathcal{Y}^h \neq \emptyset$} 
    \State Compute $S_{\mathrm{lwc}}^i(h) \gets \hat{\Phi}^{\mathrm{lwc}}(E^i,h)$ \textbf{for} $i \in [n]$ using $\hat{\Phi}^{\mathrm{lwc}} (\cdot,h)$ defined in~\eqref{eq:lwc-residual-transformation}.
    \State Compute $\hat{Q}^{\mathrm{lwc}}_{1-\alpha}(h) \gets \hat{Q}_{1-\alpha}(S_{\mathrm{lwc}}^1(h),\ldots,S_{\mathrm{lwc}}^n(h))$. 
    \State \Return $\omega_1(\hat{Q}^{\mathrm{lwc}}_{1-\alpha}(h)),\ldots,\omega_d(\hat{Q}^{\mathrm{lwc}}_{1-\alpha}(h))$ using $\omega$ defined in~\eqref{eq:key-ineq}.
\Else
\State \Return $\emptyset$.
\EndIf 
\end{algorithmic} 
\end{algorithm}

\subsubsection{Computational Shortcut for Rectangular Enclosure}

While the characterization provided above makes it easy to compute $\hat{C}^{\mathrm{lwc}}(X^{n+1};h)$ for any specific index $h \in [n+1]^d$, the problem remains that the expression for $\hat{C}^{\mathrm{lwc}}(X^{n+1})$ in~\eqref{eq:lwc-joint-union} involves an {\em exponential} number of such local prediction sets. Therefore, to obtain a practical method, we must now develop a shortcut that avoids full enumeration of all $\hat{C}^{\mathrm{lwc}}(X^{n+1};h)$.

Our solution begins by noting that $\hat{C}^{\mathrm{lwc}}(X^{n+1};h)$ from~\eqref{eq:lwc-prediction-h} can be re-written as:
\begin{align} \notag
  \hat{C}^\mathrm{lwc}(X^{n+1}; h) = \bigcap_{j=1}^d \left\{ y\in\R^d : E^{(h_j-1)}_j \leq E_j(X^{n+1},y) \leq B^h_j \,\right\} \subseteq \hat{C}^\mathrm{gwc}(X^{n+1}), \\
B^h_j :=
\begin{cases}
\min\left\{U^{h_j}_j,\; \omega_j(\hat{Q}^{\mathrm{lwc}}_{1-\alpha}(h))\right\}, 
& \text{if } U^{h_j}_j,\; \omega_j(\hat{Q}^{\mathrm{lwc}}_{1-\alpha}(h)) > E^{(h_j-1)}_j,\\[0.4em]
0, & \text{otherwise.}
\end{cases}
\quad \forall j \in [d].
\label{eq:rect-wise-bounds}
\end{align}

For every coordinate $j \in [d]$, define the $j$-th boundary
\begin{align} \label{eq:def-Lj}
  \mathcal{L}_j := \max_{\substack{h \in [n+1]^d}}B^h_j.
\end{align}
For example, in the two-dimensional illustration of Figure~\ref{fig:global-to-local}, the value of $\mathcal{L}_1$ represents the most extreme right-side endpoint of all shaded rectangles along the horizontal dimension, while $\mathcal{L}_2$ represents the most extreme upper endpoint of all shaded rectangles along the vertical dimension.
Therefore, $\mathcal{L}_1$ and $\mathcal{L}_2$ uniquely identify the two sides of the solid rectangle shown in Figure~\ref{fig:global-to-local}, which represents the prediction set $\hat{C}^\mathrm{lwc}(X^{n+1})$.
In general, we write
\[
\hat{C}^\mathrm{lwc}(X^{n+1}) := \left\{y \in \R^d: 0\leq E_j(X^{n+1},y) \leq \mathcal{L}_j\right\} 
\supseteq \bigsqcup_{h \in [n+1]^d}\hat{C}^\mathrm{lwc}(X^{n+1};h).
\]
Notably, this rectangular enclosure is typically very tight in practice, because the union of the local prediction sets on the right-hand-side above is itself very close to being a rectangle when the sample size is large enough.

The last remaining step to make our method practical is finding an efficient algorithm for computing the boundaries $\mathcal{L}_j$ for all $j \in [d]$, without requiring enumeration of all local prediction sets. 
Such a shortcut is made possible by the following result, which simplifies the definition of $\mathcal{L}_j $ in~\eqref{eq:def-Lj} from a maximum over $(n+1)^d$ distinct values to a maximum over an explicit, narrowed-down list of only $n+1$ relevant candidates.
Before stating this result, it is helpful to introduce some notation. 

For any $j \in [d]$, define the (random) \emph{mean index} $h_j^* \in [n+1]$ as that unique index such that $E^{(h^*_j-1)}_j \le \hat{\mu}_j \le E^{(h^*_j)}_j$; define also $h^* := (h^*_1, \ldots, h^*_d)$.
Then, for all $j\in[d]$, define 
\[
\mathrm{Row}_j(h^*):=\left\{h\in[n{+}1]^d : h_k = h^*_k \text{ for all } k\neq j \right\}, \quad\forall j \in [d].
\]
In words, $\mathrm{Row}_j(h^*)$ is a subset of $[n+1]^d$ containing exactly $n+1$ indices, defined as those coinciding with the mean index $h^*$ on all coordinates other than the $j$-th one.

\begin{lemma}
For any $j \in [d]$, the $j$-th boundary defined in~\eqref{eq:def-Lj} can be equivalently written as:
    \begin{align*}
      & \mathcal{L}_j = B^{h^{[j]}}_j, 
      & h^{[j]} := \arg\max_{h \in \mathrm{Row}_j(h^*)} B^h_j, 
    \end{align*}
    as long as $\mathcal{Y}^{h^*} \neq \emptyset$, where $h^{[j]} \in \mathrm{Row}_j(h^*)$ satisfies $B^{h^{[j]}}_j> 0$ and $B^{h}_j= 0$ for any $h \in \mathrm{Row}_j(h^*)$ with $h_j \geq h^{[j]}_j$. 
    Moreover, for every $j \in [d]$, if there exists $m\in\mathrm{Row}_j(h^*)$ with $m_j\geq h^*_j$ and $B^m_j=0$
    , then $B^h_j=0$ for all $h\in\mathrm{Row}_j(h^*)$ with $h_j \geq m_j$.
    \label{lem:reduction-search}
\end{lemma}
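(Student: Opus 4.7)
The plan is to exploit two structural facts visible from the closed form in Lemma~\ref{lem:lwc-rescaling}: \emph{(a)} the denominator $r_k(h_k)$ attains its global minimum over $h_k \in [n+1]$ at $h_k = h_k^*$, with $r_k(h_k) \geq r_k(h_k^*) = \hat{\sigma}_k$, and moreover $r_k(h_k)$ is non-decreasing as $h_k$ moves away from $h_k^*$ in either direction; \emph{(b)} the formula for $\hat{\Phi}^{\mathrm{lwc}}(t;h)$ couples $h$ and $t$ only through the coordinate-wise pairs $(t_k, r_k(h_k))$ inside a single maximum over $k$, with no cross-interaction across coordinates. Fact \emph{(a)} follows from observing that $\hat{\sigma}_k^2(z)$ is a strictly convex quadratic in $z$ minimized at $z = \hat{\mu}_k$; since $[E_k^{(h_k^*-1)}, E_k^{(h_k^*)}]$ contains $\hat{\mu}_k$, the minimum of $\hat{\sigma}_k(\cdot)$ over any relevant subinterval is attained at $\hat{\mu}_k$ or else at the endpoint closer to $\hat{\mu}_k$. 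Fact \emph{(b)} is immediate from Lemma~\ref{lem:lwc-rescaling}.

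First I would establish the reduction $\mathcal{L}_j = \max_{h \in \mathrm{Row}_j(h^*)} B^h_j$. Fix any $h \in [n+1]^d$ and define $h'$ by $h'_j = h_j$ and $h'_k = h_k^*$ for $k \neq j$, so $h' \in \mathrm{Row}_j(h^*)$. By \emph{(a)} and \emph{(b)}, each term inside the maximum defining $\hat{\Phi}^{\mathrm{lwc}}(t; h')$ dominates the corresponding term for $h$ (larger or equal for $k \neq j$, identical for $k = j$), hence $\hat{\Phi}^{\mathrm{lwc}}(t; h') \geq \hat{\Phi}^{\mathrm{lwc}}(t; h)$ pointwise in $t \in \mathbb{R}^d_+$. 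Specializing to $t = E^i$ for each $i$ and invoking monotonicity of the empirical quantile gives $\hat{Q}^{\mathrm{lwc}}_{1-\alpha}(h') \geq \hat{Q}^{\mathrm{lwc}}_{1-\alpha}(h)$; applying the non-decreasing link $\omega_j$ then yields $\omega_j(\hat{Q}^{\mathrm{lwc}}_{1-\alpha}(h')) \geq \omega_j(\hat{Q}^{\mathrm{lwc}}_{1-\alpha}(h))$. Since $U^{h_j}_j$ depends only on $h_j$ and therefore coincides for $h$ and $h'$, and since the positivity threshold in the definition of $B^h_j$ is either inherited from $h$ or trivially improved upon when $B^h_j = 0$, we conclude $B^{h'}_j \geq B^h_j$. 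Taking a supremum over $h$ gives the claimed identity.

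Next I would address the structural properties within $\mathrm{Row}_j(h^*)$. Along this row only $h_j$ varies, and fact \emph{(a)} gives that $r_j(h_j)$ is non-decreasing in $h_j$ for $h_j \geq h_j^*$, so $\hat{Q}^{\mathrm{lwc}}_{1-\alpha}(h)$ and hence $\omega_j(\hat{Q}^{\mathrm{lwc}}_{1-\alpha}(h))$ are non-increasing on this range, while $E_j^{(h_j-1)}$ is non-decreasing. For the monotonicity assertion, if $B^m_j = 0$ for some $m \in \mathrm{Row}_j(h^*)$ with $m_j \geq h_j^*$, the collapse occurs either because $U_j^{m_j} = \omega_j(\hat{Q}^{\mathrm{gwc}}_{1-\alpha}) \leq E_j^{(m_j-1)}$, which propagates to all $h_j \geq m_j$ through $E_j^{(h_j-1)} \geq E_j^{(m_j-1)}$, or because $\omega_j(\hat{Q}^{\mathrm{lwc}}_{1-\alpha}(m)) \leq E_j^{(m_j-1)}$, which propagates by chaining $\omega_j(\hat{Q}^{\mathrm{lwc}}_{1-\alpha}(h)) \leq \omega_j(\hat{Q}^{\mathrm{lwc}}_{1-\alpha}(m)) \leq E_j^{(m_j-1)} \leq E_j^{(h_j-1)}$; either way $B^h_j = 0$. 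Positivity of $B^{h^{[j]}}_j$ then follows by taking $h \in \mathrm{Row}_j(h^*)$ with $h_j = 1$ (so $E_j^{(0)} = 0$) to exhibit a positive value of $B^h_j$ under the assumption $\mathcal{Y}^{h^*} \neq \emptyset$ together with Assumption~\ref{eq:assumption-scores}, ensuring the argmax is not identically zero, and combining this with the contrapositive of the monotonicity just established.

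The principal obstacle I anticipate is a clean verification of fact \emph{(a)} across the case split in the definition of $r_k(h_k)$---the branch $E_k^{(h_k-1)} \leq \hat{\mu}_k < U_k^{h_k}$ versus its complement---especially at the junction $h_k = h_k^*$ and at the extreme indices $h_k \in \{1, n+1\}$, where one endpoint collapses to $0$ or to $\omega_k(\hat{Q}^{\mathrm{gwc}}_{1-\alpha})$. A secondary subtlety is establishing strict positivity of $B^{h^{[j]}}_j$, which requires ruling out $\hat{Q}^{\mathrm{lwc}}_{1-\alpha}(h) \leq -n/\sqrt{n+1}$ so that $\omega_j$ does not degenerate to $0$; this may call for a quantitative sign argument on the scalar scores $S_{\mathrm{lwc}}^i(h)$ in the spirit of Lemma~\ref{lem:finite-prediction-guarantee}.
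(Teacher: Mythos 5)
Your overall strategy coincides with the paper's: the reduction from $[n+1]^d$ to $\mathrm{Row}_j(h^*)$ via the pointwise domination $\hat{\Phi}^{\mathrm{lwc}}(t;h') \geq \hat{\Phi}^{\mathrm{lwc}}(t;h)$ (your $h'$ is exactly the paper's surface representative $h^*(j,h_j)$), followed by monotonicity of the empirical quantile and of $\omega_j$; and the two-case zero-propagation argument for the ``Moreover'' clause is identical to the paper's Part~II. Your facts \emph{(a)} and \emph{(b)} are correct and are precisely the structural properties the paper exploits, and your handling of the $B^h_j=0$ branch in the reduction is fine.

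There is, however, one genuine gap: you never establish that the argmax over $\mathrm{Row}_j(h^*)$ is the \emph{right-most nonzero} candidate, which is the content of the clause ``$B^{h^{[j]}}_j>0$ and $B^h_j=0$ for $h_j \geq h^{[j]}_j$'' and is what justifies the backward/binary searches in Algorithm~\ref{alg:shortcut}. This does not follow from the monotonicity you list: along the row, $U^{h_j}_j$ is non-decreasing while $\omega_j(\hat{Q}^{\mathrm{lwc}}_{1-\alpha}(h))$ is non-increasing (for $h_j\geq h^*_j$), so their minimum $B^h_j$ is not monotone a priori, and your zero-propagation only shows that the positive indices in $[h^*_j,n+1]$ form a prefix, not that the values increase toward the last positive index (nor does it say anything about indices below $h^*_j$). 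The paper closes this with a separate observation: $B^{h^*(j,k)}_j \leq U^k_j \leq E^{(k)}_j$, so whenever $B^{h^*(j,k+1)}_j \neq 0$ both $U^{k+1}_j$ and $\omega_j(\hat{Q}^{\mathrm{lwc}}_{1-\alpha}(h^*(j,k+1)))$ exceed $E^{(k)}_j$, giving $B^{h^*(j,k+1)}_j \geq E^{(k)}_j \geq B^{h^*(j,k)}_j$; hence any nonzero candidate dominates every candidate to its left, and the maximum is attained at the last nonzero one. You should add this chain. Your remaining worry about strict positivity of $B^{h^{[j]}}_j$ (ruling out degenerate $\omega_j$) is a real subtlety, but the paper's own proof does not resolve it either, so I would not count it against you beyond noting that $\mathcal{Y}^{h^*}\neq\emptyset$ at least forces $\omega_j(\hat{Q}^{\mathrm{gwc}}_{1-\alpha})>0$.
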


Under the regularity condition $\mathcal{Y}^{h^*} \neq \emptyset$, which can be immediately verified and typically holds in practice under all but the most extreme scenarios, Lemma~\ref{lem:reduction-search} gives a very fast algorithm for computing $\mathcal{L}_j$ separately for each coordinate $j \in [d]$.
In particular, one can conduct a backward search over $h \in \mathrm{Row}_j(h^*)$ with $h_j = n+1,\ldots,1$, evaluating each $B^h_j$ using~\eqref{eq:rect-wise-bounds} and stopping at the first non-zero value.
Moreover, if $h^{[j]}_j \geq h^*_j$, which can be determined by simply checking whether $B^{h^*}_j > 0$,
then we can find $h^{[j]}_j$ even more effectively through a binary search that checks whether $B^m_j=0$ at each intermediate split step $m \in \mathrm{Row}_j(h^*)$ with $m_j \in [h^*_j,...,n+1]$.

The special case $\mathcal{Y}^{h^*} \neq \emptyset$ corresponds to an extreme situation where even the conservative GWC prediction set $\hat{C}^\mathrm{gwc}(X^{n+1})$ does not cover the empirical mean of the calibration residuals. Typically, this should not occur at moderate miscoverage levels $\alpha$, unless the data are extremely heavy-tailed. Such cases can be handled by simply falling back to the conservative GWC prediction set $\hat{C}^\mathrm{gwc}(X^{n+1})$ itself, which we know to have valid marginal coverage. In conclusion, the procedure described above, summarized in Algorithm~\ref{alg:shortcut}, produces:
\begin{align} \label{eq:tscp-shortcut}
  \hat{C}(X^{n+1}) = \begin{cases}
  \left\{y \in \R^d: 0\leq E_j(X^{n+1},y) \leq \mathcal{L}_j\right\}, &\text{if }\mathcal{Y}^{h^*} \neq \emptyset, \\
    \hat{C}^\mathrm{gwc}(X^{n+1}), &\text{otherwise.}
\end{cases}
\end{align}
\begin{theorem} \label{thm:TSCP-coverage}
Assume $(E^1,\ldots,E^{n+1})$ are exchangeable and Assumption~\ref{eq:assumption-scores} holds.
Then, $\hat{C}(X^{n+1})$ constructed by Algorithm~\ref{alg:shortcut} satisfies: $\mathbb{P}[ Y^{n+1} \in \hat{C}(X^{n+1}) ] \ge 1 - \alpha$.
\end{theorem}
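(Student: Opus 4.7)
The plan is to reduce the claim to Proposition~\ref{prop:lwc-approximation}, which already asserts marginal coverage for the set $\hat{C}^{\mathrm{lwc}}(X^{n+1})$ defined by the union-and-rectangular-enclosure construction in~\eqref{eq:lwc-joint-union}, provided every local prediction set $\hat{C}^{\mathrm{lwc}}(X^{n+1};h)$ satisfies the containment property~\eqref{eq:lwc-local-containment}. The latter condition has already been verified in Section~\ref{sec:method-lwc} through the inequality~\eqref{eq:lwc-transformation-bound}. Hence it suffices to show that the algorithmic output defined in~\eqref{eq:tscp-shortcut} satisfies $\hat{C}(X^{n+1}) \supseteq \hat{C}^{\mathrm{lwc}}(X^{n+1})$ almost surely, so that the coverage bound passes through as a simple monotonicity statement.

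I would then split the analysis along the two branches of~\eqref{eq:tscp-shortcut}. On the event $\{\mathcal{Y}^{h^*} \neq \emptyset\}$, Lemma~\ref{lem:reduction-search} guarantees that $\mathcal{L}_j = \max_{h \in [n+1]^d} B^h_j$ for each coordinate $j \in [d]$. In view of~\eqref{eq:rect-wise-bounds}, this is precisely the tightest coordinate-wise upper endpoint across the entire family of local prediction sets $\hat{C}^{\mathrm{lwc}}(X^{n+1}; h)$, so the rectangle $\{ y \in \mathbb{R}^d : 0 \leq E_j(X^{n+1},y) \leq \mathcal{L}_j \}$ coincides exactly with the rectangular enclosure $\hat{C}^{\mathrm{lwc}}(X^{n+1})$ prescribed by~\eqref{eq:lwc-joint-union}.

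On the complementary event $\{\mathcal{Y}^{h^*} = \emptyset\}$, the algorithm falls back to $\hat{C}^{\mathrm{gwc}}(X^{n+1})$. Because $\{\mathcal{Y}^h\}_{h \in [n+1]^d}$ partitions $\hat{C}^{\mathrm{gwc}}(X^{n+1})$, every local prediction set satisfies $\hat{C}^{\mathrm{lwc}}(X^{n+1};h) \subseteq \mathcal{Y}^h \subseteq \hat{C}^{\mathrm{gwc}}(X^{n+1})$ by~\eqref{eq:lwc-local-containment}. Since $\hat{C}^{\mathrm{gwc}}(X^{n+1})$ is itself a rectangle, any rectangular enclosure of the union of these local sets remains contained in it, and therefore $\hat{C}(X^{n+1}) = \hat{C}^{\mathrm{gwc}}(X^{n+1}) \supseteq \hat{C}^{\mathrm{lwc}}(X^{n+1})$ in this case as well.

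Combining the two cases yields $\hat{C}(X^{n+1}) \supseteq \hat{C}^{\mathrm{lwc}}(X^{n+1})$ almost surely, so the desired inequality $\mathbb{P}[Y^{n+1} \in \hat{C}(X^{n+1})] \geq 1-\alpha$ follows immediately from Proposition~\ref{prop:lwc-approximation}. The only delicate point I would double-check in writing is the invocation of Lemma~\ref{lem:reduction-search}: the lemma collapses the apparently exponential maximum in~\eqref{eq:def-Lj} to a search over $\mathrm{Row}_j(h^*)$, and its conclusion is precisely what is needed to identify the algorithmic boundaries $\mathcal{L}_j$ with the coordinate-wise extent of the full union $\bigsqcup_{h \in [n+1]^d} \hat{C}^{\mathrm{lwc}}(X^{n+1};h)$. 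Apart from this step, which has already been proved separately, the argument is a short bookkeeping combination of previously established results and does not require any new probabilistic reasoning.
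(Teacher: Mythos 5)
Your proof is correct and follows essentially the same route as the paper's: a case split along the two branches of~\eqref{eq:tscp-shortcut}, identifying the main branch with $\hat{C}^{\mathrm{lwc}}(X^{n+1})$ (via Lemma~\ref{lem:reduction-search}) and invoking Proposition~\ref{prop:lwc-approximation}. The only cosmetic difference is in the fallback branch, where the paper cites Theorem~\ref{thm:TSCP-GWC-coverage} directly for the coverage of $\hat{C}^{\mathrm{gwc}}(X^{n+1})$, whereas you derive it from Proposition~\ref{prop:lwc-approximation} through the containment $\hat{C}^{\mathrm{lwc}}(X^{n+1}) \subseteq \hat{C}^{\mathrm{gwc}}(X^{n+1})$; both justifications are already established in the paper.
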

\begin{proof}
Using the characterization of $\hat{C}(X^{n+1})$ in~\eqref{eq:tscp-shortcut}, the proof follows immediately by combining Proposition~\ref{prop:lwc-approximation}, which tells us that $\mathbb{P}[ Y^{n+1} \in \hat{C}^{\mathrm{lwc}}(X^{n+1}) ] \ge 1 - \alpha$, and Theorem~\ref{thm:TSCP-GWC-coverage}, which tells us that $\mathbb{P}[ Y^{n+1} \in \hat{C}^{\mathrm{gwc}}(X^{n+1}) ] \ge 1 - \alpha$.
\end{proof}

Theorem~\ref{thm:TSCP-coverage} guarantees that our prediction sets $\hat{C}(X^{n+1})$ defined in~\eqref{eq:tscp-shortcut} have valid finite-sample marginal coverage above the target level $1-\alpha$.
Moreover, these prediction sets have an interpretable rectangular shape and are typically much tighter than $\hat{C}^\mathrm{gwc}(X^{n+1})$.

\begin{algorithm}[!htbp]
\caption{Transductively Standardized Conformal Prediction (TSCP)}
\label{alg:shortcut}
\textbf{Input:} Calibration residuals $\{E^i\}_{i=1}^{n}$; target miscoverage level $\alpha\in(0,1)$; test input $X^{n+1}$.
\begin{algorithmic}[1]
\State Compute $\hat{\mu}, \hat{\sigma}$ using~\eqref{eq:plug-in-mean-std}.
\State Compute global bounds $\omega_1(\hat{Q}^{\mathrm{gwc}}_{1-\alpha}),\ldots,\omega_d(\hat{Q}^{\mathrm{gwc}}_{1-\alpha})$ using TSCP-GWC. \Comment{Algorithm~\ref{alg:global}}.
\State Sort $\{E^i_j\}_{i=1}^n$ for each $j\in[d]$ to obtain order statistics $E^{(k)}_j$. 
\State Compute $\mathcal{Y}^{h^*}$ using~\eqref{eq:rectangle-explicit}.
\If{$\mathcal{Y}^{h^*}=\emptyset$}
    \State \Return $\displaystyle \hat{C}(X^{n+1}) 
    \gets \left\{y \in \R^d: 0\leq E_j(X^{n+1},y) \leq \omega_j(\hat{Q}^{\mathrm{gwc}}_{1-\alpha}), \quad\forall j \in [d]\right\}$.
\Else
    \For{$j=1$ to $d$}
        \State Compute $\omega_j(\hat{Q}^{\mathrm{lwc}}_{1-\alpha}(h^*))$ using Algorithm~\ref{alg:local-bound} and $B^{h^*}_j$ using~\eqref{eq:rect-wise-bounds}.
        \If{$B^{h^*}_j = 0$}
            \State Backward search over $h_j = h^*_j-1,\ldots, 1$ to find $h^{[j]}$.\hfill
            \Comment{Algorithm~\ref{alg:backward}}
        \Else 
            \State Binary search over $h_j = h^*_j,\ldots, n{+}1$ to find $h^{[j]}$.\hfill
            \Comment{Algorithm~\ref{alg:binary}}
        \EndIf
        \State Update $\mathcal{L}_j \gets B^{h^{[j]}}_j$.
    \EndFor
    \State \Return $\displaystyle \hat{C}(X^{n+1}) 
    \gets \left\{y \in \R^d: 0\leq E_j(X^{n+1},y) \leq \mathcal{L}_j, \quad\forall j \in [d]\right\}$.
\EndIf
\end{algorithmic}
\end{algorithm}

The computational complexity of Algorithm~\ref{alg:shortcut} ranges from $\mathcal{O}(d^2 n \log n)$ in the best case, when binary searches can be used for all coordinates, to $\mathcal{O}(d^2 n^2)$ in the worst case, when all coordinates require backward searches; see Appendix~\ref{app:alg-and-cost} for details. Even in the worst case, a cost of $\mathcal{O}(d^2 n^2)$ is typically not prohibitive in practice: calibration sample sizes $n$ for conformal prediction are usually in the hundreds to low thousands, and beyond that additional samples are often better spent improving model training.

In addition, the computation of the distinct boundaries $\mathcal{L}_1,\ldots,\mathcal{L}_d$ is trivially parallelizable, and the entire procedure need not be repeated across different test inputs, since $\mathcal{L}_1,\ldots,\mathcal{L}_d$ are independent of $X^{n+1}$---remarkably, we have arrived through {\em transductive} logic at a method that is effectively {\em inductive}. Therefore, applying Algorithm~\ref{alg:shortcut} to $m$ different test points using the same calibration data set has total cost ranging from $\mathcal{O}(d^2 n \log n + dm)$ to $\mathcal{O}(d^2 n^2 + dm)$ for typical choices of generalized residuals (Section~\ref{sec:residuals}).

\section{Numerical Experiments}\label{sec:experiments}

\subsection{Methods and Metrics}

We provide a comprehensive empirical validation of the proposed \emph{TSCP} method (Algorithm~\ref{alg:shortcut}) using simulated and real data. 
The performance of TSCP is compared to that of three representative benchmark approaches: (i) \emph{Unscaled Max}, a direct extension of split conformal inference that applies the $\ell_\infty$ norm to the residuals, summarized in Algorithm~\ref{alg:unscaled}; (ii) \emph{Point CHR}, a data-splitting variant recently proposed by \citet{sampson2024conformal}, which estimates per-target out-of-sample points using a held-out calibration set and adjusts them to form axis-aligned rectangular sets; (iii) \emph{Emp.~Copula}, the first distribution-free copula-based method introduced in \citet{messoudi2021copula}. 

The results of additional experiments comparing TSCP to different approaches are in Appendix~\ref{app:additional-experiments}.
We do not include in this empirical study alternative copula-based methods---e.g., vine copulas \citep{pmlr-v258-park25c}---because their main limitations relative to our approach are the same as those of the \emph{Emp.~Copula} benchmark.
To enable meaningful direct comparisons, we also omit approaches that do not produce rectangular prediction sets.

For simplicity, we apply all methods using the same conditional mean-regression model $\hat{f}$ and standard absolute residuals; i.e., $E_j(X,Y)=|\hat{f}_j(X)-Y_j|$ for all $j\in [d]$. Although our method can also be seamlessly applied using different choices of models and residuals, this is a relatively common choice previously employed in several related works \citep{sampson2024conformal, pmlr-v258-park25c, dheur2025unified}.

The conformal prediction sets produced by the different methods are compared based on two key performance metrics: average coverage and size (volume). Since all methods rely on the same underlying model, it suffices to measure the size of the prediction sets in residual space. Specifically, we report
\begin{align*}
    \text{(Test) Coverage} &:= \frac{1}{N}\sum^N_{l=1} \frac{\text{\# of included test points}}{\text{\# of total test points}},\\
    \text{(Residual space) Volume} &:= \frac{1}{N}\sum^N_{l=1} \frac{1}{2^d} \text{Vol}(\text{prediction set}),
\end{align*}
where both metrics are averaged over $N$ repetitions of each experiment, corresponding to independent data sets. The constant factor $2^{-d}$ in the volume definition facilitates comparison across experiments with different output dimensions.

\subsection{Experiments with Simulated Data}

We simulate data with features from a $10$-dimensional standard normal distribution with independent components, and 10-dimensional outcomes
\begin{equation}
    Y_j \mid X \sim f_j(X) + \epsilon_j, \quad \forall j \in [10],
    \label{eq:data-generator}
\end{equation}
where $f_j(X) = \sum^{10}_{i=1}\xi_i X_i$ such that $\xi_1,\ldots,\xi_{10} \sim \text{Unif}(-10, 10)$, and $\epsilon_j$ is independent random noise for each $j \in [d]$.
We consider six settings corresponding to different noise distributions; the results for two are presented here while those corresponding to other settings, which lead to qualitatively similar conclusions, are summarized in Appendix~\ref{app:simulations}.

In each setting, we simulate a training data set containing $7200$ observations to fit a linear regression model $\hat{f}$ using ordinary least squares, a calibration set $\dataset_{\mathrm{cal}}$ of size $|\dataset_{\mathrm{cal}}| \in \{30, 50, 100, 300, 500\}$, and a test set $\dataset_{\mathrm{test}}$ of size $800$, to construct prediction sets at level $\alpha=0.1$. All results are averaged over $N=200$ independent repetitions.

Figure~\ref{fig:gaussian-heter} summarizes the results obtained in the first setting, where the noise distribution is Gaussian with heterogeneous variance across dimensions:
\[
\epsilon^{\mathrm{heter}}_j \sim \mathcal{N}\left(0,\; (10 - j + 1)^2\right), \quad \forall j \in [10].
\]
\begin{figure}[!htbp]
    \centering
    \includegraphics[width=0.9\textwidth]{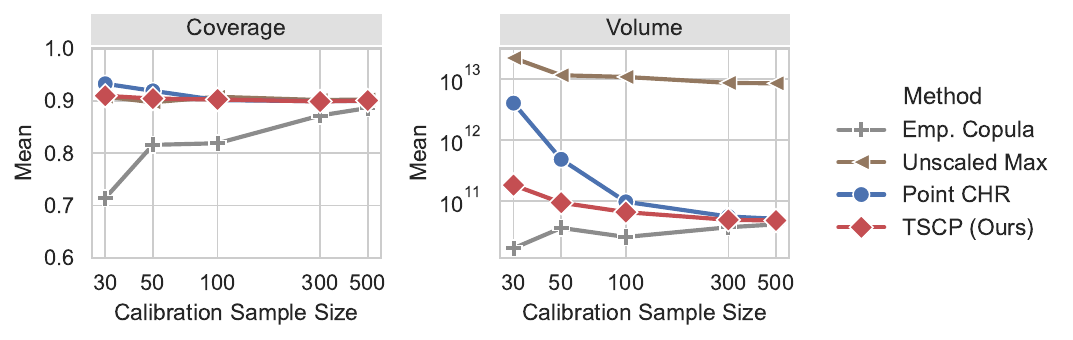} 
\caption{Average performance of the proposed TSCP method and alternative conformal prediction approaches on simulated data with $d=10$ outcome variables and heterogeneous noise variances. The target coverage level is $90\%$. TSCP consistently attains near-nominal coverage while producing the smallest prediction sets on average. Numerical results and corresponding standard deviations are reported in Table~\ref{tab:gaussian-heter}. }
     \label{fig:gaussian-heter}
\end{figure}

These results show that our method performs well across both small and large sample sizes, always achieving valid coverage at level $1-\alpha$ (as predicted by the theory) but not excessively above it, using relatively tight prediction sets.
The \emph{Point CHR} approach yields substantially larger, and therefore less informative, prediction sets—especially in small samples (e.g., $n < 100$)—due to the inefficiency introduced by its additional sample split. Even for moderately large samples, our method continues to perform noticeably better, although this difference is difficult to discern in Figure~\ref{fig:gaussian-heter} because of the log–log scale; see Table~\ref{tab:gaussian-heter} for a more detailed numerical summary.
The \emph{Emp.~Copula} approach clearly fails to achieve finite-sample coverage, since it uses the same calibration set both to estimate the copula model and to calibrate the prediction sets based on the resulting nonconformity scores. While this issue could be addressed through additional data splitting, doing so would come with the same efficiency disadvantages of \emph{Point CHR}. This limitation is shared by related copula-based approaches that we do not explicitly include in our comparisons.
Finally, the \emph{Unscaled Max} approach produces much larger (and effectively uninformative) prediction sets in both small and large samples, as it is negatively affected by the unequal variances of the noise distribution across different dimensions.

Figure~\ref{fig:gaussian-homo} reports results from analogous experiments in which the noise distribution has equal variance across dimensions, with $\epsilon^{\mathrm{homo}}_j \sim \mathcal{N}(0, 1)$ for all $j \in [10]$. This setting strongly favors the \emph{Unscaled Max} approach; nevertheless, our TSCP still performs as a close second.
In practice, however, noise levels are rarely known \emph{a priori}, making the adaptivity and robustness of TSCP generally desirable.

\begin{figure}[!htbp]
    \centering
    \includegraphics[width=0.9\textwidth]{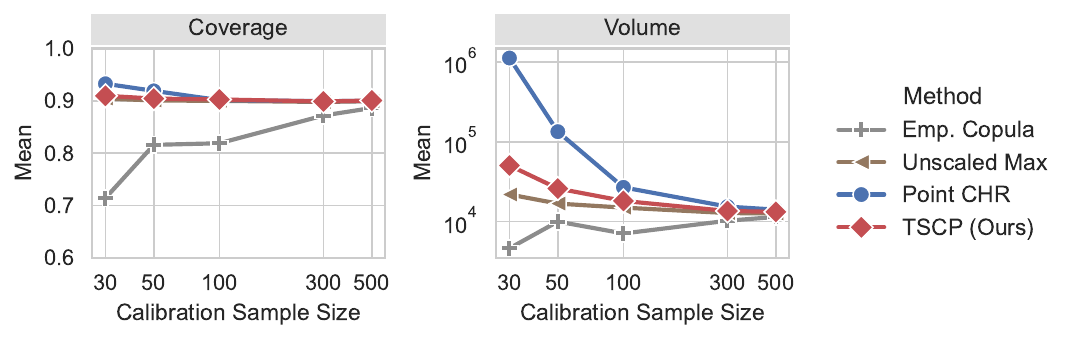}
\caption{Average performance of the proposed TSCP method and alternative conformal prediction approaches in numerical experiments similar to those of Figure~\ref{fig:gaussian-heter}. Here, the outcome variables have homogeneous variance across their $d=10$ dimensions, which gives an advantage to the \emph{Unscaled Max} approach. The target coverage level is $0.9$. TSCP consistently attains near-nominal coverage while producing the smallest prediction sets on average. Numerical results and corresponding standard deviations are reported in Table~\ref{tab:gaussian-homo}.}
     \label{fig:gaussian-homo}
\end{figure}

The results of additional numerical experiments are reported in Appendices~\ref{app:ours_simulation}--\ref{app:hevy_tailed_simulation}. In Appendix~\ref{app:ours_simulation}, we compare TSCP with several alternative approaches described in Sections~\ref{sec:preliminaries} and~\ref{sec:methods}, including the naive plug-in rescaling procedure (Algorithm~\ref{alg:std-plug-in}), the data-splitting variant (Algorithm~\ref{alg:std-ds}), the conservative global worst-case approach (Algorithm~\ref{alg:global}), and the union-based construction without rectangular enclosure (Section~\ref{sec:method-lwc}). These experiments highlight the favorable balance between statistical–computational trade-offs achieved by TSCP. We further examine scalability with respect to the outcome dimension and show that Algorithm~\ref{alg:shortcut} remains computationally efficient in regimes where union-based constructions become prohibitive. In Appendix~\ref{app:hevy_tailed_simulation}, we investigate the robustness of TSCP under heavy-tailed noise and demonstrate that our method continues to achieve near-nominal coverage even in extreme settings where the population variance of the residuals is infinite.

\subsection{Experiments with Real Data}

We now evaluate all methods on six real datasets. 
The first four, \emph{rf2} \citep{grigorios_tsoumakas_eleftherios_spyromitros-xioufis_william_groves_ioannis_vlahavas_2022}, \emph{scm1d} \citep{scm1d}, \emph{scm20d} \citep{scm20d}, and \emph{stock} \citep{stock_portfolio_performance_390}, are borrowed from related prior works \citep{pmlr-v258-park25c, dheur2025unified}. 
The other two datasets, \emph{student} \citep{cortez2008student} and \emph{energy} \citep{tsanas2012energy}, are obtained from the UCI repository and serve to illustrate the small-sample performance of TSCP.

For each dataset, we perform $200$ random train/calibration/test splits using a $75\%/5\%/20\%$ ratio. We fit a multivariate sparse (lasso) regression model with regularization parameter $\lambda = 0.0001$ on \emph{stock}, and random forest regression models on all other datasets; see the GitHub software repository linked in the Software Availability section for additional reproducibility details.
Table~\ref{tab:real} reports the average coverage and volume, along with one standard deviation.
As shown in Table~\ref{tab:real}, the empirical behavior of all methods on real data largely mirrors the trends observed in the simulation studies. Among methods with theoretical coverage guarantees, TSCP achieves the smallest prediction-set volume on the \emph{stock}, \emph{scm1d}, \emph{scm20d}, and \emph{energy} datasets. On \emph{rf2}, TSCP is outperformed by \emph{Point CHR}, likely due to the presence of a small number of influential outliers. On \emph{student}, \emph{Unscaled Max} yields smaller volumes, consistent with the approximately homogeneous noise across targets (student GPAs at different grade levels).
The \emph{Point CHR} approach produces an uninformative, \emph{infinite} prediction set on the \emph{stock} dataset, because its additional data split makes the effective calibration set extremely small.
Finally, \emph{Emp.~Copula} generally fails to achieve the desired coverage level, consistent with its lack of finite-sample guarantees.
Overall, the results show that TSCP provides a reasonable trade-off between coverage and volume, excelling especially when calibration data are limited or when noise scales differ significantly across targets.

\begin{table}[!htbp]
\centering
\captionsetup[subtable]{justification=centering}
\small
\begin{subtable}{\textwidth}
\centering
\begin{tabular}[t]{l c c c c }
\toprule
\textbf{Method} & \multicolumn{2}{c}{Stock $(d=6,\, n=15)$} & \multicolumn{2}{c}{rf2 $(d=8,\, n=384)$} \\
\cmidrule(lr){2-3}\cmidrule(lr){4-5}
& Coverage & Volume & Coverage & Volume \\
\midrule
Unscaled Max & 0.940(0.057) & 6.03e-02(1.57e-01) & 0.900(0.016) & 4.14e+03(2.90e+03) \\
Emp. copula & \color{red}0.727(0.113) & 1.01e-05(9.80e-06) & \color{red}{0.893(0.016)} & 5.35e+01(4.69e+01) \\
Point CHR & 1.000(0.000) & \text{inf} & 0.901(0.022) & \textbf{{6.85e+01(7.55e+01)}} \\
TSCP & 0.956(0.059) & \textbf{{1.86e-03(8.55e-03)}} & 0.901(0.017) & 5.42e+02(1.50e+03) \\
\midrule

\end{tabular}
\end{subtable}

\small
\begin{subtable}{\textwidth}
\centering
\begin{tabular}[t]{l c c c c }
\midrule
\textbf{Method} & \multicolumn{2}{c}{scm1d $(d=16,\, n=490)$} & \multicolumn{2}{c}{scm20d $(d=16,\, n=448)$} \\
\cmidrule(lr){2-3}\cmidrule(lr){4-5}
& Coverage & Volume & Coverage & Volume \\
\midrule
Unscaled Max & 0.900(0.016) & 2.74e+39(4.02e+39) & 0.901(0.016) & 2.70e+40(2.92e+40) \\
Emp. copula & \color{red}{0.893(0.016)} & 1.09e+39(1.34e+39) & \color{red}{0.892(0.017)} & 1.178e+40(1.190e+40) \\
Point CHR & 0.905(0.020) & 2.84e+39(5.43e+39) & 0.902(0.023) & 3.00e+40(8.00e+40) \\
TSCP & 0.901(0.015) & \textbf{{1.52e+39(2.61e+39)}} & 0.901(0.016) & \textbf{{1.53e+40(1.50e+40)}} \\
\midrule

\end{tabular}
\end{subtable}

\small
\begin{subtable}{\textwidth}
\centering
\begin{tabular}[t]{l c c c c }
\midrule

\textbf{Method} & \multicolumn{2}{c}{Energy $(d=2,\, n=38)$} & \multicolumn{2}{c}{Student $(d=3,\, n=32)$} \\
\cmidrule(lr){2-3}\cmidrule(lr){4-5}
& Coverage & Volume & Coverage & Volume \\
\midrule
Unscaled Max & 0.917(0.049) & 1.58e+01(6.80e+00) & 0.904(0.055) & \textbf{{1.51e+02(1.27e+02)}} \\
Emp. copula & \color{red}0.886(0.061) & 5.41e+00(4.49e+00) & \color{red}0.861(0.073) & 1.08e+02(7.19e+01) \\
Point CHR & 0.899(0.069) & 8.81e+00(2.42e+01) & 0.939(0.057) & 6.89e+02(1.07e+03) \\
TSCP & 0.922(0.048) & \textbf{{6.95e+00(4.43e+00)}} & 0.912(0.056) & 1.98e+02(1.77e+02) \\

\bottomrule
\end{tabular}
\end{subtable}

\caption{Performance of the proposed TSCP method and alternative conformal prediction approaches on real datasets. Results are averaged over $200$ random splits of the data into training/calibration/and test subsets, with one standard deviation reported in parentheses. The target coverage level is $0.9$. Coverage values below the target are highlighted in \textcolor{red}{red}, and the smallest volume among methods achieving valid coverage is highlighted in \textbf{bold}.}
\label{tab:real}
\end{table}

To emphasize their interpretability, Table~\ref{tab:toy} displays the rectangular prediction sets produced by different methods for two representative test points from the \emph{energy} dataset, where $d=2$. In this setting, rectangular prediction sets consist of pairs of simultaneous prediction intervals that aim to jointly contain the true outcomes with probability $90\%$.

\begin{table}[!htbp]
\centering
\captionsetup[subtable]{justification=centering}
\small
\begin{subtable}{\linewidth}
\centering
\begin{tabular}[t]{l c c c c}
\toprule
\textbf{Test 1/Methods} 
& Outcome 1 & Outcome 2 & Volume & Ground Truth \\
\midrule
Unscaled Max & [31.70, 38.94] & \textbf{[34.74, 41.98]} & 13.09 & \multirow{4}{*}{(35.40, 39.22)} \\
Emp. copula & \textbf{[34.36, 36.28]} & \textbf{[34.74, 41.98]} & {3.47} &\\
Point CHR & [34.29, 36.36] & [31.45, 45.27] & 7.14 &\\
TSCP & [34.28, 36.36] & [33.63, 43.08] & 4.93 &\\
\midrule
\end{tabular}
\end{subtable}

\small
\begin{subtable}{\linewidth}
\centering
\begin{tabular}[t]{l c c c c}
\midrule
\textbf{Test 2/Methods} 
& Outcome 1 & Outcome 2 & Volume & Ground Truth \\
\midrule
Unscaled Max & [25.44, 32.67] & \color{red}[29.34, 36.58] & 13.09 & \multirow{4}{*}{(29.88, 28.31)} \\
Emp. copula & \textbf{[28.10, 30.02]} & \color{red}[29.34, 36.58] & 3.47 & \\
Point CHR & [28.02, 30.09] & [26.05, 39.87] & 7.14 & \\
TSCP & [28.01, 30.10] & \textbf{[28.24, 37.69]} & {4.93} & \\
\bottomrule
\end{tabular}
\end{subtable}

\caption{Anecdotal illustration of conformal prediction sets of rectangular shape for two test points from the \emph{energy} dataset. In this case, the prediction sets can be reported as pairs of joint prediction intervals for the two outcome dimensions. Volumes are computed in residual space. Intervals that fail to cover the corresponding true outcomes are highlighted in \textcolor{red}{red}; smallest interval that covers the ground truth is highlighted in \textbf{bold}. This example is included for illustration only: the coverage guarantees of conformal prediction hold on average, not for individual test points.}
\label{tab:toy}
\end{table}

\section{Discussion and Future Work}
\label{sec:discussion}

The core idea underlying the method presented in this paper is the explicit standardization of residuals, which contributes both to the transparency and to the computational efficiency of the resulting prediction sets. As with other standardization-based methods, however, this strategy introduces some sensitivity to outliers. This behavior is visible in the numerical experiments based on the \emph{rf2} dataset, where a small number of influential observations can substantially expand the resulting prediction sets. An important direction for future work is therefore the development of robust extensions of our method, for example through rescaling based on median absolute deviation or quantile-based normalizations, while preserving the exchangeability required for conformal validity.

Additional directions for future work include allowing for negative residuals. Such an extension could, for example, enable shrinkage of baseline prediction sets when starting from quantile regression models that already exhibit conservative coverage. It would also be of interest to adapt the proposed approach to settings with non-numerical outcomes, such as categorical or ordinal responses. Finally, our method could be extended to handle partly observed outcomes \citep{braun2025multivariate}.

\section*{Software Availability}

A software implementation of the methods described in this paper, along with code necessary to reproduce the numerical experiments, is available online at \url{https://github.com/OTheMagic/multi-target-scaling.git}.

\acks{The authors thank Rundong Ding and Dr.~Yizhe Zhu for helpful suggestions during an oral presentation of an earlier version of this work. M.S.~was partly supported by a NSF grant DMS 2210637, a USC-Capital One CREDIF award, and a Google Research Scholar award.}


\bibliography{biblio} 

\newpage

\appendix
\setcounter{algorithm}{0}
\setcounter{figure}{0}
\setcounter{equation}{0}

\renewcommand{\thetheorem}{S\arabic{theorem}}
\renewcommand{\thelemma}{S\arabic{lemma}}
\renewcommand{\theequation}{S\arabic{equation}}
\renewcommand{\thefigure}{S\arabic{figure}}
\renewcommand{\thetable}{S\arabic{table}}
\renewcommand{\thealgorithm}{S\arabic{algorithm}}

\begin{center}
    \Large \textbf{Appendix: Algorithms, Proofs, and Numerical Results}
\end{center}

\section{Additional Algorithmic Details} \label{app:algorithms}

\subsection{Implementation of Alternative Methods}

\begin{algorithm}[!htb]
    \caption{Separate Prediction Intervals with Bonferroni Correction}
    \label{alg:bonferroni}
    \textbf{Input:} Calibration residuals $\{E^i\}_{i=1}^{n}$; target miscoverage level $\alpha\in(0,1)$; test input $X^{n+1}$.
    \begin{algorithmic}[1]
        \State Compute $\alpha^* \gets \alpha/d$.
        \For{$j=1$ to $d$}
        \State Compute $\hat{Q}_{1-\alpha^*,j} \gets \hat{Q}_{1-\alpha^*}(E^1_j,\ldots,E^n_j)$.
        \EndFor
        \State \Return $\hat{C}^{\mathrm{Bonf.}}(X^{n+1}) \gets \left\{y \in \R^d: 0 \leq E_j(X^{n+1},y) \leq \hat{Q}_{1-\alpha^*,j}, \forall j \in [d]\right\}$.
    \end{algorithmic}
\end{algorithm}

\begin{algorithm}[!htb]
    \caption{Prediction Intervals via Dimension Reduction with $\ell_\infty$-Norm (Unscaled)}
    \label{alg:unscaled}
    \textbf{Input:} Calibration residuals $\{E^i\}_{i=1}^{n}$; target miscoverage level $\alpha\in(0,1)$; test input $X^{n+1}$.
    \begin{algorithmic}[1]
        \State Compute $S^i_{\mathrm{unscaled}} \gets \|E^i\|_\infty$ for $i = 1,\ldots,n$.
        \State Compute $\hat{Q}^{\mathrm{unscaled}}_{1-\alpha} \gets \hat{Q}_{1-\alpha}(S^1_{\mathrm{unscaled}},\dots, S^n_{\mathrm{unscaled}})$
        \State \Return $\hat{C}^{\mathrm{unscaled}}(X^{n+1}) \gets \left\{y \in \R^d: 0 \leq E^{n+1}_j(y) \leq \hat{Q}^{\mathrm{unscaled}}_{1-\alpha}, \forall j \in [d]\right\}$.
    \end{algorithmic}
\end{algorithm}

\begin{algorithm}[!htb]
    \caption{Heuristic Plug-In Standardization}
    \label{alg:std-plug-in}
    \textbf{Input:} Calibration residuals $\{E^i\}_{i=1}^{n}$; target miscoverage level $\alpha\in(0,1)$; test input $X^{n+1}$.
    \begin{algorithmic}[1]
        \For{$j=1$ to $d$}
            \State Compute the location and scale parameters
            \begin{align*}
            & \hat{\mu}^{\mathrm{pi}}_j \gets \frac{1}{n}\sum_{i=1}^{n} E^i_j,
              & \hat{\sigma}^{\mathrm{pi}}_j \gets \sqrt{\frac{1}{n-1}\sum_{i=1}^{n}(E^i_j-\hat{\mu}^{\mathrm{pi}}_j)^2}.
            \end{align*}
            \EndFor
        \State Compute $S^i_{\mathrm{pi}} \gets \Phi(E^i; \hat{\mu}_j^\mathrm{pi},\hat{\sigma}_j^\mathrm{pi})$ for $i \in [n]$, using $\Phi$ defined in~\eqref{eq:residual-transformation}.
        \State Compute $\hat{Q}^\mathrm{pi}_{1-\alpha}\gets \hat{Q}_{1-\alpha}(S^1_\mathrm{pi},\dots, S^n_\mathrm{pi}).$
        \State Compute 
            \begin{align*}
\hat{C}^{\mathrm{pi}}(X^{n+1}) \gets \left\{y \in \R^d: 0 \leq E_j(X^{n+1}, y) \leq \hat{Q}^\mathrm{pi}_{1-\alpha}\cdot \hat{\sigma}_j^\mathrm{pi} + \hat{\mu}_j^\mathrm{pi}, \forall j \in [d]\right\},
            \end{align*}
            using either~\eqref{eq:pred-abs-res} (for absolute regression residuals) or~\eqref{eq:pred-abs-qr} (for quantile regression residuals).
            \State \Return Prediction set $\hat{C}^{\mathrm{pi}}(X^{n+1})$.
    \end{algorithmic}
\end{algorithm}

\begin{algorithm}[!htb]
    \caption{Standardization with Data Splitting}
    \label{alg:std-ds}
    \textbf{Input:} Calibration residuals $\{E^i\}_{i=1}^{n}$; target miscoverage level $\alpha\in(0,1)$; test input $X^{n+1}$.\\
    \textbf{Output:} $\hat{C}^{\mathrm{ds}}(X^{n+1})$
    \begin{algorithmic}[1]
        \State Randomly split $\{1,\ldots,n\}$ into two disjoint subsets, $\mathcal{I}_1$ and $\mathcal{I}_2$.
        \For{$j=1$ to $d$}
            \State Compute the location and scale parameters
            \begin{align*}
            & \hat{\mu}^{\mathrm{ds}}_j \gets \frac{1}{|\mathcal{I}_1|}\sum_{i \in I_1}E^i_j,
              & \hat{\sigma}^{\mathrm{ds}}_j \gets \sqrt{\frac{1}{|\mathcal I_1|-1}\sum_{i \in I_1}(E^i_j-\hat{\mu}_j)^2}.
            \end{align*}
        \EndFor
        \State Compute $S^i_{\mathrm{ds}} \gets \Phi(E^i; \hat{\mu}_j^\mathrm{ds},\hat{\sigma}_j^\mathrm{ds})$ for $i \in \mathcal I_2$, using $\Phi$ defined in~\eqref{eq:residual-transformation}.
        \State Compute $\hat{Q}^\mathrm{ds}_{1-\alpha}\gets \hat{Q}_{1-\alpha}(S^1_\mathrm{ds},\dots, S^n_\mathrm{ds}).$
        \State \Return $\hat{C}^{\mathrm{ds}}(X^{n+1}) \gets \left\{y \in \R^d: 0 \leq E_j(X^{n+1}, y) \leq \hat{Q}^\mathrm{ds}_{1-\alpha}\cdot \hat{\sigma}_j^\mathrm{ds} + \hat{\mu}_j^\mathrm{ds}, \forall j \in [d]\right\}$.
    \end{algorithmic}
\end{algorithm}

\FloatBarrier

\subsection{Auxiliary Algorithms}

\begin{algorithm}[!htb]
\caption{Global-Worst-Case TSCP (TSCP-GWC)}
\label{alg:global}
\textbf{Input:} Calibration residuals $\{E^i\}_{i=1}^{n}$; target miscoverage level $\alpha\in(0,1)$; test input $X^{n+1}$.
\begin{algorithmic}[1]
\State Compute $\hat\mu, \hat{\sigma}$ using $~\eqref{eq:plug-in-mean-std}$.
\State Compute $S_{\mathrm{gwc}}^i\gets \Phi^{\mathrm{gwc}}(E^i)$ using $\Phi^{\mathrm{gwc}}$ defined in Lemma~\ref{lem:gwc-rescaling}.
\State Compute $\hat{Q}^{\mathrm{gwc}}_{1-\alpha} \gets \hat{Q}_{1-\alpha}(S_1^{\mathrm{gwc}},\dots, S^n_{\mathrm{gwc}})$.
\State Compute $\omega_1(\hat{Q}^{\mathrm{gwc}}_{1-\alpha}),\ldots,\omega_d(\hat{Q}^{\mathrm{gwc}}_{1-\alpha})$ using $\omega$ defined in~\eqref{eq:key-ineq}.
\State \Return $\displaystyle \hat{C}^{\mathrm{gwc}}(X^{n+1}) \gets \left\{y\in \R^d: 0\leq E_j(X^{n+1}, y) \leq  \omega_j(\hat{Q}^{\mathrm{gwc}}_{1-\alpha}), \forall j \in [d]\right\}$.
\end{algorithmic} 
\end{algorithm}

\begin{algorithm}[!htbp]
\caption{Backward Search in Algorithm~\ref{alg:shortcut}}
\label{alg:backward}
\textbf{Input:} Calibration residuals $\{E^i\}_{i=1}^{n}$; global bounds $\omega_1(\hat{Q}^{\mathrm{gwc}}_{1-\alpha}),\ldots,\omega_d(\hat{Q}^{\mathrm{gwc}}_{1-\alpha})$; target miscoverage level $\alpha\in(0,1)$; calibration mean $\hat{\mu}$ and standard deviation\ $\hat{\sigma}$; mean index $h^*$; current coordinate $j$.
\begin{algorithmic}[1]
    \For{$h\in \mathrm{Row}_j(h^*)$ with $h_j=h^*_j-1$ to $1$}
        \State Compute $\omega_j(\hat{Q}^{\mathrm{lwc}}_{1-\alpha}(h))$ using Algorithm~\ref{alg:local-bound} and $B^{h}_j$ using~\eqref{eq:rect-wise-bounds}.
        \If{$B^{h}_j > 0$}
            \State \Return $h^{[j]} = h$.
        \EndIf
    \EndFor
\end{algorithmic}
\end{algorithm}

\begin{algorithm}[!htb]
\caption{Binary Search in Algorithm~\ref{alg:shortcut}}
\label{alg:binary}
\textbf{Input:} Calibration residuals $\{E^i\}_{i=1}^{n}$ global bounds $\omega_1(\hat{Q}^{\mathrm{gwc}}_{1-\alpha}),\ldots,\omega_d(\hat{Q}^{\mathrm{gwc}}_{1-\alpha})$; ; target miscoverage level $\alpha\in(0,1)$; calibration mean $\hat{\mu}$ and standard deviation\ $\hat{\sigma}$; mean index $h^*$; current coordinate $j$.
\begin{algorithmic}[1]
    \State Let $I_{\min} \gets h^*_j$ and $I_{\max} \gets n+1$.
    \While{$I_{\min} < I_{\max}$}
        \State Compute $m_j\gets \lceil (I_{\mathrm{min}}+I_\mathrm{max})/2\rceil$ and $m_k = h^*_k$ for all $k\neq j$.
        \State Compute $\omega_j(\hat{Q}^{\mathrm{lwc}}_{1-\alpha}(h))$ and $B^{m}_j$ using~\eqref{eq:rect-wise-bounds}.\hfill
        \Comment{Algorithm~\ref{alg:local-bound}}
        \If{$B^{m}_j > 0$}
            \State $I_{\mathrm{min}} \gets m_j$.
        \Else 
            \State $I_{\mathrm{max}} \gets m_j-1$.
        \EndIf
        \If{$I_{\max} = I_{\min}$}
            \State \Return $h^{[j]}_j = I_{\min}$ and $h^{[j]}_k = h^*_k$ for all $k\neq j$.
        \EndIf
    \EndWhile
\end{algorithmic}
\end{algorithm}

\FloatBarrier

\subsection{Computational Cost Analysis}\label{app:alg-and-cost}

\subsubsection{Algorithm~\ref{alg:global}}
\label{app:gwc-cost}
Since the calibration residuals $\{E^i\}_{i=1}^{n}$ are stored as an $(n,d)$-array, we know
\begin{enumerate}
    \item Computing $(\hat\mu_j,\hat\sigma_j)_{j=1}^d$using~\eqref{eq:plug-in-mean-std} requires $\mathcal O(nd)$.
    \item Evaluating $S_{\mathrm{gwc}}^i\gets \Phi^{\mathrm{gwc}}(E^i)$ for all $i\in[n]$ is also $\mathcal O(nd)$.
    \item Computing the quantile $\hat{Q}^{\mathrm{gwc}}_{1-\alpha}$ costs $\mathcal O(n)$ using a selection algorithm such as quickselect. 
    \item Computing $\omega_1(\hat{Q}^{\mathrm{gwc}}_{1-\alpha}),\ldots,\omega_d(\hat{Q}^{\mathrm{gwc}}_{1-\alpha})$ costs $\mathcal{O}(d)$ once $\hat{Q}^{\mathrm{gwc}}_{1-\alpha}$ and $(\hat\mu_j,\hat\sigma_j)_{j=1}^d$ are known.
\end{enumerate}
Hence, the overall computational complexity of Algorithm~\ref{alg:global} is $\mathcal O(nd)$.

\subsubsection{Algorithm~\ref{alg:local-bound}}
\label{app:lwc-local-bound}

Algorithm~\ref{alg:local-bound} shares the same structure and computational cost as Algorithm~\ref{alg:global}: $\mathcal O(nd)$.

\subsubsection{Algorithm~\ref{alg:shortcut}}
\label{app:lwc-sc-cost}
Since the calibration residuals $\{E^i\}_{i=1}^{n}$ are stored as an $(n,d)$-array, we know
\begin{enumerate}
    \item Computing $(\hat\mu_j,\hat\sigma_j)_{j=1}^d$ using~\eqref{eq:plug-in-mean-std} and $\omega_1(\hat{Q}^{\mathrm{gwc}}_{1-\alpha}),\ldots,\omega_d(\hat{Q}^{\mathrm{gwc}}_{1-\alpha})$ using Algorithm~\ref{alg:global} requires $\mathcal O(nd)$.
    \item Sorting $\{E^i_j\}^n_{i=1}$ for all $j\in [d]$ costs $\mathcal{O}(dn\log{n})$;
    \item Computing $\mathcal{Y}^{h^*}$ relies on computing $h^*$, which requires masking all values below $\hat{\mu}$ and report the index of the lowest value from the sorted residuals list $\{E^i_j\}^n_{i=1}$, each of which costs $\mathcal{O}(n)$, so together this operation contributes $\mathcal{O}(dn)$.
    \item If $\mathcal{Y}^{h^*} = \emptyset$, then the cost is only $\mathcal{O}(d)$ since all global bounds have been computed.
    \item If $\mathcal{Y}^{h^*} \neq \emptyset$, then for each coordinate, we compute the local bounds using Algorithm~\ref{alg:local-bound} and $B^{h^*}_j$ using~\eqref{eq:rect-wise-bounds}, which costs $\mathcal{O}(dn)$.
    If $B^{h^*}_j = 0$, then we use the backward search detailed in Algorithm~\ref{alg:backward}, which costs $\mathcal{O}(dn^2)$; see Appendix~\ref{app:backward-backward}.
    If $B^{h^*}_j > 0$, then we use the binary search detailed in Algorithm~\ref{alg:binary}, which costs $\mathcal{O}(dn\log{n})$; see Appendix~\ref{app:backward-binary}.
\end{enumerate}  
Let $T \leq d$ be the number of coordinates where binary searches are enabled, then the total cost of Algorithm~\ref{alg:shortcut} is $\mathcal{O}(Tdn\log{n} + (d-T)dn^2)$

\subsubsection{Algorithm~\ref{alg:backward}} \label{app:backward-backward}
Running Algorithm~\ref{alg:local-bound} and forming $B^h_j$ costs $\mathcal{O}(dn)$ for each $h$; there are $n+1$ indices in
$\mathrm{Row}_j(h^*)$, so the computational complexity of Algorithm~\ref{alg:backward} never exceeds $\mathcal{O}(dn^2)$.

\subsubsection{Algorithm~\ref{alg:binary}} \label{app:backward-binary}
Running Algorithm~\ref{alg:local-bound} and forming $B^h_j$ costs $\mathcal{O}(dn)$ for each $h$. The number of indices $h$ encountered in this algorithm is $\mathcal{O}(\log_2{n})$ because we split and continue in only half of the current indices sequence at each step. This is a standard binary search. Therefore, the total computational complexity of Algorithm~\ref{alg:binary} never exceeds $\mathcal{O}(dn\log{n})$.

\clearpage

\section{Mathematical Proofs} 
\label{app:proofs}

\subsection{Auxiliary Lemmas}\label{app:proofs-lemmas}

\begin{lemma}\label{lem:1}
    Let $x^1,\ldots,x^n \in \R$ and $\bar{x}^n = \frac{1}{n}\sum^n_{i=1}x^i$. For any $x^{n+1}\in \R$, let $\hat\sigma(x^{n+1})$ be the sample standard deviation of $\{x^1,\ldots,x^n, x^{n+1}\}$, then
    \[
    \hat\sigma^2(x^{n+1}) = \frac{(x^{n+1}-\bar{x}^{n})^2}{n+1}+\frac{1}{n}\sum^n_{i=1}(x^i-\bar{x}^{n})^2.
    \]
\end{lemma}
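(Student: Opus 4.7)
The plan is to verify the identity by direct algebraic expansion, using the convention established in the paper that the sample standard deviation of $n+1$ values uses the prefactor $1/n$ (as in the definitions of $\hat{\sigma}^{\mathrm{oracle}}$ and $\hat{\sigma}_j(z_j)$ in the main text). That is, writing $\bar{x}^{n+1} := \frac{1}{n+1}\sum_{i=1}^{n+1} x^i$, we have
\[
\hat{\sigma}^2(x^{n+1}) = \frac{1}{n} \sum_{i=1}^{n+1} (x^i - \bar{x}^{n+1})^2.
\]

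First I would derive the key one-line relation $\bar{x}^{n+1} - \bar{x}^n = (x^{n+1} - \bar{x}^n)/(n+1)$, which follows from $(n+1)\bar{x}^{n+1} = n\bar{x}^n + x^{n+1}$. From this I would read off both the small shift $\bar{x}^n - \bar{x}^{n+1} = -(x^{n+1} - \bar{x}^n)/(n+1)$ and the large deviation $x^{n+1} - \bar{x}^{n+1} = n(x^{n+1} - \bar{x}^n)/(n+1)$.

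Second, I would split the sum $\sum_{i=1}^{n+1}(x^i - \bar{x}^{n+1})^2$ into the first $n$ terms and the last term. For the first $n$ terms I would write $x^i - \bar{x}^{n+1} = (x^i - \bar{x}^n) + (\bar{x}^n - \bar{x}^{n+1})$ and expand the square. The cross term vanishes because $\sum_{i=1}^n (x^i - \bar{x}^n) = 0$, leaving $\sum_{i=1}^n (x^i - \bar{x}^n)^2 + n(\bar{x}^n - \bar{x}^{n+1})^2$. For the last term I would substitute the expression for $x^{n+1} - \bar{x}^{n+1}$ derived above.

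Combining, the two contributions involving $(x^{n+1} - \bar{x}^n)^2$ simplify as $\frac{n}{(n+1)^2} + \frac{n^2}{(n+1)^2} = \frac{n(n+1)}{(n+1)^2} = \frac{n}{n+1}$, and dividing the whole sum by $n$ then produces the claimed formula. There is no real obstacle here: the only thing to be careful about is the paper's nonstandard normalization $1/n$ (rather than $1/(n-1)$ or $1/(n+1)$) in the definition of the sample standard deviation on $n+1$ points, which is exactly what makes the identity as clean as stated.
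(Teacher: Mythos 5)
Your proposal is correct and follows essentially the same route as the paper's proof: split off the $(n+1)$-th term, write $x^i-\bar{x}^{n+1}=(x^i-\bar{x}^n)+(\bar{x}^n-\bar{x}^{n+1})$ so the cross term vanishes, and substitute $\bar{x}^{n+1}=\frac{n\bar{x}^n+x^{n+1}}{n+1}$ to collect the coefficient $\frac{1}{n+1}$ on $(x^{n+1}-\bar{x}^n)^2$. You also correctly identify the $1/n$ normalization as the convention that makes the identity hold in this clean form.
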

\begin{proof}
    Denote $\bar{x}^{n+1} = \frac{1}{n+1}\sum^{n+1}_{i=1}x^i$. We notice that
\begin{align*}
    \hat\sigma^2(x^{n+1}) = &\frac{1}{n}\sum^{n+1}_{i=1}(x^i-\bar{x}^{n+1})^2\\
    = &\frac{1}{n} \left[ (x^{n+1}-\bar{x}^{n+1})^2+\sum^n_{i=1}(x^i-\bar{x}^{n}+\bar{x}^{n}-\bar{x}^{n+1})^2 \right]\\
    = &\frac{(x^{n+1}-\bar{x}^{n+1})^2}{n}+\frac{1}{n}\sum^n_{i=1}\left[(x^i-\bar{x}^{n})^2+2(x^i-\bar{x}^n)(\bar{x}^n-\bar{x}^{n+1})+(\bar{x}^{n}-\bar{x}^{n+1})^2\right]\\
    = &\frac{(x^{n+1}-\bar{x}^{n+1})^2}{n}+(\bar{x}^{n}-\bar{x}^{n+1})^2+\frac{1}{n}\sum^n_{i=1}(x^i-\bar{x}^{n})^2.
\end{align*}
Substituting $\bar{x}^{n+1} = \frac{x^{n+1}+n\bar{x}^n}{n+1}$, we get
\begin{align*}    
    \hat\sigma^2(x^{n+1}) = &\frac{1}{n}\left(x^{n+1}-\frac{x^{n+1}+n\bar{x}^n}{n+1}\right)^2+\left(\bar{x}^n-\frac{n\bar{x}^n+x^{n+1}}{n+1}\right)^2+\frac{1}{n}\sum^n_{i=1}(x^i-\bar{x}^{n})^2\\
    = &\frac{n}{(n+1)^2}(x^{n+1}-\bar{x}^n)^2+\frac{1}{(n+1)^2}(\bar{x}^n-x^{n+1})^2+\frac{1}{n}\sum^n_{i=1}(x^i-\bar{x}^{n})^2\\
    = &\frac{(x^{n+1}-\bar{x}^{n})^2}{n+1}+\frac{1}{n}\sum^n_{i=1}(x^i-\bar{x}^{n})^2.
\end{align*}
\end{proof}

\begin{lemma}\label{lem:2}
    Let $X^1,\ldots, X^n \in \R$ be real-valued random variables with sample mean $\bar{X}^n$. Let $X^{n+1}$ and $\tilde{X}^{n+1}$ be another two random variables such that
    \[
    |X^{n+1} - \bar{X}^n| \leq |\tilde{X}^{n+1}-\bar{X}^n| \quad \text{a.s.}
    \]
    Then,
    \[
    \hat{\sigma}(X^{n+1}) \leq \hat{\sigma}(\tilde{X}^{n+1}) \quad \text{a.s.},
    \]
    where $\hat{\sigma}(X^{n+1})$ is the the sample standard deviation of $\{X^1,\ldots,X^n, X^{n+1}\}$, as in Lemma~\ref{lem:1}.
\end{lemma}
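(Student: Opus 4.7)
The plan is to reduce Lemma~\ref{lem:2} directly to Lemma~\ref{lem:1}, which provides a clean decomposition of the sample variance obtained by adjoining one extra observation. Under that decomposition, the only part that depends on the adjoined observation appears through its squared deviation from the old sample mean $\bar{X}^n$, so the hypothesis $|X^{n+1}-\bar{X}^n|\le|\tilde{X}^{n+1}-\bar{X}^n|$ translates immediately into a variance comparison.

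Concretely, I would proceed in three short steps. First, I invoke Lemma~\ref{lem:1} twice (with the same fixed data $X^1,\ldots,X^n$) to write
\[
\hat{\sigma}^2(X^{n+1}) = \frac{(X^{n+1}-\bar{X}^n)^2}{n+1} + \frac{1}{n}\sum_{i=1}^n(X^i-\bar{X}^n)^2,
\]
and analogously for $\tilde{X}^{n+1}$; the second summand is identical in both expressions and almost surely finite. Second, I square the hypothesized inequality, which is legitimate since both sides are nonnegative, to obtain $(X^{n+1}-\bar{X}^n)^2 \le (\tilde{X}^{n+1}-\bar{X}^n)^2$ almost surely. Third, subtracting the common term and multiplying by $n+1$ yields $\hat{\sigma}^2(X^{n+1}) \le \hat{\sigma}^2(\tilde{X}^{n+1})$ almost surely, and taking nonnegative square roots gives the conclusion.

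There is no real obstacle here: the statement is essentially a monotonicity observation that follows from Lemma~\ref{lem:1}'s explicit decomposition of the augmented sample variance into a data-only part plus a strictly increasing function of $|X^{n+1}-\bar{X}^n|$. The only point worth a brief mention is the almost-sure qualifier, which is inherited verbatim from the hypothesis and from the identity in Lemma~\ref{lem:1}, since that identity is a deterministic algebraic equality and therefore holds pathwise.
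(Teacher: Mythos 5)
Your proof is correct and follows essentially the same route as the paper's: both invoke Lemma~\ref{lem:1} to isolate the dependence on the adjoined observation in the term $(X^{n+1}-\bar{X}^n)^2/(n+1)$, then compare term by term. If anything, your version is slightly more careful in explicitly squaring the hypothesis and taking square roots at the end (the paper's display omits the squaring notationally, writing $\hat{\sigma}$ where it means $\hat{\sigma}^2$).
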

\begin{proof}
    This is a direct application of Lemma~\ref{lem:1}:
\begin{align*}
    \hat{\sigma}(X^{n+1}) 
    &= \frac{(X^{n+1}-\bar{X}^{n})^2}{n+1}+\frac{1}{n}\sum^n_{i=1}(X^i-\bar{X}^{n})^2\\
    &\leq \frac{(\tilde{X}^{n+1}-\bar{X}^{n})^2}{n+1}+\frac{1}{n}\sum^n_{i=1}(X^i-\bar{X}^{n})^2 \quad \text{[Assumption]}\\
    &=\hat{\sigma}(\tilde{X}^{n+1}).
\end{align*}
\end{proof}

\begin{lemma}\label{lem:3}
    Let $X^1,\ldots, X^{n+1} \in \R$ be real-valued random variables with sample mean $\bar{X}^{n+1}$ and sample standard deviation $\bar{S}^{n+1} > 0$. Define $Z^i = (X^i-\bar{X}^{n+1}) / \bar{S}^{n+1}$ for $i \in [n+1]$. Then,
    \[
    |Z^i| \leq \frac{n}{\sqrt{n+1}} \quad\text{almost surely}, \quad \forall i \in [n+1].
    \]
    This inequality is strict if $X^1,\ldots,X^{n+1}$ are almost-surely distinct.
\end{lemma}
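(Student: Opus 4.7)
The plan is to bound the studentized deviation $|Z^k|$ for a fixed but arbitrary index $k \in [n+1]$ using only two ingredients: the fact that centered residuals sum to zero, and the Cauchy--Schwarz inequality. Note that under the paper's convention (consistent with Lemma~\ref{lem:1}), we have $n (\bar{S}^{n+1})^2 = \sum_{i=1}^{n+1}(X^i - \bar{X}^{n+1})^2$, so it is equivalent to show $(X^k - \bar{X}^{n+1})^2 \leq \frac{n^2}{n+1} (\bar{S}^{n+1})^2$.

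First, I would exploit the centering identity $\sum_{i=1}^{n+1}(X^i - \bar{X}^{n+1}) = 0$, which rearranges to
\[
X^k - \bar{X}^{n+1} \;=\; -\sum_{i \neq k}(X^i - \bar{X}^{n+1}).
\]
Then I would square both sides and apply Cauchy--Schwarz to the right-hand side, which has $n$ terms:
\[
(X^k - \bar{X}^{n+1})^2 \;\leq\; n \sum_{i \neq k}(X^i - \bar{X}^{n+1})^2 \;=\; n\bigl[n(\bar{S}^{n+1})^2 - (X^k - \bar{X}^{n+1})^2\bigr].
\]
Solving this linear inequality for $(X^k - \bar{X}^{n+1})^2$ gives $(n+1)(X^k - \bar{X}^{n+1})^2 \leq n^2 (\bar{S}^{n+1})^2$, and dividing through by $(\bar{S}^{n+1})^2 > 0$ yields $|Z^k| \leq n/\sqrt{n+1}$, as required. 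Since $k$ was arbitrary, the bound holds uniformly over all $i \in [n+1]$.

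For the strict-inequality claim under almost-sure distinctness, the equality case in Cauchy--Schwarz requires all $n$ terms $X^i - \bar{X}^{n+1}$ for $i \neq k$ to be equal. This would force $X^i = X^{i'}$ for all $i, i' \neq k$, contradicting the assumption that the $X^1, \ldots, X^{n+1}$ are almost-surely distinct. Hence the inequality is strict almost surely. The main (mild) obstacle is simply recalling that the paper's sample-standard-deviation convention uses the divisor $n$ rather than $n+1$ for $n+1$ points, so that the algebra aligns correctly with the intended bound; no deeper difficulty arises.
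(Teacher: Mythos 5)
Your proof is correct. You establish the bound via the zero-sum identity for centered residuals together with Cauchy--Schwarz, whereas the paper instead invokes its Lemma~\ref{lem:1} in a leave-one-out fashion: writing $(\bar{S}^{n+1})^2 = \tfrac{1}{n+1}(X^i-\bar{X}^{-i})^2 + (\bar{S}^{-i})^2$, where $\bar{X}^{-i}$ and $\bar{S}^{-i}$ are the mean and standard deviation of the $n$ remaining points, and then simply dropping the non-negative term $(\bar{S}^{-i})^2$. The two arguments are equivalent at their core---discarding the leave-one-out variance is exactly the slack in your Cauchy--Schwarz step, and your equality case (all $X^i-\bar{X}^{n+1}$, $i\neq k$, equal) is precisely the paper's condition $\bar{S}^{-i}=0$---but your route is more self-contained, needing no auxiliary variance-update formula, while the paper's route reuses machinery it has already built and needs elsewhere. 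Your handling of the divisor-$n$ convention and of the strictness claim is also correct; the only caveat, which applies equally to the paper's proof, is that strictness genuinely fails when $n=1$ (two points give $|Z^i|=1/\sqrt{2}$ exactly), but this is harmless since the paper assumes $n>1$ throughout.
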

\begin{proof}
Fix any $i\in [n+1]$.
Let $\bar{X}^{-i} = \frac{1}{n}\sum_{k\neq i}X^i$ be the sample mean of the random variables excluding the $i$-th one, and
\[
\bar{S}^{-i} = \sqrt{\frac{1}{n}\sum_{k\neq i}(X_k - \bar{X}^{-i})^2}.
\]
Then, it follows from Lemma~\ref{lem:1} that
\[
|Z^i| = \frac{|X^i-\bar{X}^{n+1}|}{\bar{S}^{n+1}}
    = \frac{|X^i - \frac{1}{n+1}\sum^{n+1}_{i=1}X^i|}{\sqrt{\frac{1}{n}\sum^{n+1}_{i=1}(X^i - \bar{X}^{n+1})^2}}=\frac{\frac{n}{n+1}|X^i - \bar{X}^{-i}|}{\sqrt{\frac{1}{n+1}(X^i-\bar{X}^{-i})^2+\bar{S}^{-i}}}.
\]
But $\bar{S}^{-i} \geq 0$, so we must have
\begin{align*}
    |Z^i|\leq \frac{\frac{n}{n+1}|X^i - \bar{X}^{-i}|}{\sqrt{\frac{1}{n+1}(X^i-\bar{X}^{-i})^2}}= \frac{n}{\sqrt{n+1}}.
\end{align*}
If $X^1,\ldots,X^{n+1}$ are almost-surely distinct, we know $\bar{S}^{-i} >0$ almost-surely for any $i \in [n+1]$ thus this inequality becomes strict.
\end{proof}

\subsection{Proofs of Main Results}
\label{app:proofs-thms}

\begin{proof}[of Theorem~\ref{thm:fixed-transformation-coverage}]
This is a well-known result, whose proof is included for completeness. 
By definition of $\hat{C}(X^{n+1})$ in~\eqref{eq:joint-prediction-scalar},
\[
 Y^{n+1} \notin \hat{C}(X^{n+1}) \Longleftrightarrow S^{n+1} > \hat{Q}_{1-\alpha}(S^1,\ldots,S^n).
\]
Moreover, $S^1,\ldots,S^{n+1}$ are exchangeable because $E^1,\ldots,E^{n+1}$ are exchangeable and $\Phi$ is fixed.
Therefore, the proof is complete by applying the fundamental ``quantile inflation lemma'' of conformal prediction (e.g., see \citet{angelopoulos2025theoreticalfoundationsconformalprediction}), which states 
\begin{align*}
  \mathbb{P}\left[ S^{n+1} > \hat{Q}_{1-\alpha}(S^1,\ldots,S^n) \right] \leq \alpha.
\end{align*}
Moreover, it is also a standard result that, if $S^1,\ldots,S^{n+1}$ are almost-surely unique,
\begin{align*}
  \mathbb{P}\left[ S^{n+1} > \hat{Q}_{1-\alpha}(S^1,\ldots,S^n) \right] \geq \alpha - \frac{1}{n+1}.
\end{align*}
\end{proof}

\begin{proof}[of Lemma~\ref{lem:solution-key-ineq}]
From Assumption~\ref{eq:assumption-scores}, we know that $E^1_j,\dots, E^{n+1}_j$ are almost-surely distinct for every $j \in [d]$ and $0<\hat{\sigma}^{\mathrm{oracle}}_j < \infty$ almost-surely for all $j \in [d]$, which also ensures $S_{\mathrm{oracle}}^{n+1}$ is well-defined. 

To prove this lemma, fix any $c \in \R$ and note that
\begin{align*}
    S_{\mathrm{oracle}}^{n+1} \leq c &\iff E^{n+1}_j \leq c\hat{\sigma}_j^{\mathrm{oracle}} + \hat{\mu}_j^{\mathrm{oracle}}, \quad \forall j \in [d]\\
    &\iff E^{n+1}_j \leq c\sqrt{\frac{1}{n}\sum^{n+1}_{i=1}\left(E^i_j -\hat{\mu}_j^{\mathrm{oracle}} \right)^2} + \frac{1}{n+1}\sum^{n+1}_{i=1}E^i_j, \quad \forall j \in [d].
\end{align*}
Everything except $E^{n+1}$ here is observed, so we can solve this inequality for $E^{n+1}$ coordinate-wise.
Fix any $j \in [d]$ and define
\begin{align*}
    g_j(E^{n+1})&:= E^{n+1}_j - c\sqrt{\frac{1}{n}\sum^{n+1}_{i=1}\left(E^i_j -\hat{\mu}_j^{\mathrm{oracle}} \right)^2} - \frac{1}{n+1}\sum^{n+1}_{i=1}E^i_j\\
    &= \frac{n}{n+1}\left(E^{n+1}_j-\hat{\mu}_j\right) - c\sqrt{\frac{1}{n}\sum^{n+1}_{i=1}\left(E^i_j -\hat{\mu}_j^{\mathrm{oracle}} \right)^2}\\
    &= \frac{n}{n+1}\left(E^{n+1}_j-\hat{\mu}_j\right) - c\sqrt{\frac{1}{n+1}\left(E^{n+1}_j-\hat{\mu}_j\right)^2+\hat{\sigma}_j^2}.
\end{align*}
where the last equality follows from Lemma~\ref{lem:1}. Therefore, 
\begin{align}
    g_j(E^{n+1}) \leq 0 \iff \frac{n}{n+1}\left(E^{n+1}_j-\hat{\mu}_j\right) \leq c\sqrt{\frac{1}{n+1}\left(E^{n+1}_j-\hat{\mu}_j\right)^2+\hat{\sigma}_j^2}.
    \label{eq:coordinte-wise-key-ineq}
\end{align}
The solution of this inequality in $E^{n+1}_j$ depends on the range of $c$. There are two cases:
\begin{enumerate}
\item if $c \geq 0$, then~\eqref{eq:coordinte-wise-key-ineq} holds if and only if \emph{either} $0\leq E^{n+1}_j < \hat\mu_j$ \emph{or} $E^{n+1}_j \geq \mu_j$ and
    \[
    \left(\frac{n}{n+1}\right)^2\left(E^{n+1}_j-\hat\mu_j\right)^2 \leq c^2\left(\frac{1}{n+1}(E^{n+1}_j-\hat\mu_j)^2 + \hat\sigma_j^2\right).
    \]
\item if $c < 0$, then~\eqref{eq:coordinte-wise-key-ineq} holds if and only if $E^{n+1}_j <  \hat\mu_j$ \emph{and}
    \begin{align*}
        \left(\frac{n}{n+1}\right)^2\left(E^{n+1}_j-\hat\mu_j\right)^2 \geq c^2\left(\tfrac{1}{n+1}\left(E^{n+1}_j-\hat\mu_j\right)^2 + \hat\sigma_j^2\right).
    \end{align*}
  \end{enumerate}
  The solution in both cases relies on solving a quadratic equality in the form $h(x) = 0$, where:
\[
h(x):=\left(\frac{n}{n+1}\right)^2\left(x-\hat\mu_j\right)^2 - c^2\tfrac{1}{n+1}\left(x-\hat\mu_j\right)^2 - c^2\hat\sigma_j^2 = 0.
\]
This equality can be reorganized into a more recognized quadratic form 
\[
h(x)= Ax^2+Bx +K = 0,
\]
where
\begin{align*}
 A & = \left(\frac{n}{n+1}\right)^2 - \frac{c^2}{n+1}, \\
 B & = 2\hat\mu_j\left(\frac{n}{n+1}\right)^2-2\hat\mu_j \frac{c^2}{n+1}, \\
 K & = \left(\frac{n}{n+1}\right)^2\hat\mu_j^2-\frac{c^2}{n+1}\hat\mu_j^2-c^2\hat\sigma_j^2.
\end{align*}
The solution set is:
\[h(x) = 0 \iff x=
\begin{cases}
    \hat\mu_j \pm \hat\sigma_j|c|\sqrt{\tfrac{(n+1)^2}{n^2-(n+1)c^2}}, & c^2 < \tfrac{n^2}{n+1},\\
    \hat{\mu}_j, & c^2 = \tfrac{n^2}{n+1} \text{ and } \hat{\sigma}_j = 0,\\
    \text{no real solution}, &\text{otherwise}.
\end{cases}
\]
We first note the event $\hat{\sigma}_j = 0$ occurs if and only if $E^1_j=\ldots=E^n_j$, which has probability zero by Assumption~\ref{eq:assumption-scores}. For the rest, we note that $c^2 < \tfrac{n^2}{n+1}$ implies $A>0$ thus $h(x) = Ax^2+Bx +K$ concave up for all $x \geq 0$. 
Therefore by analyzing the positiveness and negativeness of $h(x)$ relative to the solution set in our two cases, we know~\eqref{eq:coordinte-wise-key-ineq} holds almost-surely if and only if
\begin{align*}
    &\begin{cases}
    \hat\mu_j \leq E^{n+1}_j \leq \hat\mu_j+\hat\sigma_j|c|\sqrt{\tfrac{(n+1)^2}{n^2-(n+1)c^2}}, & c \geq 0,\, c^2 < \tfrac{n^2}{n+1}, \\[0.5em]
    E^{n+1}_j < \infty , & c \geq 0,\, c^2 \geq \tfrac{n^2}{n+1}, \\[0.5em]
    0 \leq E^{n+1}_j < \hat\mu_j, & c \geq 0, \\[0.5em]
    0\leq E^{n+1}_j \leq \hat\mu_j-\hat\sigma_j|c|\sqrt{\tfrac{(n+1)^2}{n^2-(n+1)c^2}}, & c < 0,\, c^2 < \tfrac{n^2}{n+1}, \\[0.5em]
    \text{no solution}, & c < 0,\, c^2 \geq \tfrac{n^2}{n+1}.
    \end{cases} \quad \text{almost-surely}\\
    &\iff \begin{cases}
    0 \leq E^{n+1}_j \leq \hat\mu_j+\hat\sigma_j|c|\sqrt{\tfrac{(n+1)^2}{n^2-(n+1)c^2}}, & 0\leq c < \tfrac{n}{\sqrt{n+1}}, \\[0.5em]
    0\leq E^{n+1}_j < \infty , & c>\tfrac{n}{\sqrt{n+1}}, \\[0.5em]
    0\leq E^{n+1}_j \leq \max\left\{0, \hat\mu_j-\hat\sigma_j|c|\sqrt{\tfrac{(n+1)^2}{n^2-(n+1)c^2}}\right\}, & -\tfrac{n}{\sqrt{n+1}} < c<0, \\[0.5em]
    \text{no solution}, & c \leq - \tfrac{n}{\sqrt{n+1}}.
    \end{cases}\quad \text{almost surely}
\end{align*}
Interpreting the ``no-solution'' case as $E^{n+1}_j = 0$, we obtain precisely the expression for $\omega_j(c)$ given in the statement of the lemma. This notation adjustment is harmless because both $E^{n+1}_j = 0$ and $S^{n+1}_{\mathrm{oracle}} \leq -n/\sqrt{n+1}$ are probability null sets by our Assumption~\ref{eq:assumption-scores} and Lemma~\ref{lem:3}, respectively.
The monotonic nondecreasing property of this $\omega_j(c)$ follows immediately from the construction.
\end{proof}

\begin{proof}[of Lemma~\ref{lem:finite-prediction-guarantee}]
From Assumption~\ref{eq:assumption-scores}, $E^1_j,\dots, E^{n+1}_j$ are almost-surely distinct for every $j \in [d]$ and $0<\hat{\sigma}^{\mathrm{oracle}}_j < \infty$ almost-surely for all $j \in [d]$. This ensures $S_{\mathrm{oracle}}^i$ is well-defined for all $i \in [n+1]$. Moreover, we know from Lemma~\ref{lem:3} that
\[
\left|\frac{E^i_j - \hat{\mu}^{\mathrm{oracle}}_j}{\hat{\sigma}^{\mathrm{oracle}}_j}\right| < \frac{n}{\sqrt{n+1}},
\]
almost-surely for all $i \in [n+1]$. Aggregating this inequality for all $j \in [d]$, we obtain
\begin{align*}
    |S_{\mathrm{oracle}}^i| = \left|\max_{1\leq j\leq d}\frac{E^i_j - \hat{\mu}^{\mathrm{oracle}}_j}{\hat{\sigma}^{\mathrm{oracle}}_j}\right|\leq \max_{1\leq j\leq d}\left|\frac{E^i_j - \hat{\mu}^{\mathrm{oracle}}_j}{\hat{\sigma}^{\mathrm{oracle}}_j}\right|< \max_{1\leq j\leq d}\frac{n}{\sqrt{n+1}}=\frac{n}{\sqrt{n+1}},
\end{align*}
almost-surely for all $i \in [n+1]$. Recall that the quantile $\hat{Q}^{\mathrm{oracle}}_{1-\alpha}$ is defined as
\[
\hat{Q}^{\mathrm{oracle}}_{1-\alpha}:= \lceil(1-\alpha)(n+1)\rceil\text{th smallest value in } \{S_{\mathrm{oracle}}^1,\dots, S_{\mathrm{oracle}}^n, +\infty\},
\]
so $\hat{Q}^{\mathrm{oracle}}_{1-\alpha} < n/\sqrt{n+1}$ almost-surely if $n \geq 1/\alpha-1$ and $\hat{Q}^{\mathrm{oracle}}_{1-\alpha} = +\infty$ otherwise.
\end{proof}


\begin{proof}[of Lemma~\ref{lem:gwc-rescaling}]
Fix any $j\in [d]$ and let
\[
g(t_j, z_j) := \frac{t_j - \hat{\mu}_j(z_j)}{\hat{\sigma}_j(z_j)}.
\]
Finding the supremum of this $g$ over $z_j \geq 0$ is a simple univariate optimization problem. Expanding $g$ in full, we get
\[
g(t_j, z_j) = \frac{t_j - \hat{\mu}_j(z_j)}{\hat{\sigma}_j(z_j)} = \frac{t_j-\frac{1}{n+1}(z_j+n\hat{\mu}_j)}{\sqrt{\frac{1}{n+1}(z_j-\hat{\mu}_j)^2+\hat{\sigma}_j^2}},
\]
where the last equality follows from Lemma~\ref{lem:1}.
Differentiating $g$ with respect to $z_j$ yields
\[
g_{z_j}(t_j, z_j) = 
\frac{-\tfrac{1}{n+1}\sqrt{\frac{1}{n+1}(z_j-\hat{\mu}_j)^2+\hat{\sigma}_j^2} 
- \big(t_j - \tfrac{1}{n+1}(z_j+n\hat\mu_j)\big)\tfrac{\tfrac{1}{n+1}(z_j-\hat\mu_j)}{\sqrt{\frac{1}{n+1}(z_j-\hat{\mu}_j)^2+\hat{\sigma}_j^2}}}
{\frac{1}{n+1}(z_j-\hat{\mu}_j)^2+\hat{\sigma}_j^22}.
\]
Solving $g_{z_j}(t_j,z^*_j)=0$ with respect to $z^*_j$ gives:
\[
\hat\sigma_j^2 + (t_j-\hat\mu_j)(z^*_j-\hat\mu_j)=0
\quad\Longrightarrow\quad
z^\ast_j = \hat\mu_j - \frac{\hat\sigma_j^2}{t_j-\hat\mu_j}.
\]
There are three cases here for any fixed $t_j\geq 0$:
\begin{itemize}
    \item Case 1: if $g(t_j, z^\ast_j)$ is indeed the global maximum and $z^\ast_j > 0$, then we can just take
    \[
    \sup_{z_j \geq 0} g(t_j, z_j) = g(t_j, z^\ast_j).
    \]
    \item Case 2: if $g(t_j, z^\ast_j)$ is not the global maximum or $z^\ast_j < 0$, then the supremum must occur at either $z_j = 0$ or $z_j \rightarrow \infty$ due to the fact that $g_{z_j}$ has only one critical point. Hence,
    \[
    \sup_{z_j \geq 0} g(t_j, z_j) = \max\left\{g(t_j, 0), \lim_{z_j\to\infty} g(t_j,z_j)\right\} = \max\left\{g(t_j, 0), -\frac{1}{\sqrt{n+1}}\right\} .
    \]
    \item Case 3: if $t_j = \hat{\mu}_j$, then similar derivation strategy shows that $g_{z_j} \equiv -\hat{\sigma}_j^2 < 0$. This implies that $g$ is monotonically decreasing and thus we take
    \[
    \sup_{z_j \geq 0} g(\hat{\mu}_j, z_j) = g(\hat{\mu}_j, 0).
    \]
\end{itemize}

Collecting all candidates, we obtain the closed form:
\[
\sup_{z_j \geq 0} g(t_j, z_j)
=\max\left\{g\left(t_j, 0\right), g\left(t_j, z^*_j\mathbb{I}\left\{z^*_j \geq 0\right\}\right), -\frac{1}{\sqrt{n+1}}\right\}.
\]
\end{proof}

\begin{proof}[of Proposition~\ref{prop:lwc-approximation}]
Let $\mathcal{P}$ be any (random) partition of $\hat{C}^{\mathrm{gwc}}(X^{n+1})$, so that $\hat{C}^{\mathrm{gwc}}(X^{n+1}) = \bigsqcup_{\mathcal{R} \in \mathcal{P}}\mathcal{R}$ almost-surely. Because $\hat{C}^{\mathrm{gwc}}(X^{n+1}) \supseteq \tilde{C}(X^{n+1})$ almost-surely, we know
\begin{align*}
    Y^{n+1} \in \tilde{C}(X^{n+1})
    &\iff Y^{n+1} \in \tilde{C}(X^{n+1}), Y^{n+1} \in \hat{C}^{\mathrm{gwc}}(X^{n+1})\\
    &\iff \exists \mathcal{R} \in \mathcal{P}:  Y^{n+1} \in \tilde{C}(X^{n+1}), Y^{n+1} \in \mathcal{R}\\
    &\iff \exists \mathcal{R} \in \mathcal{P}:  Y^{n+1} \in \tilde{C}(X^{n+1}) \cap \mathrm{R}.
\end{align*}
If $\hat{C}^\mathrm{lwc}(X^{n+1}, \mathcal{R})$ satisfies the condition on the right-hand-side of~\eqref{eq:lwc-local-containment},
\[
\mathcal{R}\supseteq \hat{C}^\mathrm{lwc}(X^{n+1}, \mathcal{R}) \supseteq \tilde{C}(X^{n+1}) \cap \mathrm{R},
\]
then
\begin{align*}
    Y^{n+1} \in \tilde{C}(X^{n+1}) &\implies \exists \mathcal{R}\in \mathcal{P}, Y^{n+1} \in \hat{C}^\mathrm{lwc}(X^{n+1}, \mathcal{R})\\
 &\iff Y^{n+1} \in \hat{C}^\mathrm{lwc}(X^{n+1}) = \bigsqcup_{\mathcal{R} \in \mathcal{P}}\hat{C}^\mathrm{lwc}(X^{n+1}, \mathcal{R}).
\end{align*}
Since we already know that $\mathbb{P}[Y^{n+1} \in \tilde{C}(X^{n+1})] \geq 1-\alpha$, this proves that
\begin{align*}
  \mathbb{P} \left[ Y^{n+1} \in \hat{C}^\mathrm{lwc}(X^{n+1}) \right] \geq 1-\alpha.
\end{align*}
Finally, the containment that $\hat{C}^\mathrm{lwc}(X^{n+1}) \subseteq \hat{C}^\mathrm{gwc}(X^{n+1})$ almost-surely is trivial because 
\[
\hat{C}^\mathrm{lwc}(X^{n+1}, \mathcal{R}) \subseteq \mathcal{R} \subseteq \hat{C}^\mathrm{gwc}(X^{n+1}),
\]
for every $\mathcal{R} \in \mathcal{P}$.
\end{proof}

\begin{proof}[of Lemma~\ref{lem:lwc-rescaling}]
Fix any $h \in [n+1]^d$.
Suppose $\mathcal{Y}^h \neq \emptyset$, then $R^h \neq \emptyset$ and
by Equation~\eqref{eq:rectangle-explicit},
\[
\mathcal{I}_j^h :=\left[E^{(h_j-1)}, U^{h_j}_j\right] \neq \emptyset, \quad \forall j \in [d].
\]
Fix any $t \in \R^d_+$. Similar to the proof of Lemma~\ref{lem:gwc-rescaling}, we have to solve a univariate optimization problem. We first notice that for any $j \in [d]$,
\[
\sup_{z \in R^h} \frac{t_j}{\hat{\sigma}_j(z_j)}= \sup_{z_j \in \mathcal{I}_j^h} \frac{t_j}{\hat{\sigma}_j(z_j)} = \frac{t_j}{\inf_{z_j \in \mathcal{I}_j^h}\hat{\sigma}_j(z_j)}.
\]
By Lemma~\ref{lem:1}, we know that
\[
\hat{\sigma}_j(z_j) = \sqrt{\frac{1}{n+1}(z_j - \hat{\mu}_j)^2 + \hat{\sigma}_j^2},
\]
so clearly $\hat{\sigma}_j(z_j)$ is minimized at 
\[
z^*_j = \operatorname*{argmin}_{z_j \in \mathcal{I}_j^h}|z_j-\hat{\mu}_j| = \begin{cases}
    \hat{\mu}_j, &\text{ if }\hat{\mu}_j \in \mathcal{I}_j^h,\\
    E^{(h_j-1)}, &\text{ if } \hat{\mu}_j< E^{(h_j-1)},\\
    U^{h_j}_j, &\text{ if }\hat{\mu}_j >  U^{h_j}_j.
\end{cases}
\]
Hence, $r_j(h_j) = \hat{\sigma}_j(z^*_j)$ is exactly the infimum proposed. 
\noindent
For the second part, denote
\[
g(z_j):=\frac{\hat{\mu}_j(z_j)}{\hat{\sigma}_j(z_j)} 
= \frac{\tfrac{1}{n+1}(z_j+n\hat\mu_j)}{\sqrt{\tfrac{1}{n+1}(z_j-\hat\mu_j)^2+\hat\sigma_j^2}}=\frac{\tfrac{1}{n+1}(z_j-\hat\mu_j)+\hat{\mu}_j}{\sqrt{\tfrac{1}{n+1}(z_j-\hat\mu_j)^2+\hat\sigma_j^2}},
\]
then we want to find
\[
\inf_{z \in \mathcal{E}^{\mathrm{gwc}}}g(z_j) = \inf_{z_j \in [0, \omega_j(\hat{Q}^{\mathrm{gwc}}_{1-\alpha})]}g(z_j).
\]
Differentiating $g(z_j)$ in $z_j$, we get
\[
g'(z_j) = \frac{\frac{1}{n+1}\hat\sigma_j^2-\frac{1}{n+1}(z_j-\hat\mu_j)\hat\mu_j}{(\tfrac{1}{n+1}(z_j-\hat\mu_j)^2+\hat\sigma_j^2)\sqrt{\tfrac{1}{n+1}(z_j-\hat\mu_j)^2+\hat\sigma_j^2}}.
\]
Set this to zero gives the critical point
\[
z^\ast_j = \mu_j + \frac{\sigma_j^2}{\mu_j}> 0.
\]
Let $\epsilon>0$, then we see that
\begin{align*}
    \mathrm{sign}(g'(z^*_j+\epsilon)) &= \mathrm{sign}(-\frac{1}{n+1}\epsilon\hat{\mu}_j) = -1,\\
    \mathrm{sign}(g'(z^*_j-\epsilon)) &= \mathrm{sign}(\frac{1}{n+1}\epsilon\hat{\mu}_j) = +1.
\end{align*}
This implies that $g(z^*)$ is always the local maximum. Hence, if $z^*_j \in [0, \omega_j(\hat{Q}^{\mathrm{gwc}}_{1-\alpha})]$, then 
\[
\inf_{z_j \in [0, \omega_j(\hat{Q}^{\mathrm{gwc}}_{1-\alpha})]}g(z_j) = \min\left\{g(0), \lim_{z_j \rightarrow \omega_j(\hat{Q}^{\mathrm{gwc}}_{1-\alpha})}g(z_j)\right\};
\]
if $z^*_j > \omega_j(\hat{Q}^{\mathrm{gwc}}_{1-\alpha})$, then 
\[
\inf_{z_j \in [0, \omega_j(\hat{Q}^{\mathrm{gwc}}_{1-\alpha})]}g(z_j) = g(0).
\]
But in either case, we only need to compare the boundaries in order to find the infimum. Therefore, it follows that
\[
\inf_{z_j \in [0, \omega_j(\hat{Q}^{\mathrm{gwc}}_{1-\alpha})]}g(z_j) = \min\left\{g(0), \lim_{z_j \rightarrow \omega_j(\hat{Q}^{\mathrm{gwc}}_{1-\alpha})}g(z_j)\right\}.
\]
This is true for every $j \in [d]$, so we must have
\begin{align*}
  \hat{\Phi}^{\mathrm{lwc}}(t; h) 
  &=\max_{1\leq j\leq d}\left\{\sup_{z \in R^h} \frac{t_j}{\hat{\sigma}_j(z_j)}-\inf_{z \in \mathcal{E}^{\mathrm{gwc}}}g(z_j)\right\}\\
  & = \max_{1 \le j \le d}
\left\{\frac{t_j}{m_j}-\min\left\{\frac{\hat{\mu}_j(0)}{\hat{\sigma}_j(0)}, \lim_{z_j \rightarrow \omega_j(\hat{Q}^{\mathrm{gwc}}_{1-\alpha})}\frac{\hat{\mu}_j(z_j)}{\hat{\sigma}_j(z_j)}\right\}\right\}.
\end{align*}
\end{proof}

\begin{proof}[of Lemma~\ref{lem:reduction-search}]
Fix any $j\in[d]$ and suppose $\mathcal{Y}^{h^*}\neq\emptyset$. There are two parts to this lemma, which we prove separately.

\textit{Part I.} We aim to prove
\begin{equation}
\sup_{h\in[n+1]^d}B^h_j
=\sup_{h\in\mathrm{Row}_j(h^*)}B^h_j
=B^{h^{[j]}}_j.
\label{eq:reduction-to-show}
\end{equation}
We first notice that the first supremum can be decomposed into suprema over surfaces:
\[
\sup_{h \in [n+1]^d}B^h_j = \sup_{k \in [n+1]}\sup_{h \in \mathrm{Surface}_j(k)}B^h_j,
\]
where
\[
  \mathrm{Surface}_j(k) := \left\{h \in [n+1]^d: h_j = k \right\}. 
\]
Fix any $k \in [n+1]$ and let $h^*({j,k}) \in \mathrm{Surface}_j(k)$ be the multi-index defined such that $h_j^*({j,k}) = k$ and $h_l^*({j,k}) = h^*_l$ for all $l \neq j$; then $h^*(j,k) \in \mathrm{Row}_j(h^*)$. Since this holds for every $k \in [n+1]$, to prove~\eqref{eq:reduction-to-show} the main challenge is to show that:
\begin{equation}
    \sup_{h \in \mathrm{Surface}_j(k)}B^h_j = B^{h^*({j,k})}_j.
    \label{eq:surface}
\end{equation}

To prove~\eqref{eq:surface}, recall from the definition in~\eqref{eq:lwc-residual-transformation} that
\[
\hat\Phi^{\mathrm{lwc}}(t,h)  = \max_{1\leq l \leq d} \left\{\frac{t_l}{r_l(h_l)} - c_l\right\},
\]
where
\[
r_l(h_l) = \min\left\{\hat{\sigma}_l(z_l): z_l \in R_l^h: =\left[E^{(h_l-1)}_l, U^{h_l}_l\right]\right\},
\qquad c_l = \inf_{z \in \mathcal{E}^{\mathrm{gwc}}}\frac{\hat{\mu}_l(z_l)}{\hat{\sigma}_l(z_l)}.
\]
Without loss of generality, assume $\mathcal{Y}^h \neq \emptyset$ for all $h \in \mathrm{Surface}_j(k)$. This does not affect the final result because $\mathcal{Y}^h = \emptyset$ leads to $B^h_j = 0$, which can be safely excluded from the supremum calculation. Then notice that, for all $h \in \mathrm{Surface}_j(k)$, we have $r_j(h_j) = r_j(k)$ and $r_l(h_l) \geq \hat{\sigma}_l = r_l({h^*(j,k)})$. Therefore, for any fixed $t$,
\begin{align*}
    \hat\Phi^{\mathrm{lwc}}(t,h) & = \max_{1\leq l \leq d} \left\{\frac{t_l}{r_l(h_l)} - c_l\right\}
    \leq  \max\left(\max_{l\neq j} \left\{\frac{t_l}{\hat{\sigma}_l} - c_l\right\}, \frac{t_j}{r_j(k)} - c_j\right)
    = \Phi^{\mathrm{lwc}}(t, h^*(j,k)).
\end{align*}
Because this holds point-wise for all $t$, we can extend the inequality to the corresponding conformal quantile and conclude that, almost surely for all $h \in \mathrm{Surface}_j(k)$,
\[
\hat{Q}^\mathrm{lwc}_{1-\alpha}(h) \leq \hat{Q}^\mathrm{lwc}_{1-\alpha}(h^*(j,k))
\]
and therefore, by the monotonicity of the function $\omega$ from Lemma~\ref{lem:solution-key-ineq},
\[
\omega_j(\hat{Q}^\mathrm{lwc}_{1-\alpha}(h)) \leq \omega_j(\hat{Q}^\mathrm{lwc}_{1-\alpha}(h^*(j,k))).
\]
Therefore, it follows that, for all $h \in \mathrm{Surface}_j(k)$,
\begin{align*}
B^h_j &=\begin{cases}
\min\left\{U^{k}_j, \omega_j(\hat{Q}^{\mathrm{lwc}}_{1-\alpha}(h))\right\}, 
& \text{if } U^{k}_j,\omega_j(\hat{Q}^{\mathrm{lwc}}_{1-\alpha}(h) > E^{(k-1)}_j\\[0.4em]
0, & \text{otherwise}
 \end{cases}\\
&\leq \begin{cases}
\min\!\bigl\{U^{k}_j, \omega_j(\hat{Q}^{\mathrm{lwc}}_{1-\alpha}(h^*(j,k)))\bigr\}, 
& \text{if } U^{k}_j,\omega_j(\hat{Q}^{\mathrm{lwc}}_{1-\alpha}(h^*(j,k))) > E^{(k-1)}_j\\[0.4em]
0, & \text{otherwise}
\end{cases}\\
&=B^{h^*(j,k)}_j.
\end{align*}
This completes the proof of~\eqref{eq:surface}. 

The second equality in~\eqref{eq:reduction-to-show}
is trivial. We can easily see that for any $k \in [n]$, if $B^{h^*(j,k+1)}_j \neq 0$, then
\[
U^{k+1}_j, \omega_j(\hat{Q}^{\mathrm{lwc}}_{1-\alpha}(h^*(j,k+1)) \geq E^{(k)}_j \geq \min\left\{E^{(k)}_j, \omega_j(\hat{Q}^{\mathrm{gwc}}_{1-\alpha})\right\} = U^k_j.
\]
Hence, it follows that $B^{h^*(j,k+1)}_j \geq B^{h^*(j,k)}_j$ regardless of whether $B^{h^*(j,k)}_j = 0$ or not. Then finding the supremum among all $k\in [n+1]$ simplifies to finding the right-most nonzero $B_j^{h^*(j,k)}$, which is exactly how $B^{h^{[j]}}_j$ is defined.

\textit{Part II.} In this part, we aim to show that if $B_j^{h^*(j,m_j)} =0$ for some $m_j \geq h^*_j$, then
\[
B_j^{h^*(j,N)} = 0, \quad \forall N\geq m_j.
\]
There two cases that could lead to $B_j^{h^*(j,m_j)} =0$:
    \begin{itemize}
        \item If $U^{m_j}_j = \min\left\{E^{(m_j)}_j,\omega_j(\hat{Q}^{\mathrm{gwc}}_{1-\alpha})\right\} \leq E^{(m_j-1)}_j$, then $\omega_j(\hat{Q}^\mathrm{gwc}_{1-\alpha}) \leq E^{(m_j-1)}_j \leq E^{(N-1)}_j$ for any $N\geq m_j$ by the nature of order statistics. Hence, $U^N_j \leq E^{(N-1)}_j$ for all $N\geq m_j$, which further implies $B_j^{h^*(j,N)} = 0$ for all $N\geq m_j$.
        \item If $\omega_j(\hat{Q}^\mathrm{lwc}_{1-\alpha} (h^*(j,m_j)) \leq E^{(m_j-1)}_j$, then we claim that
        \[
        \omega_j(\hat{Q}^\mathrm{lwc}_{1-\alpha} (h^*(j,N)) \leq \omega_j(\hat{Q}^\mathrm{lwc}_{1-\alpha} (h^*(j,m_j)) \leq E^{(m_j-1)}_j \leq E^{(N-1)}_j,
        \]
        which then implies $B_j^{h^*(j,N)} =0$ for all $N\geq m_j$.
        This is actually trivial because
        \[
        r_l(h^*(j, m_j)) = \min_{z_j \in R_j^{h^*(j, m_j)}} |z_j - \hat{\mu}_j| \leq \min_{z_j \in R_j^{h^*(j,N)}} |z_j - \hat{\mu}_j| = r_l(h^*(j, N)),
        \]
        according to how $R_j^h$ is defined earlier. This then leads to
        \begin{align*}
            &\Phi^{\mathrm{lwc}}(t, h^*(j, m_j)) \geq \Phi^{\mathrm{lwc}}(t, h^*(j,N)), \quad \forall t\\
            &\implies \hat{Q}^\mathrm{lwc}_{1-\alpha} (h^*(j,N)) \leq \hat{Q}^\mathrm{lwc}_{1-\alpha}(h^*(j, m_j))\\
            &\implies \omega_j(\hat{Q}^\mathrm{lwc}_{1-\alpha} (h^*(j,N))) \leq \omega_j(\hat{Q}^\mathrm{lwc}_{1-\alpha} (h^*(j,m_j))).
        \end{align*}
    \end{itemize}
\end{proof}

\clearpage

\section{Additional Numerical Results}
\label{app:additional-experiments}

This section presents the results of additional numerical experiments, including comparisons with a broader range of alternative approaches.
These experiments are conducted using the same protocol and performance metrics described in Section~\ref{sec:experiments} unless specified otherwise. 

\subsection{Additional Results from the Experiments of Section~\ref{sec:experiments}}
\label{app:simulations}

Here we present additional results from the numerical experiments described in Section~\ref{sec:experiments}, corresponding to six different noise distributions. These results are qualitatively consistent with those summarized in Section~\ref{sec:experiments}: our method performs consistently well, producing relatively tight prediction sets with valid coverage.

\subsubsection{Gaussian Noise (heterogeneous)}

\begin{table}[!htbp]
    \centering
    \begin{tabular}{l l c c c}
\toprule
(n, d, Noise) & Method & Coverage & Volume  \\
\midrule
\multirow{4}{*}{(30, 10, Gaussian)} & Emp. Copula & \textcolor{red}{0.714 (0.079)} & 1.68e+10(1.16e+10) \\
 & Point CHR & 0.933 (0.066) & 4.04e+12(1.38e+13) \\
 & TSCP (Ours) & 0.910 (0.053) & \textbf{1.83e+11(2.74e+11)} \\
 & Unscaled Max & 0.908 (0.047) & 2.22e+13(4.77e+13) \\
\midrule
\multirow{4}{*}{(50, 10, Gaussian)} & Emp. Copula & \textcolor{red}{0.816 (0.057)} & 3.64e+10(2.22e+10) \\
 & Point CHR & 0.919 (0.057) & 4.85e+11(1.72e+12) \\
 & TSCP (Ours) & 0.904 (0.047) & \textbf{9.42e+10(8.82e+10)} \\
 & Unscaled Max & 0.897 (0.043) & 1.16e+13(1.10e+13) \\
\midrule
\multirow{4}{*}{(100, 10, Gaussian)} & Emp. Copula & \textcolor{red}{0.819 (0.039)} & 2.58e+10(9.90e+09) \\
 & Point CHR & 0.901 (0.041) & 9.74e+10(9.22e+10) \\
 & TSCP (Ours) & 0.903 (0.034) & \textbf{6.59e+10(3.43e+10)} \\
 & Unscaled Max & 0.908 (0.027) & 1.09e+13(6.93e+12) \\
\midrule
\multirow{4}{*}{(300, 10, Gaussian)} & Emp. Copula & \textcolor{red}{0.872 (0.019)} & 3.73e+10(8.11e+09) \\
 & Point CHR & 0.899 (0.025) & 5.60e+10(2.05e+10) \\
 & TSCP (Ours) & 0.899 (0.018) & \textbf{4.93e+10(1.17e+10)} \\
 & Unscaled Max & 0.901 (0.020) & 8.68e+12(3.29e+12) \\
\midrule
\multirow{4}{*}{(500, 10, Gaussian)} & Emp. Copula & \textcolor{red}{0.886 (0.018)} & 4.16e+10(8.27e+09) \\
 & Point CHR & 0.900 (0.022) & 5.13e+10(1.49e+10) \\
 & TSCP (Ours) & 0.901 (0.016) & \textbf{4.81e+10(9.67e+09)} \\
 & Unscaled Max & 0.903 (0.016) & 8.54e+12(2.46e+12) \\
\midrule
\bottomrule
\end{tabular}
    \caption{Performance of the proposed TSCP method and alternative conformal prediction approaches on simulated data with $d=10$ outcome variables, as in Figure~\ref{fig:gaussian-heter}. All results are averaged over $200$ trials with their one standard deviation reported in parenthesis.  
    The target coverage level is $0.9$. Coverage below this target is highlighted in \textcolor{red}{red}. For volume, lower is better. Smallest volume size (among those with valid theoretical coverage) is highlighted in \textbf{bold}. TSCP yields the tightest volume while maintaining valid coverage.}
    \label{tab:gaussian-heter}
\end{table}

\clearpage
\subsubsection{Gaussian Noise (homogeneous)}

\begin{table}[!htbp]
    \centering
    \begin{tabular}{l l c c c}
\toprule
(n, d, Noise) & Method & Coverage & Volume  \\
\midrule
\multirow{4}{*}{(30, 10, Unit Gaussian)} & Emp. Copula & \textcolor{red}{0.714 (0.079)} & 4.64e+03(3.20e+03) \\
 & Point CHR & 0.933 (0.066) & 1.11e+06(3.80e+06) \\
 & TSCP (Ours) & 0.910 (0.053) & 5.04e+04(7.56e+04) \\
 & Unscaled Max & 0.904 (0.054) & \textbf{2.19e+04(1.94e+04)} \\
\midrule
\multirow{4}{*}{(50, 10, Unit Gaussian)} & Emp. Copula & \textcolor{red}{0.816 (0.057)} & 1.00e+04(6.12e+03) \\
 & Point CHR & 0.919 (0.057) & 1.34e+05(4.75e+05) \\
 & TSCP (Ours) & 0.904 (0.047) & 2.60e+04(2.43e+04) \\
 & Unscaled Max & 0.901 (0.045) & \textbf{1.68e+04(1.15e+04)} \\
\midrule
\multirow{4}{*}{(100, 10, Unit Gaussian)} & Emp. Copula & \textcolor{red}{0.819 (0.039)} & 7.11e+03(2.73e+03) \\
 & Point CHR & 0.901 (0.041) & 2.68e+04(2.54e+04) \\
 & TSCP (Ours) & 0.903 (0.034) & 1.82e+04(9.46e+03) \\
 & Unscaled Max & 0.900 (0.035) & \textbf{1.50e+04(7.38e+03)} \\
\midrule
\multirow{4}{*}{(300, 10, Unit Gaussian)} & Emp. Copula & \textcolor{red}{0.872 (0.019)} & 1.03e+04(2.23e+03) \\
 & Point CHR & 0.899 (0.025) & 1.54e+04(5.66e+03) \\
 & TSCP (Ours) & 0.899 (0.018) & 1.36e+04(3.23e+03) \\
 & Unscaled Max & 0.898 (0.018) & \textbf{1.28e+04(3.07e+03)} \\
\midrule
\multirow{4}{*}{(500, 10, Unit Gaussian)} & Emp. Copula & \textcolor{red}{0.886 (0.018)} & 1.15e+04(2.28e+03) \\
 & Point CHR & 0.900 (0.022) & 1.41e+04(4.11e+03) \\
 & TSCP (Ours) & 0.901 (0.016) & 1.33e+04(2.67e+03) \\
 & Unscaled Max & 0.899 (0.017) & \textbf{1.26e+04(2.50e+03)} \\
\midrule
\bottomrule
\end{tabular}
    \caption{Performance of the proposed TSCP method and alternative conformal prediction approaches on simulated data with $d=10$ outcome variables, as in Figure~\ref{fig:gaussian-homo}. Other details are as in Table~\ref{tab:gaussian-heter}. In this case, Unscaled Max yields the tightest volume while maintaining valid coverage, while TSCP remains a competitive second. }
    \label{tab:gaussian-homo}
\end{table}

\clearpage
\subsubsection{Laplace Noise}
\begin{table}[!htbp]
    \centering
    \begin{tabular}{l l c c c}
\toprule
(n, d, Noise) & Method & Coverage & Volume  \\
\midrule
\multirow{4}{*}{(30, 10, Laplace)} & Emp. Copula & \textcolor{red}{0.712 (0.079)} & 3.82e+12(4.91e+12) \\
 & Point CHR & 0.938 (0.060) & 5.57e+16(2.71e+17) \\
 & TSCP (Ours) & 0.908 (0.060) & \textbf{3.14e+14(9.93e+14)} \\
 & Unscaled Max & 0.897 (0.053) & 6.36e+15(1.99e+16) \\
\midrule
\multirow{4}{*}{(50, 10, Laplace)} & Emp. Copula & \textcolor{red}{0.819 (0.056)} & 1.36e+13(1.75e+13) \\
 & Point CHR & 0.917 (0.050) & 6.64e+14(1.81e+15) \\
 & TSCP (Ours) & 0.906 (0.045) & \textbf{7.32e+13(1.60e+14)} \\
 & Unscaled Max & 0.901 (0.041) & 3.34e+15(5.12e+15) \\
\midrule
\multirow{4}{*}{(100, 10, Laplace)} & Emp. Copula & \textcolor{red}{0.811 (0.039)} & 5.96e+12(3.94e+12) \\
 & Point CHR & 0.896 (0.039) & 4.74e+13(6.07e+13) \\
 & TSCP (Ours) & 0.898 (0.030) & \textbf{2.64e+13(1.94e+13)} \\
 & Unscaled Max & 0.895 (0.029) & 1.77e+15(1.78e+15) \\
\midrule
\multirow{4}{*}{(300, 10, Laplace)} & Emp. Copula & \textcolor{red}{0.871 (0.020)} & 1.11e+13(4.03e+12) \\
 & Point CHR & 0.895 (0.023) & 2.01e+13(1.12e+13) \\
 & TSCP (Ours) & 0.896 (0.017) & \textbf{1.74e+13(6.40e+12)} \\
 & Unscaled Max & 0.896 (0.017) & 1.38e+15(7.08e+14) \\
\midrule
\multirow{4}{*}{(500, 10, Laplace)} & Emp. Copula & \textcolor{red}{0.882 (0.016)} & 1.29e+13(4.10e+12) \\
 & Point CHR & 0.896 (0.021) & 1.83e+13(8.95e+12) \\
 & TSCP (Ours) & 0.896 (0.016) & \textbf{1.64e+13(5.68e+12)} \\
 & Unscaled Max & 0.897 (0.017) & 1.36e+15(5.51e+14) \\
\midrule
\bottomrule
\end{tabular}
    \caption{Performance of the proposed TSCP method and alternative conformal prediction approaches on simulated data with $d=10$ outcome variables, and noise following a \emph{Laplace} distribution; i.e., $\epsilon _j \sim \mathrm{Laplace}(10-j+1)$ for all $j \in [10]$. Other details are as in Table~\ref{tab:gaussian-heter}.}
    \label{tab:laplace}
\end{table}

\clearpage
\subsubsection{Mixture Noise}

\begin{table}[!htbp]
    \centering
    \begin{tabular}{l l c c c}
\toprule
(n, d, Noise) & Method & Coverage & Volume  \\
\midrule
\multirow{4}{*}{(30, 10, Mixed)} & Emp. Copula & \textcolor{red}{0.715 (0.075)} & 2.69e+11(3.04e+11) \\
 & Point CHR & 0.935 (0.062) & 2.56e+15(1.98e+16) \\
 & TSCP (Ours) & 0.913 (0.051) & \textbf{9.53e+12(3.12e+13)} \\
 & Unscaled Max & 0.903 (0.052) & 3.46e+13(6.74e+13) \\
\midrule
\multirow{4}{*}{(50, 10, Mixed)} & Emp. Copula & \textcolor{red}{0.821 (0.058)} & 7.07e+11(5.96e+11) \\
 & Point CHR & 0.925 (0.048) & 6.10e+13(3.06e+14) \\
 & TSCP (Ours) & 0.906 (0.044) & \textbf{2.57e+12(3.91e+12)} \\
 & Unscaled Max & 0.901 (0.044) & 1.98e+13(1.68e+13) \\
\midrule
\multirow{4}{*}{(100, 10, Mixed)} & Emp. Copula & \textcolor{red}{0.819 (0.035)} & 3.82e+11(1.98e+11) \\
 & Point CHR & 0.898 (0.042) & 2.76e+12(4.47e+12) \\
 & TSCP (Ours) & 0.900 (0.029) & \textbf{1.23e+12(7.54e+11)} \\
 & Unscaled Max & 0.905 (0.028) & 1.65e+13(9.89e+12) \\
\midrule
\multirow{4}{*}{(300, 10, Mixed)} & Emp. Copula & \textcolor{red}{0.874 (0.022)} & 6.47e+11(2.25e+11) \\
 & Point CHR & 0.902 (0.025) & 1.27e+12(7.23e+11) \\
 & TSCP (Ours) & 0.900 (0.019) & \textbf{9.05e+11(3.48e+11)} \\
 & Unscaled Max & 0.900 (0.020) & 1.36e+13(5.22e+12) \\
\midrule
\multirow{4}{*}{(500, 10, Mixed)} & Emp. Copula & \textcolor{red}{0.886 (0.016)} & 7.26e+11(1.82e+11) \\
 & Point CHR & 0.897 (0.020) & 1.01e+12(5.42e+11) \\
 & TSCP (Ours) & 0.898 (0.014) & \textbf{8.17e+11(2.05e+11)} \\
 & Unscaled Max & 0.902 (0.016) & 1.35e+13(3.73e+12) \\
\midrule
\bottomrule
\end{tabular}
    \caption{Performance of the proposed TSCP method and alternative conformal prediction approaches on simulated data with $d=10$ outcome variables, and noise following a \emph{Mixture} distribution; i.e., $\epsilon_j \sim \frac{1}{2}\mathrm{Laplace}(10-j+1)+\frac{1}{2}\mathcal{N}(0, (10-j+1)^2$ for all $j \in [10]$. Other details are as in Table~\ref{tab:gaussian-heter}.}
    \label{tab:mixed}
\end{table}

\clearpage
\subsubsection{Gamma Noise}
 
\begin{table}[!htbp]
    \centering
    \begin{tabular}{l l c c c}
\toprule
(n, d, Noise) & Method & Coverage & Volume  \\
\midrule
\multirow{4}{*}{(30, 10, Gamma)} & Emp. Copula & \textcolor{red}{0.714 (0.067)} & 5.33e-02(6.85e-02) \\
 & Point CHR & 0.929 (0.058) & 7.55e+02(4.32e+03) \\
 & TSCP (Ours) & 0.904 (0.056) & \textbf{3.77e+00(1.59e+01)} \\
 & Unscaled Max & 0.898 (0.051) & 2.69e+15(6.18e+15) \\
\midrule
\multirow{4}{*}{(50, 10, Gamma)} & Emp. Copula & \textcolor{red}{0.826 (0.048)} & 2.20e-01(2.78e-01) \\
 & Point CHR & 0.927 (0.049) & 5.01e+01(3.36e+02) \\
 & TSCP (Ours) & 0.910 (0.041) & \textbf{1.15e+00(2.56e+00)} \\
 & Unscaled Max & 0.902 (0.039) & 1.94e+15(3.54e+15) \\
\midrule
\multirow{4}{*}{(100, 10, Gamma)} & Emp. Copula & \textcolor{red}{0.818 (0.038)} & 1.04e-01(1.01e-01) \\
 & Point CHR & 0.906 (0.038) & 1.28e+00(2.38e+00) \\
 & TSCP (Ours) & 0.900 (0.030) & \textbf{4.65e-01(5.82e-01)} \\
 & Unscaled Max & 0.897 (0.030) & 9.67e+14(7.86e+14) \\
\midrule
\multirow{4}{*}{(300, 10, Gamma)} & Emp. Copula & \textcolor{red}{0.873 (0.019)} & 1.75e-01(1.15e-01) \\
 & Point CHR & 0.895 (0.024) & 3.74e-01(3.70e-01) \\
 & TSCP (Ours) & 0.898 (0.018) & \textbf{2.67e-01(1.69e-01)} \\
 & Unscaled Max & 0.895 (0.018) & 7.04e+14(3.45e+14) \\
\midrule
\multirow{4}{*}{(500, 10, Gamma)} & Emp. Copula & \textcolor{red}{0.885 (0.015)} & 1.89e-01(1.06e-01) \\
 & Point CHR & 0.896 (0.020) & 3.20e-01(2.33e-01) \\
 & TSCP (Ours) & 0.897 (0.016) & \textbf{2.36e-01(1.35e-01)} \\
 & Unscaled Max & 0.893 (0.015) & 6.37e+14(2.53e+14) \\
\midrule
\bottomrule
\end{tabular}
    \caption{Performance of the proposed TSCP method and alternative conformal prediction approaches on simulated data with $d=10$ outcome variables, and noise following a \emph{Gamma} distribution; i.e., $\epsilon_j \sim  \mathrm{Gamma}(1, 10-j+1)$ for all $j \in [10]$. Other details are as in Table~\ref{tab:gaussian-heter}.}
  \end{table}

\clearpage
  \subsection{Comparison of Alternative Implementations of TSCP}
\label{app:ours_simulation}

The additional methods considered in this section include (i) \emph{Naive}, the heuristic plug-in rescaling method that uses location parameters estimated directly from calibration data without further adjustments, summarized in Algorithm~\ref{alg:std-plug-in}; (ii) \emph{TSCP-S}, the inefficient data-splitting variant summarized in Algorithm~\ref{alg:std-ds}; (iii) \emph{TSCP-GWC}, our conservative global-worst-case method summarized in Algorithm~\ref{alg:global}; (iv) \emph{TSCP-LWC}, the aggregated union described in Section~\ref{sec:method-lwc} but not enclosed by a rectangle.

\begin{figure}[!htbp]
    \begin{subfigure}[b]{\textwidth}
        \centering
        \includegraphics[width=0.9\textwidth]{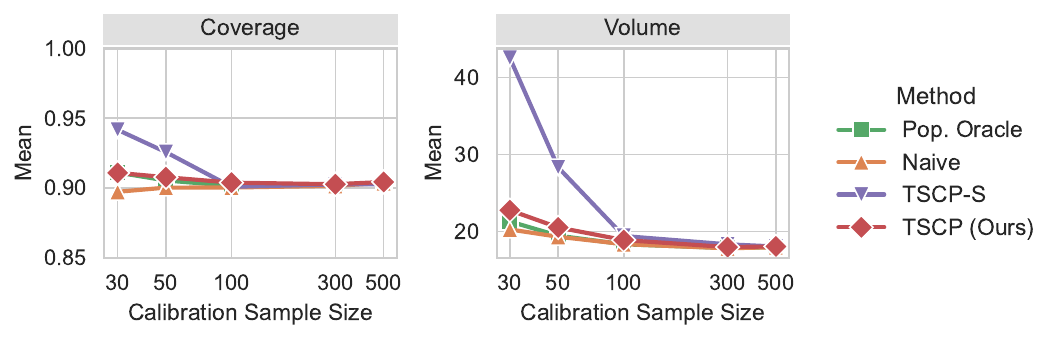}
    \end{subfigure}
    \begin{subfigure}[b]{\textwidth}
        \centering
        \includegraphics[width=0.9\textwidth]{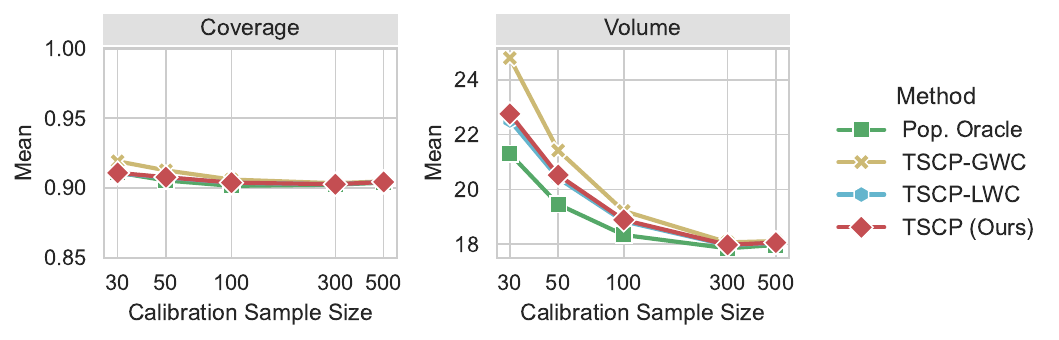}
    \end{subfigure}
    \caption{Performance of TSCP and alternative approaches on simulated data with $d=2$ targets and heterogeneous Laplace noise. The target average is $90\%$. Results are averaged over $200$ trials. All methods achieves near-nominal coverage and approximate \emph{Pop.~Oracle} as the calibration sample size increases. Our TSCP is sticking above TSCP-LWC with almost indifferent average volume, highlighting its tight containment.}
    \label{fig:approximation1}
\end{figure}

All methods compared in Figure~\ref{fig:approximation1} tend to approximate \emph{Pop.~Oracle} well as $n$ increases, with our TSCP and TSCP-LWC being two most stable and fastest approximation among all methods with theoretical coverage. These results also show that TSCP encloses the aggregated union TSCP-LWC very tightly, and that TSCP-GWC is conservative, though it still performs better than the data-splitting variant TSCP-S. Despite lacking finite-sample coverage guarantee, the \emph{Naive} approach actually performs well, as also shown in Figure~\ref{fig:approximation2}.

\begin{figure}[!htbp]
    \centering
    \includegraphics[width=0.9\textwidth]{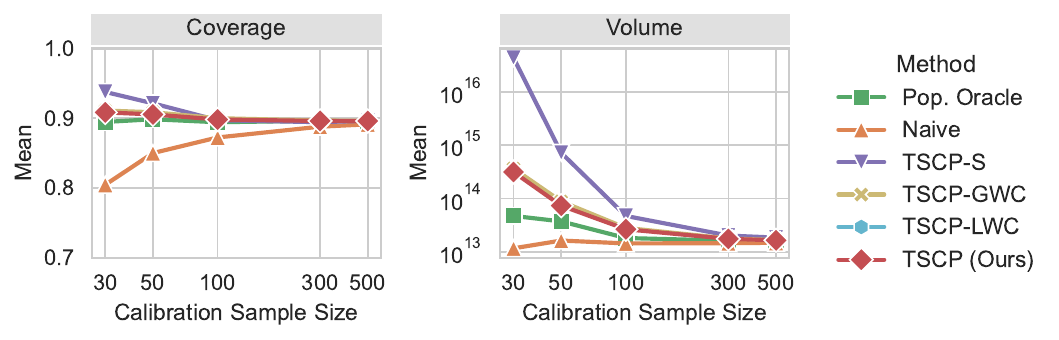}
    \caption{Performance of TSCP and alternative approaches on simulated data with $d=10$ targets. Other details are as in Figure~\ref{fig:approximation1}. TSCP-LWC is omitted from this figure due to its prohibitive cost with even moderately high-dimensional outcomes.}
    \label{fig:approximation2}
\end{figure}

Figure~\ref{fig:runtime} better highlights the prohibitive computational cost of TSCP-LWC, by reporting on the results of similar experiments where a small sample size is fixed, $|\dataset_{\mathrm{cal}}|=30$, while the outcome dimensions are gradually increased: $d \in \{2, 3, 4, 5, 10, 20, 30\}$.

\begin{figure}[!htbp]
    \centering
    \includegraphics[width=\textwidth]{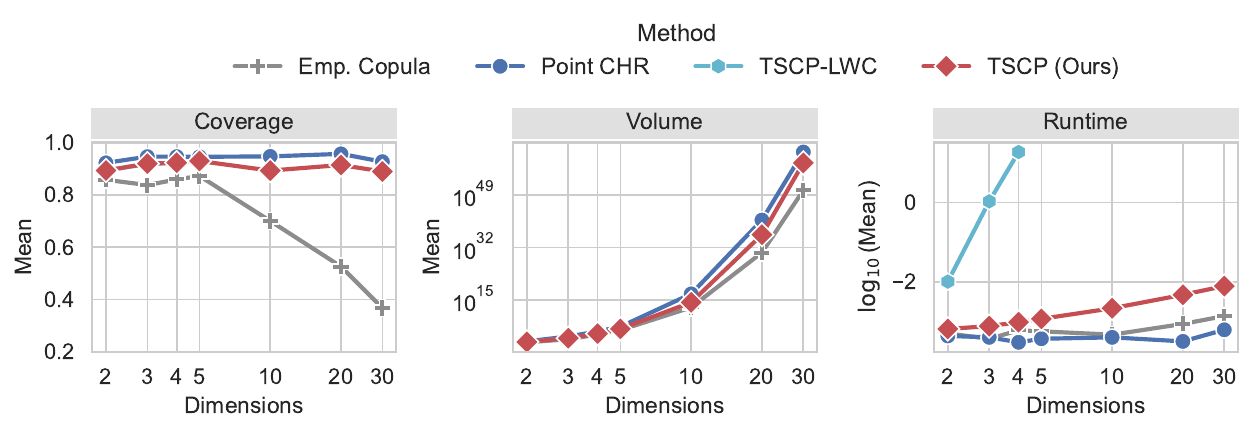}
    \caption{Performance of TSCP and alternative approaches on simulated data with fixed sample size $|\dataset_{\mathrm{cal}}|=30$, as a function of the number of output dimensions $d$. Other details are as in Figure~\ref{fig:approximation1}. These results are averaged over $30$ independent repetitions. The runtime results are presented in seconds on a $\log_{10}$ scale. 
    The runtime of all methods except TSCP-LWC scales well with number of dimensions. TSCP-LWC is only applied when $d \in \{2,3,4\}$ due to the exponential growth of its runtime. }
    \label{fig:runtime}
\end{figure}

\clearpage
\subsection{Robustness to Heavy-Tailed Data} \label{app:hevy_tailed_simulation}

We test the performance of TSCP on simulated heavy-tailed data with $d=10$ outcomes and noise following a Student's-$t$ distribution; i.e., $Y_j \mid X \sim t_{\mathrm{DF}}$, with $\mathrm{DF} \in \{1.5,2,3\}$ degrees of freedom. 
For $1 < \mathrm{DF} \leq 2$, the Student's-$t$ distribution leads to residuals with infinite population variance, violating Assumption~\ref{eq:assumption-scores}. Nonetheless, TSCP continues to achieve the desired coverage level even in this extreme setting, although with some loss of informativeness.

\begin{figure}[!htbp]
     \centering
     \begin{subfigure}[b]{\textwidth}
         \centering
         \includegraphics[width=0.9\textwidth]{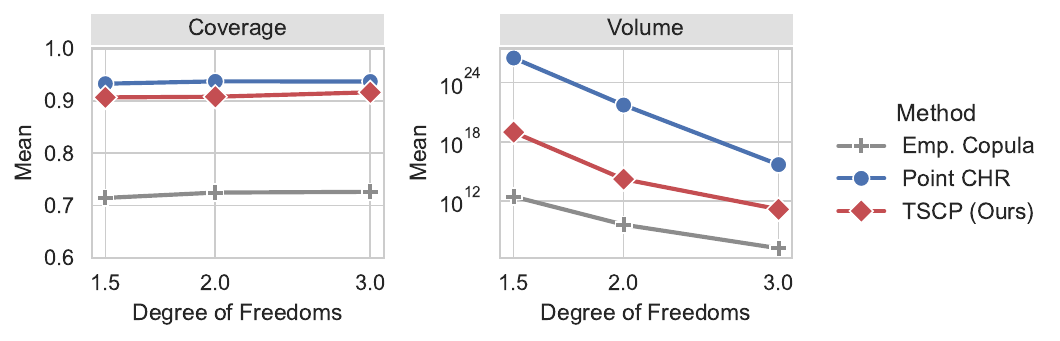}
         \caption{Small sample, $|\dataset_{\mathrm{cal}}|=30$. Output dimensions $d=10$.}
         \label{fig:t-n30-10d}
     \end{subfigure}
     \vfill
     \begin{subfigure}[b]{\textwidth}
         \centering
         \includegraphics[width=0.9\textwidth]{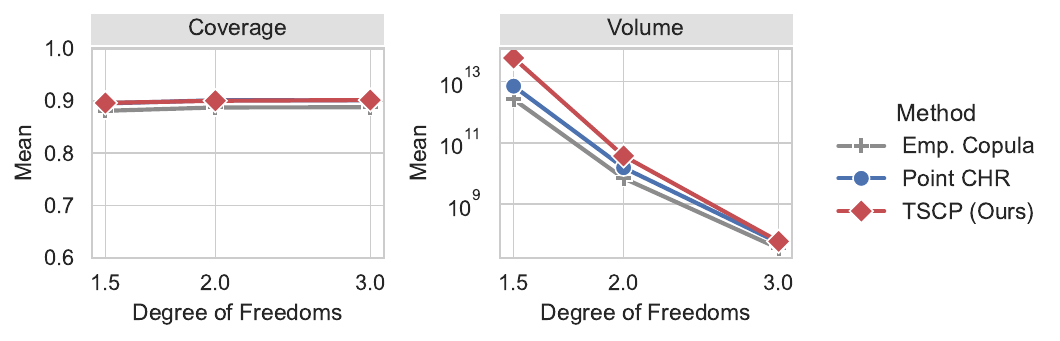}
         \caption{Large sample, $|\dataset_{\mathrm{cal}}|=500$. Output dimensions $d=10$.}
         \label{fig:t-n500-10d}
     \end{subfigure}
        \caption{Performance of the proposed TSCP method and other approaches on simulated data with noise following a Student's-$t$ distribution with varying degree of freedoms, using small (a) and large (b) sample sizes. The target coverage level is 0.9. All methods except \emph{Emp.~Copula} have near-nominal coverage in all settings. TSCP maintains valid marginal coverage even on very heavy-tailed data. In the small sample setting, Point CHR suffers more in terms of prediction set size due to its inefficient use of calibration data, although it can outperform TSCP in large samples.}
        \label{fig:t}
\end{figure}
 



\end{document}